\newcommand{\bland}{\begin{landscape}}
\newcommand{\eland}{\end{landscape}}
\theoremstyle{plain}
\newtheorem{lemma}{Lemma}
\definecolor{CardinalRed}{cmyk}{0,1,0.65,0.34} 
\definecolor{NavyBlue}{RGB}{0,60,180}
\def\maxwidth{\ifdim\Gin@nat@width>\linewidth\linewidth\else\Gin@nat@width\fi}
\def\maxheight{\ifdim\Gin@nat@height>\textheight\textheight\else\Gin@nat@height\fi}
\newcommand{\burl}[1]{\textcolor{blue}{\url{#1}}}
\providecommand\@dotsep{5}
\def\listtodoname{List of Todos}
\def\listoftodos{\@starttoc{tdo}\listtodoname}
\normalfont\fontsize{15}{15}\bfseries}{\thesection.}{0.5em}{}
\newenvironment{itemize*}%
  {\begin{itemize}%
    \setlength{\itemsep}{0pt}%
    \setlength{\parskip}{0pt}}%
  {\end{itemize}}
\newenvironment{enumerate*}%
  {\begin{enumerate}%
    \setlength{\itemsep}{0pt}%
    \setlength{\parskip}{0pt}}%
  {\end{enumerate}}
\newcommand{\beq}{\begin{equation}}
\newcommand{\eeq}{\end{equation}}
\newcommand*\Bigpar[1]{\left( #1 \right )}
\newcommand{\ba}{\begin{array}}
\newcommand{\ea}{\end{array}}
\newcommand{\be}{\begin{enumerate}}
\newcommand{\ee}{\end{enumerate}}
\newcommand{\bi}{\begin{itemize}}
\newcommand{\ei}{\end{itemize}}
\newcommand{\bs}{\begin{align}\begin{split}\nonumber}
\newcommand{\bsnumber}{\begin{align}\begin{split}}
\newcommand{\es}{\end{split}\end{align}}
\newcommandx{\deriv}[2][1=x,2=f]{\nabla \, #2 \Bigpar{ #1 } }
\renewcommand{\det}{\mathrm{det}}
\renewcommand{\to}{{\rightarrow}}
\newcommand\frakfamily{\usefont{U}{yfrak}{m}{n}}
\DeclareTextFontCommand{\textfrak}{\frakfamily}
\renewcommand{\det}{\mathrm{det}}
\renewcommand{\to}{{\rightarrow}}
\newcommand{\hyp}[2]{
\ensuremath{H_0:#1 \ifhmode\quad\text{versus}\quad\fi\text{ vs. } H_1:#2}}
\newcommandx{\uniff}[1][1={a,b}]{\textrm{Unif}\left({#1}\right)}
\newcommandx{\unifd}[1][1={a,\ldots,b}]{\textrm{Unif}\left\{{#1}\right\}}
\newcommandx{\dunif}[3][1=x,2=a,3=b]{\frac{I(#2<#1<#3)}{#3-#2}}
\newcommandx{\dunifd}[3][1=x,2=a,3=b]{\frac{I(#2\le#1\le#3)}{#3-#2+1}}
\newcommandx{\punif}[3][1=x,2=a,3=b]{
\begin{cases} 0 & #1 < #2 \\ \frac{#1-#2}{#3-#2} & #2 < #1 < #3 \\ 1 & #1 > #3\\\end{cases}}
\newcommandx{\punifd}[3][1=x,2=a,3=b]{
\begin{cases} 0 & #1 < #2\\ \frac{\lfloor#1\rfloor-#2+1}{#3-#2} & #2 \le #1 \le #3 \\ 1 & #1 > #3\\ \end{cases}}
\newcommandx\bern[1][1=p]{\textrm{Bern}\left({#1}\right)}
\newcommandx\dbern[2][1=x,2=p]{#2^{#1} \left(1-#2\right)^{1-#1}}
\newcommandx\pbern[2][1=x,2=p]{\left(1-#2\right)^{1-#1}}
\newcommandx\bin[1][1={n,p}]{\textrm{Bin}\left(#1\right)}
\newcommandx\dbin[3][1=x,2=n,3=p]{\binom{#2}{#1}#3^#1\left(1-#3\right)^{#2-#1}}
\newcommandx\mult[1][1={n,p}]{\textrm{Mult}\left(#1\right)}
\newcommandx\dmult[3][1=x,2=n,3=p]{\frac{#2!}{#1_1!\ldots#1_k!}#3_1^{#1_1}\cdots#3_k^{#1_k}}
\newcommandx\hyper[1][1={N,m,n}]{\textrm{Hyp}\left({#1}\right)}
\newcommandx\dhyper[4][1=x,2=N,3=m,4=n]{\frac{\binom{#3}{#1}\binom{#2-#3}{#4-#1}}{\binom{#2}{#4}}}
\newcommandx\nbin[1][1={r,p}]{\textrm{NBin}\left({#1}\right)}
\newcommandx\dnbin[3][1=x,2=r,3=p]{\binom{#1+#2-1}{#2-1}#3^#2(1-#3)^#1}
\newcommandx\pnbin[3][1=x,2=r,3=p]{I_#3(#2,#1+1)}
\newcommandx\geo[1][1=p]{\textrm{Geo}\left(#1\right)}
\newcommandx\dgeo[2][1=x,2=p]{#2(1-#2)^{#1-1}}
\newcommandx\pgeo[2][1=x,2=p]{1-(1-#2)^#1}
\newcommandx\pois[1][1=\lambda]{\textrm{Po}\left({#1}\right)}
\newcommandx\dpois[2][1=x,2=\lambda]{\frac{#2^#1 e^{-#2}}{#1!}}
\newcommandx\ppois[2][1=x,2=\lambda]{e^{-#2}\sum_{i=0}^#1\frac{#2^i}{i!}}
\newcommandx\normall[1][1={\mu,\sigma^2}]{\mathcal{N}\left({#1}\right)}
\newcommandx\dnormall[3][1=x,2=\mu,3=\sigma]%
\exp \Bigpar{-\frac{\left(#1-#2\right)^2}{2 #3^2}}}
\newcommandx\pnormall[1][1=x]{\Phi\left({#1}\right)}
\newcommandx\qnormall[1]{\Phi^{-1}\left({#1}\right)}
\newcommandx\mvn[1][1={\mu,\Sigma}]{\mathrm{MVN}\left({#1}\right)}
\newcommandx\ex[1][1=\lambda]{\textrm{Exp}\left(#1\right)}
\newcommandx\dex[2][1=x,2=\lambda]{#2e^{-#1 #2}}
\newcommandx\pex[2][1=x,2=\lambda]{1-e^{-#1 #2}}
\newcommandx\gam[1][1={\alpha,\lambda}]{\textrm{Gamma}\left({#1}\right)}
\newcommandx\dgamma[3][1=x,2=\alpha,3=\lambda]%
\newcommandx\invgamma[1][1={\alpha,\beta}]{\textrm{InvGamma}\left({#1}\right)}
\newcommandx\dinvgamma[3][1=x,2=\alpha,3=\beta]%
\newcommandx\pinvgamma[3][1=x,2=\alpha,3=\beta]%
\newcommandx\bet[1][1={\alpha,\beta}]{\textrm{Beta}\left(#1\right)}
\newcommandx\dbeta[3][1=x,2=\alpha,3=\beta]
\newcommandx\dir[1][1={\alpha}]{\textrm{Dir}\left(#1\right)}
\newcommandx\ddir[3][1=x,2=\alpha]{\frac{\Gamma\left(\sum_{i=1}^k #2_i\right)}{\prod_{i=1}^k\Gamma\left(#2_i\right)}\prod_{i=1}^k #1_i^{#2_i-1}}
\newcommandx\weibull[1][1={\alpha}]{\textrm{Dir}\left(#1\right)}
\newcommandx\dweibull[3][1=x,2=\lambda,3=k]{\frac{#3}{#2}
\left(\frac{#1}{#2}\right)^{#3-1} e^{-(#1/#2)^k}}
\newcommandx\chisq[1][1=k]{\chi_{#1}^2}
\newcommandx\zet[1][1=s]{\textrm{Zeta}\left(#1\right)}
\newcommandx\dzeta[2][1=x,2=s]{\frac{#1^{-#2}}{\zeta\left(#2\right)}}
\newtheoremstyle{mystyle}
  {12pt}
  {12pt}
  {}
  {}
  {\sffamily \bfseries }
  {.}
  {0.5em}
  {\thmname{#1}\thmnumber{ #2}\thmnote{ (#3)}}
\theoremstyle{mystyle}
\newtheorem{thm}{Theorem}[section]
\newtheorem{assumption}{Assumption}
\newtheorem{proposition}[thm]{Proposition}
\renewenvironment{proof}{\noindent{\bf Proof}\hspace*{1em}}{\qed\bigskip\\}
\newenvironment{proof-sketch}{\noindent{\bf Sketch of Proof}
  \hspace*{1em}}{\qed\bigskip\\}
\newenvironment{proof-idea}{\noindent{\bf Proof Idea}
  \hspace*{1em}}{\qed\bigskip\\}
\newenvironment{proof-of-lemma}[1][{}]{\noindent{\bf Proof of Lemma {#1}}
  \hspace*{1em}}{\qed\bigskip\\}
\newenvironment{proof-of-proposition}[1][{}]{\noindent{\bf
    Proof of Proposition {#1}}
  \hspace*{1em}}{\qed\bigskip\\}
\newenvironment{proof-of-theorem}[1][{}]{\noindent{\bf Proof of Theorem {#1}}
  \hspace*{1em}}{\qed\bigskip\\}
\newenvironment{inner-proof}{\noindent{\bf Proof}\hspace{1em}}{
  $\bigtriangledown$\medskip\\}
\newenvironment{proof-attempt}{\noindent{\bf Proof Attempt}
  \hspace*{1em}}{\qed\bigskip\\}
\newcolumntype{L}[1]{>{\raggedright\let\newline\\\arraybackslash\hspace{0pt}}m{#1}}
\newcolumntype{C}[1]{>{\centering\let\newline\\\arraybackslash\hspace{0pt}}m{#1}}
\newcolumntype{R}[1]{>{\raggedleft\let\newline\\\arraybackslash\hspace{0pt}}m{#1}}
\newcommand{\RR}{\textcolor{NavyBlue}{RR (\citeyear{rambachan2023more})} }
\newcommand{\RRR}{\textcolor{NavyBlue}{RR (\citeyear{rambachan2023more})}}
\begin{document}
\onehalfspacing


\title{Cohort-Anchored Robust Inference for Event-Study with Staggered Adoption\Large\bf %
\thanks{Ziyi Liu, Ph.D.\ student, Haas School of Business, University of California, Berkeley. Email: \url{zyliu2023@berkeley.edu}. I thank Kirill Borusyak and Yiqing Xu for invaluable discussions and constructive feedback, and the participants at the 42nd PolMeth Poster Session and UC Berkeley’s BPP Student Seminar for their helpful comments.}
\\\bigskip}

\author{Ziyi Liu\\(UC Berkeley)}

\date{
  \today
  \vspace{2em}
}

\maketitle

\vspace{-2em}
\begin{abstract}
\noindent This paper proposes a cohort-anchored framework for robust inference in event studies with staggered adoption, building on \citet{rambachan2023more}. Robust inference based on event-study coefficients aggregated across cohorts can be misleading due to the dynamic composition of treated cohorts, especially when pre-trends differ across cohorts. My approach avoids this problem by operating at the cohort-period level. To address the additional challenge posed by time-varying control groups in modern DiD estimators, I introduce the concept of block bias: the parallel-trends violation for a cohort relative to its fixed initial control group. I show that the biases of these estimators can be decomposed invertibly into block biases. Because block biases maintain a consistent comparison across pre- and post-treatment periods, researchers can impose transparent restrictions on them to conduct robust inference. In simulations and a reanalysis of minimum-wage effects on teen employment, my framework yields better-centered (and sometimes narrower) confidence sets than the aggregated approach when pre-trends vary across cohorts. The framework is most useful in settings with multiple cohorts, sufficient within-cohort precision, and substantial cross-cohort heterogeneity.

\bigskip\noindent\textbf{Keywords:} panel data, two-way fixed effects, parallel-trends, pre-trend, event-study plot, difference-in-differences, robust inference, confidence set

\end{abstract}

\thispagestyle{empty}  
\clearpage
\newpage
\doublespace

\clearpage

\setcounter{page}{1}
\abovedisplayskip=5pt
\belowdisplayskip=5pt


\section{Introduction}

Difference-in-differences (DiD) methods are among the most widely used tools for causal inference with panel data in the social sciences. The credibility of these designs hinges on the parallel-trends (PT) assumption, which posits that treated and control groups would have evolved in parallel in the absence of treatment. Because the assumption is untestable, it has become standard practice for researchers to assess its plausibility by examining pre-treatment trends, either by eyeballing or testing whether pre-treatment event-study coefficients are jointly zero.

However, a growing literature has highlighted limitations of the pre-testing approach: such tests often have low power against meaningful violations of PT, and conditioning on passing a pre-test can introduce statistical distortions \citep{roth2022pretest}. This motivates an inference procedure that does not require PT to hold exactly. The robust inference framework of \citet{rambachan2023more} (hereafter \RRR) provides such a solution: rather than treating PT as a binary condition, it uses the observable pre-treatment series to impose restrictions on the magnitude of potential post-treatment PT violations (intuitively, post-treatment violations of PT cannot differ too much from those in pre-treatment periods). Under these restrictions, the average treatment effect is set-identified, and one can construct a \textit{confidence set} for it. 

While the \RR framework is a crucial tool for robust inference, its reliance on event-study coefficients aggregated across cohorts creates complications under staggered adoption designs because of the dynamic composition of treated and control cohorts. I identify three distinct challenges for applying the robust inference framework in this setting.

The first challenge arises when applying robust inference to coefficients from a two-way fixed effects (TWFE) regression that includes relative-period dummies, which is not robust to heterogeneous treatment effects (HTE). The coefficient on a relative-period dummy can be contaminated by a weighted average of treatment effects across multiple cohorts and periods, with weights that may be negative \citep{sun2021-event}. Because these coefficients are confounded by HTE, the pre-treatment coefficients do not validly measure violations of PT, undermining the rationale for robust inference. This issue can be addressed by a class of ``HTE-robust'' estimators \citep{de2020two,sun2021-event,callaway2021-did,gardner2022two,liu2024practical,borusyak2024revisiting,de2024difference}, which estimate cohort-period level average treatment effects and then aggregate them by relative period across cohorts to obtain event-study coefficients.

Second, while these new estimators address the first challenge, applying the robust inference framework to these event-study coefficients aggregated across cohorts introduces a mechanical problem of \textit{dynamic treated composition}. Under staggered adoption, the set of treated cohorts contributing to the identification of event-study coefficients changes across relative periods. As a result, aggregated pre-treatment and post-treatment coefficients \textit{are not directly comparable}, since they are based on different treated cohort compositions. For example, coefficients in distant pre-treatment periods are identified exclusively from late-treated cohorts, whereas coefficients in distant post-treatment periods are identified exclusively from early-treated cohorts. Differences in event-study coefficients between adjacent periods may therefore reflect shifts in cohort composition rather than changes in PT violations or dynamic treatment effects. The problem is particularly severe when cohorts exhibit heterogeneous pre-trends, as aggregation can mask important cohort-level variation and produce a distorted pre-treatment benchmark that does not fit the cohort compositions in post-treatment periods.

Given the issue of dynamic treated composition when using event-study coefficients aggregated across cohorts, one might ask if robust inference can be conducted at the cohort level. However, a third challenge arises for some popular ``HTE-robust'' estimators: the problem of \textit{dynamic control group}. These estimators use not-yet-treated cohorts as controls. Examples include the imputation estimator \citep{borusyak2024revisiting,liu2024practical}---which, as I show in Section~2, employs not-yet-treated cohorts implicitly---and the \citet{callaway2021-did} estimator with not-yet-treated controls (hereafter the \textit{CS-NYT} estimator). For these estimators, the not-yet-treated comparison groups for earlier-treated cohorts \textit{shrink} over time. Consequently, it is difficult to impose credible restrictions linking pre- and post-treatment violations of PT, as the very definition of the PT assumption changes as the control group evolves. This challenge \textit{does not} arise for estimators that use a fixed, universal control group, such as \citet{sun2021-event} and the \citet{callaway2021-did} estimator that uses never-treated units as controls (hereafter the \textit{CS-NT} estimator).

Figure~\ref{fig:toy_issues} uses a toy example to illustrate the latter two challenges. The left panel shows the treatment adoption status and relative periods for each observation, where $s$ denotes the relative period and $s=1$ is the first post-treatment period. The data includes two treated cohorts, $\mathcal{G}_{5}$ (treated at $t=5$) and $\mathcal{G}_{7}$ (treated at $t=7$), and a never-treated cohort, $\mathcal{G}_{\infty}$. I use the \textit{CS-NYT} estimator to calculate cohort-period coefficients. The aggregated event-study coefficients are obtained by aggregating these cohort-period estimates by relative period, weighted by cohort sizes.

The center panel plots the aggregated event-study coefficients and illustrates the problem of dynamic treated composition. The composition of cohorts that identifies the aggregated coefficients changes over relative time. In distant pre-treatment periods ($s \le -4$), the coefficients are identified exclusively from the late-adopting cohort, $\mathcal{G}_7$. In periods $-3 \le s \le 2$, both cohorts contribute to the coefficients. In distant post-treatment periods ($s \ge 3$), the coefficients are identified exclusively from the early-adopting cohort, $\mathcal{G}_5$. This shifting composition makes it inconsistent to use pre-trends of one cohort composition to benchmark post-treatment PT violations for another composition. For example, the pre-trend in $s \le -4$ (identified from $\mathcal{G}_7$ alone) is not a valid benchmark for the potential violation of PT in post-treatment periods $s \ge 3$ (identified from $\mathcal{G}_5$ alone). Furthermore, this compositional dynamism means that the change in the aggregated coefficient between consecutive periods, such as from $s=-4$ to $s=-3$, is confounded by the entry of cohort $\mathcal{G}_5$ into the composition of treated cohorts, which makes such a change in coefficients an invalid benchmark for the period-to-period evolution of PT violations.

The right panel illustrates the dynamic control group issue for cohort $\mathcal{G}_5$ under the \textit{CS-NYT} estimator. In its post-treatment periods, the control group for this cohort initially consists of $\mathcal{G}_7 \cup \mathcal{G}_\infty$ (for $s=1,2$), but shrinks to only $\mathcal{G}_\infty$ for later periods ($s=3,4$) after $\mathcal{G}_7$ itself becomes treated. Since its pre-treatment coefficients are based on a comparison between $\mathcal{G}_5$ and the larger, initial control group ($\mathcal{G}_7 \cup \mathcal{G}_\infty$), this pre-trend is not a suitable benchmark for potential PT violations in the later periods where the control group has changed.

\begin{figure}[!h]
    \centering
    \includegraphics[width=1\linewidth]{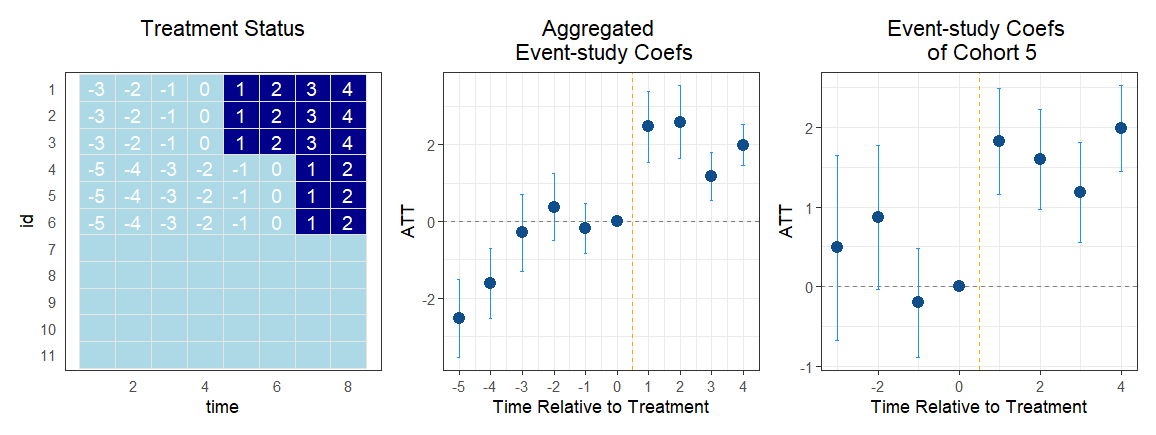}
    \caption{Challenges to Robust Inference: a Toy Example}
    \label{fig:toy_issues}
    \vspace{0.5em} 
    \parbox{\linewidth}{\footnotesize%
        \textit{Notes:} The data are from a simulation with two treated cohorts, $\mathcal{G}_5$ and $\mathcal{G}_7$, and a never-treated cohort, $\mathcal{G}_\infty$. The plots show coefficients from the \textit{CS-NYT} estimator. The left panel illustrates the treatment adoption status and relative periods for each observation. The center panel displays the event-study coefficients aggregated by relative period. The right panel displays the event-study coefficients for cohort $\mathcal{G}_5$.
    }
\end{figure}

To address these two challenges in robust inference under staggered adoption, I propose a cohort-anchored robust inference framework, building on the work of \RRR. The primary difference from \RR is that my framework bases inference on \textit{cohort-period} level coefficients rather than on aggregated event-study coefficients. To address the dynamic control group issue, I introduce the concept of the \textit{block bias}. Each block bias is defined at the cohort-period level and compares a treated cohort in a given period to its \textit{initial control group}---the units that are untreated when this cohort \textit{first} adopts treatment---relative to a reference period (or set of reference periods). Conceptually, this comparison operates within a \textit{block-adoption} structure formed by the cohort and its initial control group---hence the term ``block bias.'' Because a treated cohort's initial control group does not vary over time, its block biases are ``anchored'' to this control group and thus have a consistent interpretation in both pre- and post-treatment periods, capturing the PT violation of the treated cohort with respect to this fixed control group.

The block bias concept enables robust inference through a bias decomposition that links block biases to estimators' biases at the cohort-period level. Formally, I define the \textit{overall bias} in any post-treatment cohort-period cell as the difference between the expected value of the estimated average treatment effect and the true average treatment effect. For both the imputation and \textit{CS-NYT} estimators, this overall bias can be written as a linear combination of that cohort’s own block bias and the block biases of later-treated cohorts that have begun treatment by that period. Denoting the stacked cohort-period overall biases\footnote{In pre-treatment periods, overall biases are, for simplicity, set equal to the block biases in the same cohort-period cells. This does not affect the robust inference procedure.} by $\vec{\delta}$ and the stacked block biases by $\vec{\Delta}$, this decomposition implies an \textit{invertible} linear mapping between them: $\vec{\delta} = \mathbf{W}\vec{\Delta}$. Appendix~\ref{sec:appendix_bias_decompose_toy} provides the explicit bias decomposition for the toy example discussed previously.

In each post-treatment cohort-period cell, what can be estimated is the \textit{sum} of the true treatment effect and the overall bias; neither component is separately identified. The rationale for robust inference is to impose a restriction set on the possible values of the post-treatment overall bias and, under this restriction, set-identify the true effect. This restriction set bounds the post-treatment overall bias using a benchmark learned from pre-treatment periods. Ideally, this benchmark should have the same interpretation as the overall bias in post-treatment periods. However, for the imputation estimator and the \textit{CS-NYT} estimator, such benchmarks for the overall bias do not exist because the control group underlying the definition of overall bias can vary over time as discussed above.

Unlike the overall bias, block biases are defined consistently in pre- and post-treatment periods; consequently, for each cohort, observable pre-treatment block biases provide valid benchmarks for their unobservable post-treatment counterparts. The cohort-anchored robust inference framework therefore imposes restrictions on cohort-period level block biases, requiring that for each cohort, the potential block bias in post-treatment periods should not differ dramatically from the observed block bias in its pre-treatment periods. Consistent with \RRR, these restriction sets on block biases are specified to take the form of a single polyhedron or a union of polyhedra. Using the invertible linear mapping $\vec{\delta}=\mathbf{W}\vec{\Delta}$, the framework then translates the restriction set on the block biases into a corresponding restriction set on the overall biases. Finally, it feeds these translated restriction sets into the robust inference algorithm in \RRR, such as the hybrid method developed by \citet{andrews2023inference}, to construct a confidence set for the average treatment effects.

The restriction sets from \RR carry over seamlessly to the cohort-anchored framework, where they become more transparent and interpretable when applied to block biases rather than to aggregated estimates. This paper focuses on two such restrictions: \textit{Relative Magnitudes} (RM) and \textit{Second Differences} (SD). The RM restriction bounds post-treatment changes in a cohort's block bias between consecutive periods by a benchmark learned from pre-treatment trends. I consider both a computationally simple \textit{global benchmark}, based on the maximum pre-treatment change across all cohorts, and a more theoretically sound (but computationally intensive) \textit{cohort-specific benchmark} based on each cohort's own pre-trends. The SD restriction, which is well-suited for linear pre-trends, bounds the change in the slope of the block bias path and is computationally simple as it forms a single polyhedron.

I use two simulated examples with heterogeneous pre-trends to compare my cohort-anchored framework to the robust inference framework that relies on aggregated event-study coefficients (hereafter the \textit{aggregated framework}). First, in a simulation illustrating the RM restriction where one cohort has an oscillating pre-trend, the cohort-anchored framework yields markedly narrower confidence sets. This occurs because the aggregated framework adopts an overly conservative benchmark dictated by the cohort with the substantial pre-trend. Second, in a simulation illustrating the SD restriction where one cohort has a linear pre-trend, the cohort-anchored framework produces better-centered confidence sets. The aggregated framework, by contrast, uses an averaged pre-treatment slope ill-suited for any individual cohort, leading to sets centered away from the true effect. In both scenarios, the cohort-anchored approach provides more credible inference by properly accounting for cohort-level heterogeneity.

I revisit the empirical application from \citet{callaway2021-did} on the effect of minimum-wage increases on teen employment, a setting with heterogeneous cohort-specific linear pre-trends that can mislead the robust inference based on aggregated coefficients. The advantages of the cohort-anchored framework are starkest under the SD restriction. While the aggregated framework produces a confidence set centered at zero, the cohort-anchored framework's set remains centered well below zero. This demonstrates that the negative employment effect is robust to cohort-specific linear trend violations, a conclusion obscured by the aggregated framework.

Compared with robust inference based on aggregated event–study coefficients \`a la \RRR, the proposed cohort-anchored framework resolves the problems of dynamic treated composition and dynamic control group in staggered adoption settings, imposes more transparent and interpretable restriction sets, and yields better-centered confidence sets. The approach is particularly useful when there are multiple cohorts, each large enough to deliver reasonably precise cohort–period estimates, and when violations of PT differ meaningfully across cohorts.

These gains come with trade-offs. Computation can be heavier when the restriction set is a union of polyhedra because inference must be carried out across multiple benchmark configurations. Second, cohort–period estimates typically exhibit higher statistical uncertainty than aggregated coefficients, which can widen confidence sets. Finally, aggregation tends to smooth volatility in the pre-treatment series; under a fixed sensitivity parameter, the pre-treatment benchmark used by the cohort-anchored framework can sometimes be larger in magnitude than the benchmark obtained from aggregated estimates, increasing identification uncertainty and further widening confidence sets. I do not view these trade-offs as ``cons'' of the method; they are the price of conducting robust inference in a more transparent and interpretable way. However, in practice, although the proposed framework has many advantages, researchers may face settings with many cohorts of small size—so that cohort–period coefficients have high statistical uncertainty—in which case robust inference based on aggregated coefficients may still be preferred.

The contribution of this paper is twofold. In addition to the proposed cohort-anchored robust inference framework, the bias decomposition itself is of independent interest. While a variant of this property is first documented in \citet{aguilar2025comparative}, I derive it for the imputation estimator from a different perspective. My derivation leverages the equivalence between the imputation estimator and a sequential imputation procedure, as shown in \citet{arkhangelsky2024sequential}. Furthermore, the pre-treatment block bias provides a way to visualize pre-treatment coefficients for the imputation estimator that are directly comparable to their post-treatment counterparts, addressing recent discussions about the asymmetric construction and interpretation of pre- versus post-treatment coefficients in event studies \citep{Roth2024interpret,li2025benchmarking}.

The remainder of the paper proceeds as follows. Section~\ref{sec:bias_decompose} derives the bias decompositions for the imputation estimator and the \textit{CS-NYT} estimator. Section~\ref{sec:inference_framework} then formally lays out my robust inference framework and details how restrictions are placed on block biases. In Section~\ref{sec:illustrative_example}, I use two simulated examples to illustrate the practical implementation of my framework. Section~\ref{sec:empirical_example} applies the framework to the empirical application of \citet{callaway2021-did}. Section~\ref{sec:conclusion} concludes. All proofs appear in the Appendix. Table~\ref{tab:notation} lists all notation.

\section{Bias Decomposition}
\label{sec:bias_decompose}

This section develops the bias decomposition for the imputation and \textit{CS-NYT} estimators, two widely used and representative HTE-robust methods. I begin by reviewing these estimators, then formally define the concepts of block bias and overall bias, and show that the overall bias in any post-treatment cohort–period cell can be expressed as a linear combination of block biases. Next, I provide intuition for this decomposition, and finally, I discuss the estimation of block bias in pre-treatment periods and its implications for interpreting pre- and post-treatment coefficients in event studies.

\subsection{Setup}

I consider a DiD design with staggered treatment adoption. I observe a panel of $N$ units over $T$ time periods. Units are classified into $G$ treated cohorts, $\mathcal{G}_g$ for $g \in \{1, \ldots, G\}$, and a never-treated cohort $\mathcal{G}_\infty$. The treated cohort $\mathcal{G}_g$ starts to be treated in period $t_g$, and the number of units in each cohort is denoted by $N_g$ (and $N_\infty$ for the never-treated group). Without loss of generality, assume $1<t_1<\cdots<t_G\leq T$. The treatment status for a unit $i \in \mathcal{G}_g$ is thus $D_{it} = \mathbf{1}\{t \ge t_g\}$, while for never-treated units, $D_{it} = 0$ for all $t$. The observed outcome is $Y_{it} = D_{it}Y_{it}(1) + (1-D_{it})Y_{it}(0)$, where $Y_{it}(1)$ and $Y_{it}(0)$ are the potential outcomes under treatment and control, respectively. I use $t \in \{1, \ldots, T\}$ to denote the calendar period and $s$ to denote the period relative to the treatment date. For a given cohort $\mathcal{G}_{g}$, $s=1$ corresponds to the first post-treatment period ($t=t_g$), while $s=0$ corresponds to the last pre-treatment period ($t=t_g-1$). 

Throughout this paper, I assume a balanced panel in a staggered adoption setting without covariates. My analysis therefore focuses on the unconditional, rather than the conditional, PT assumption. I also assume the existence of a never-treated cohort.
\begin{assumption}[Model Setup]
\label{ass:setup}
\begin{enumerate}[label=\arabic*., itemsep=-1pt, topsep=0pt, partopsep=0pt]
    \item[]
    \item \textbf{Staggered Adoption:} $1 < t_1 < \cdots < t_G \leq T$.
    \item \textbf{Never-Treated Cohort:} The set of never-treated units is non-empty, i.e., $|\mathcal{G}_\infty| = N_\infty > 0$.
    \item \textbf{Balanced Panel:} The data matrix $\{Y_{it}\}$ is complete for all units $i \in \bigcup_{g=1}^G \mathcal{G}_g \cup \mathcal{G}_\infty$ and all time periods $t \in \{1, \dots, T\}$.
\end{enumerate}
\end{assumption}
The potential outcome notation, $Y_{it}(0)$, is a simplification that implicitly bundles some identifying assumptions as one unit's potential outcome could depend on the entire vector of treatment assignments across all units and time. I therefore formally separate the assumptions required to simplify this general case. They can be understood as two facets of the \textit{no interference} principle: one that restricts interference across units (SUTVA) and another that restricts interference across time (no anticipation).

\begin{assumption}[No Interference]
\label{ass:no_interference}
\begin{enumerate}[label=\arabic*., itemsep=-1pt, topsep=0pt, partopsep=0pt]
    \item[]

    \item \textbf{Across Units (SUTVA):} The potential outcomes for any unit $i$ do not depend on the treatment status of any other unit $j \neq i$. This rules out spillover effects across units.

    \item \textbf{Across Time (No Anticipation):} The potential outcomes for any unit $i$ in period $t$ do not depend on its treatment status in future periods. 
\end{enumerate}
\end{assumption}

\paragraph*{Imputation Estimator}

The imputation estimator, proposed by \citet{borusyak2024revisiting} and \citet{liu2024practical},\footnote{Its numerically equivalent forms are documented in \citet{wooldridge2021two}, \citet{gardner2022two} and \citet{gardnerone}.} is a popular approach for DiD designs. The procedure begins by estimating unit and time fixed effects, $(\hat{\alpha}_i, \hat{\xi}_t)$, from a two-way fixed effects model fitted using \textit{only the untreated observations} in the panel. These estimates are then used to impute the counterfactual outcome for each treated observation: $\hat{Y}_{i,t_g+s-1}(0) = \hat{\alpha}_i + \hat{\xi}_{t_g+s-1}$. The estimated average treatment effect for cohort $\mathcal{G}_g$ at relative period $s$ is the average difference between the observed and imputed outcomes:
$$
\hat{\tau}_{\mathcal{G}_g,s}^{\text{Imp}} = \frac{1}{N_g}\sum_{i\in\mathcal{G}_g} \left[ Y_{i,t_g+s-1} - \hat{Y}_{i,t_g+s-1}(0) \right].
$$
This estimator has several attractive properties that make it a popular alternative to traditional TWFE regression. Most importantly, because the fixed effects are estimated using only untreated observations, the resulting treatment effect estimates are robust to HTE and avoid the negative weighting problems that can bias standard TWFE estimators in staggered designs. Furthermore, \citet{borusyak2024revisiting} show that the imputation estimator is the most efficient linear unbiased estimator under the assumption of homoskedastic and serially uncorrelated errors (spherical errors).

\paragraph*{The CS-NYT Estimator}

\citet{callaway2021-did} provides alternative estimators for $\tau_{\mathcal{G}_g,s}$. One of them compares the change in the average outcome for a treated cohort $\mathcal{G}_g$ to the corresponding change for a control group consisting of not-yet-treated units, denoted $\mathcal{C}_{g,s}=\left(\cup_{k: t_k>t} \mathcal{G}_k\right) \cup \mathcal{G}_{\infty}$, where $t=t_g+s-1$. A key feature of this method is that the control group $\mathcal{C}_{g,s}$ is defined contemporaneously and therefore shrinks as later cohorts receive treatment. The initial control group for cohort $\mathcal{G}_g$ is the set of not-yet-treated units at its first period of treatment ($s=1$), which I denote as $\mathcal{C}_{g,1}$.

The estimator compares the change between the last pre-treatment period, $t_g-1$, and a given post-treatment period, $t_g+s-1$. The estimated average treatment effect for cohort $\mathcal{G}_g$ at relative period $s$ is
$$\hat{\tau}_{\mathcal{G}_g,s}^{\text{CS-NYT}} = \frac{1}{N_g}\sum_{i\in\mathcal{G}_g}\bigl(Y_{i,t_g+s-1}-Y_{i,t_g-1}\bigr) - \frac{1}{N_{\mathcal{C}_{g,s}}}\sum_{i\in\mathcal{C}_{g,s}}\bigl(Y_{i,t_g+s-1}-Y_{i,t_g-1}\bigr),$$
\noindent where the time-varying control group $\mathcal{C}_{g,s}$ is defined on each relative post-treatment period $s\geq 1$. \citet{callaway2021-did} also discuss extensions that incorporate covariates using outcome regression, inverse probability weighting, and doubly robust techniques.

\paragraph*{Other HTE-Robust Estimators} 
The landscape of HTE-robust DiD estimators also includes several other important methods. For instance, the \textit{CS-NT} estimator in \citet{callaway2021-did} that uses the never treated as controls and the regression-based estimator of \citet{sun2021-event} both identify treatment effects using a fixed, never-treated control group. Another key approach, the $\text{DID}_l$ estimator of \citet{de2024difference}, is closely related to the \textit{CS-NYT} estimator I focus on.

In the setting of a balanced panel with staggered adoption and no covariates, the estimator of \citet{sun2021-event} is equivalent to the \textit{CS-NT} estimator. My framework’s core theoretical results can therefore be readily applied to this wider class of estimators. For methods that use a uniform control group, the bias decomposition becomes trivial, as adjustment for the varying control group is no longer needed.

\subsection{Overall bias and Block Bias}

I now formally define the overall bias ($\delta_{\mathcal{G}_g,s}$) and the block bias ($\Delta_{\mathcal{G}_g,s}$). Both concepts are defined on each cohort-period cell $(g,s)$ and form the foundation of the bias decomposition and robust inference framework that follows.
\paragraph*{Overall bias}
For the imputation estimator, I define the overall bias for cohort $\mathcal{G}_g$ in a post-treatment period $s \geq 1$, denoted $\delta^{\text{Imp}}_{\mathcal{G}_g,s}$, as the difference between the expectation of the estimated treatment effect and the true treatment effect in that cohort-period cell. As the derivation below shows, this simplifies to the difference between the true and the imputed counterfactuals:
\begin{equation}
\label{eq:impute-overall-bias}
\begin{split}
    \delta^{\text{Imp}}_{\mathcal{G}_g,s} & \equiv \mathbb{E}[\hat{\tau}_{\mathcal{G}_g,s}^{\text{Imp}}] - \tau_{\mathcal{G}_g,s} \\
    & = \left( \mathbb{E}[Y_{i,t_g+s-1} \mid i \in \mathcal{G}_g] - \mathbb{E}[\hat{Y}_{i,t_g+s-1}(0) \mid i \in \mathcal{G}_g] \right) \\
    & \quad - \left( \mathbb{E}[Y_{i,t_g+s-1} \mid i \in \mathcal{G}_g] - \mathbb{E}[Y_{i,t_g+s-1}(0) \mid i \in \mathcal{G}_g] \right) \\
    & = \mathbb{E}[Y_{i,t_g+s-1}(0) \mid i \in \mathcal{G}_g] - \mathbb{E}[\hat{Y}_{i,t_g+s-1}(0) \mid i \in \mathcal{G}_g].
\end{split}
\tag{Overall-Bias-Imp}
\end{equation}

\noindent The overall bias for the \textit{CS-NYT} estimator can be defined analogously in post-treatment periods:
\begin{equation}
\tag{Overall-Bias-CS-NYT}
\label{eq:CS-overall-bias}
\begin{split}
    \delta^{\text{CS-NYT}}_{\mathcal{G}_g,s} & \equiv \mathbb{E}[\hat{\tau}_{\mathcal{G}_g,s}^{\text{CS-NYT}}] - \tau_{\mathcal{G}_g,s} \\
    & = \left( \mathbb{E}[Y_{i,t_g+s-1} \mid i \in \mathcal{G}_g] - \mathbb{E}[Y_{i,t_g-1}(0) \mid i \in \mathcal{G}_g] \right)  - \left( \mathbb{E}[Y_{i,t_g+s-1}(0) \mid i \in \mathcal{C}_{g,s}] - \mathbb{E}[Y_{i,t_g-1}(0) \mid i \in \mathcal{C}_{g,s}] \right)\\
    & \quad - \left( \mathbb{E}[Y_{i,t_g+s-1} \mid i \in \mathcal{G}_g] - \mathbb{E}[Y_{i,t_g+s-1}(0) \mid i \in \mathcal{G}_g] \right) \\
    & = \left( \mathbb{E}[Y_{i,t_g+s-1}(0) \mid i \in \mathcal{G}_g] - \mathbb{E}[Y_{i,t_g-1}(0) \mid i \in \mathcal{G}_g] \right) \\
    &\quad - \left( \mathbb{E}[Y_{i,t_g+s-1}(0) \mid i \in \mathcal{C}_{g,s}] - \mathbb{E}[Y_{i,t_g-1}(0) \mid i \in \mathcal{C}_{g,s}] \right).
\end{split}
\end{equation}
\noindent Note that $Y_{i,t_g+s-1}(0)=Y_{i,t_g+s-1}$ for $i \in \mathcal{C}_{g,s}$ and $Y_{i,t_g-1}(0)=Y_{i,t_g-1}$ for $i \in \mathcal{C}_{g,s}\bigcup\mathcal{G}_g$. The bias term $\delta^{\text{CS-NYT}}_{\mathcal{G}_g,s}$ therefore represents the difference between two trends from the reference period $t_g-1$ to the post-treatment period $t_g+s-1$: (1) the counterfactual trend for cohort $\mathcal{G}_g$ under no treatment, and (2) the observed trend for the not-yet-treated control group, $\mathcal{C}_{g,s}$. The condition that this bias term is zero, $\delta^{\text{CS-NYT}}_{\mathcal{G}_g,s}=0$, is precisely the PT assumption required for the unbiased estimation of $\tau_{\mathcal{G}_{g},s}$.

The overall bias terms, $\delta^{\text{Imp}}_{\mathcal{G}_g,s}$ and $\delta^{\text{CS-NYT}}_{\mathcal{G}_g,s}$, represent the bias in the estimated treatment effect for a given post-treatment cohort-period cell $(g,s)$. The goal of robust inference is to use observable pre-trends to form a restriction set on these unobservable post-treatment biases. For the restriction to be interpretable and credible, the benchmark derived from pre-trends should have the same interpretation as the post-treatment biases. However, the overall bias is ill-suited for this task. Specifically, one cannot find a measure of the pre-trend that best mirrors the overall bias in post-treatment periods. For the imputation estimator, the overall bias lacks a symmetrically defined pre-treatment analog. For the \textit{CS-NYT} estimator, the control group's composition changes in post-treatment periods, meaning no single pre-treatment comparison can serve as a consistent benchmark for all post-treatment periods. These challenges motivate my introduction of the block bias.

\paragraph*{Block Bias}

The block bias, $\Delta_{\mathcal{G}_g,s}$, captures the violation of PT for a cohort $\mathcal{G}_g$ relative to its fixed, initial control group, $\mathcal{C}_{g,1}= \left(\bigcup_{k: t_k > t_g} \mathcal{G}_k\right) \cup \mathcal{G}_\infty$. This group consists of all units not-yet-treated at the moment cohort $\mathcal{G}_g$ first receives treatment ($s=1$). Because this comparison group is, by definition, fixed for all relative periods $s$, the block bias provides a consistent measure of the trend difference between $\mathcal{G}_g$ and $\mathcal{C}_{g,1}$ in both pre- and post-treatment periods. I use the term ``block'' because this comparison operates within a classic \textit{block-adoption} structure formed by $\mathcal{G}_g$ and its initial control group $\mathcal{C}_{g,1}$.

The exact definition of the block bias depends on the estimator. For the imputation estimator, the block bias is defined as:
\begin{equation}
\tag{Block-Bias-Imp}
\label{eq:block-bias-impute}
\begin{split}
\Delta_{\mathcal{G}_g,s}^{\text{Imp}}
& = \left(\mathbb{E}[Y_{i,t_g+s-1}(0) \mid i\in\mathcal{G}_g] - \mathbb{E}[\bar{Y}_{i,\text{pre}_g}(0) \mid i\in\mathcal{G}_g]\right) \\
& \quad - \left(\mathbb{E}[Y_{i,t_g+s-1}(0) \mid i\in\mathcal{C}_{g,1}] - \mathbb{E}[\bar{Y}_{i,\text{pre}_g}(0) \mid i\in\mathcal{C}_{g,1}]\right),
\end{split}
\end{equation}
where $\text{pre}_g = \{1,\dots,t_g-1\}$ denotes the set of pre-treatment periods for cohort $\mathcal{G}_g$, and $\bar{Y}_{i,\text{pre}_g}(0)=\frac{1}{t_g-1}\sum_{t \in \text{pre}_g}Y_{it}(0)$ is the average potential outcome for unit $i$ over those periods. By definition, no units in cohort $\mathcal{G}_g$ or its initial control group $\mathcal{C}_{g,1}$ are treated during the $\text{pre}_g$ window. Therefore, their potential outcomes are equal to their observed outcomes during periods $\text{pre}_g$ (i.e., $Y_{i,t}(0) = Y_{i,t}$ for $i \in \mathcal{G}_g \bigcup \mathcal{C}_{g,1}$ and $t \in \text{pre}_g$).

The term $\Delta_{\mathcal{G}_g,s}^{\text{Imp}}$ takes a classic DiD-like structure. The first component, $\mathbb{E}[Y_{i,t_g+s-1}(0) \mid i\in\mathcal{G}_g] - \mathbb{E}[\bar{Y}_{i,\text{pre}_g}(0) \mid i\in\mathcal{G}_g]$ represents the trend in cohort $\mathcal{G}_g$'s potential outcome, measuring its value in a given period relative to the average over all its pre-treatment periods. The block bias, $\Delta_{\mathcal{G}_g,s}^{\text{Imp}}$, is the difference between this trend for cohort $\mathcal{G}_g$ and the corresponding trend for its initial control group, $\mathcal{C}_{g,1}$. A key feature of this definition is that the block bias is directly observable in all pre-treatment periods ($s \le 0$), while in post-treatment periods ($s > 0$) it becomes unobservable. This unobservability arises because not only the potential outcomes for the treated cohort $\mathcal{G}_g$ are unknown by definition, but also the potential outcomes for some units in the initial control group $\mathcal{C}_{g,1}$ may become unobservable once they themselves are treated at later dates.

Crucially, the definition and interpretation of the block bias remain identical across all pre- and post-treatment periods. It always compares the trend of the treated cohort with the trend of its anchored initial control group. This consistency allows observable pre-treatment block biases to serve as valid benchmarks for unobservable post-treatment block biases.

The block bias for the \textit{CS-NYT} estimator takes a similar form, but uses the last pre-treatment period of the target cohort, $t_g-1$, as the reference period. Formally, it is defined as:
\begin{equation}
\tag{Block-Bias-CS-NYT}
\label{eq:block-bias-cs}
\begin{split}
\Delta_{\mathcal{G}_g,s}^{\text{CS-NYT}}
& = \left( \mathbb{E}[Y_{i,t_g+s-1}(0) \mid i \in \mathcal{G}_g] - \mathbb{E}[Y_{i,t_g-1}(0) \mid i \in \mathcal{G}_g] \right) \\
& \quad - \left( \mathbb{E}[Y_{i,t_g+s-1}(0) \mid i \in \mathcal{C}_{g,1}] - \mathbb{E}[Y_{i,t_g-1}(0) \mid i \in \mathcal{C}_{g,1}] \right).
\end{split}
\end{equation}
\noindent This definition closely resembles that of the overall bias in Equation~\eqref{eq:CS-overall-bias}. The critical difference is that the block bias compares cohort $\mathcal{G}_g$ to its fixed initial control group, $\mathcal{C}_{g,1}$, whereas the overall bias uses the time-varying \textit{not-yet-treated} control group, $\mathcal{C}_{g,s}$. The same as the imputation estimator, this definition allows the observable pre-treatment estimates of $\Delta_{\mathcal{G}_g,s}^{\text{CS-NYT}}$ to serve as a valid benchmark for its unobservable post-treatment counterparts.

For any treated cohort, the overall bias and block bias are equivalent in the first post-treatment period ($s=1$), $\delta_{\mathcal{G}_g,1} = \Delta_{\mathcal{G}_g,1}$. For the \textit{CS-NYT} estimator, this result is straightforward. At the first post-treatment period, the not-yet-treated control group is, by definition, identical to the initial control group. For the imputation estimator, this equality follows directly from the lemma below.

\begin{lemma}
\label{lemma:initial_period_equivalence}
For any treated unit $i \in \mathcal{G}_g$ in its first post-treatment period $t_g$, the counterfactual imputed by the imputation estimator,
$\hat{Y}^{\text{Imp}}_{i,t_g}(0)$, is algebraically equivalent to a simple block-DiD expression:
\[
\hat{Y}_{i,t_g}^{\text{Imp}}(0)
= \overline{Y}_{i,\text{pre}_g}
+ \bigl(\overline{Y}_{\mathcal{C}_{g,1},t_g}
- \overline{Y}_{\mathcal{C}_{g,1},\text{pre}_g}\bigr),
\]
where $\mathcal{C}_{g,1}$ is the initial control group, $\overline{Y}_{\mathcal{C}_{g,1},t_g}$ is the average outcome for this group at time $t_g$, and $\overline{Y}_{\mathcal{C}_{g,1},\text{pre}_g}$ is this group's average outcome over cohort $\mathcal{G}_g$'s pre-treatment periods.
\end{lemma}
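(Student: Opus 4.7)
The plan is to derive the identity directly from the first-order conditions (FOCs) of the TWFE OLS problem that defines the imputation estimator. Because the imputation fits $\hat{\alpha}_i, \hat{\xi}_t$ on untreated cells only, the row FOC for units in $\mathcal{G}_g$ and the column FOC for period $t_g$ interact with the staircase structure of the untreated set in a way that pins down $\hat{\alpha}_i + \hat{\xi}_{t_g}$ uniquely as the block-DiD expression in the lemma.

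The argument proceeds in three steps. First, since the set of untreated periods for any $i \in \mathcal{G}_g$ is exactly $\text{pre}_g$, the row FOC gives $\hat{\alpha}_i = \overline{Y}_{i, \text{pre}_g} - \overline{\hat{\xi}}_{\text{pre}_g}$, where $\overline{\hat{\xi}}_{\text{pre}_g}$ denotes the average of $\hat{\xi}_t$ over $t \in \text{pre}_g$. Second, since the set of untreated units at period $t_g$ is exactly $\mathcal{C}_{g,1}$ (earlier cohorts are already treated; later cohorts and never-treated units are not yet treated), the column FOC at $t_g$ gives $\hat{\xi}_{t_g} = \overline{Y}_{\mathcal{C}_{g,1}, t_g} - \overline{\hat{\alpha}}_{\mathcal{C}_{g,1}}$. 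Adding the two expressions reduces the claim to the auxiliary identity $\overline{\hat{\xi}}_{\text{pre}_g} + \overline{\hat{\alpha}}_{\mathcal{C}_{g,1}} = \overline{Y}_{\mathcal{C}_{g,1}, \text{pre}_g}$, which is equivalent to showing that the sum of TWFE residuals over the sub-rectangle $\mathcal{C}_{g,1} \times \text{pre}_g$ vanishes, i.e., $\sum_{i \in \mathcal{C}_{g,1},\, t \in \text{pre}_g} \hat{e}_{it} = 0$.

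The last step is the main (but manageable) obstacle. I would sum the column FOC $\sum_{i \in I_t^U} \hat{e}_{it} = 0$ across $t \in \text{pre}_g$ using the decomposition $I_t^U = \mathcal{C}_{g,1} \cup \mathcal{G}_g \cup \{\mathcal{G}_k : k < g,\, t_k > t\}$. The $\mathcal{G}_g$ contribution vanishes by its row FOC, since $\mathcal{G}_g$'s untreated periods are exactly $\text{pre}_g$. For each earlier cohort $\mathcal{G}_k$ with $k < g$, swapping the order of summation collapses the double sum to $\sum_{i \in \mathcal{G}_k} \sum_{t \in \text{pre}_k} \hat{e}_{it}$, because $k < g$ forces $t_k < t_g$, so the condition $t_k > t$ combined with $t \in \text{pre}_g$ is equivalent to $t \in \text{pre}_k$; this also vanishes by the row FOC for each $i \in \mathcal{G}_k$. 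What remains is precisely $\sum_{i \in \mathcal{C}_{g,1},\, t \in \text{pre}_g} \hat{e}_{it} = 0$, establishing the lemma. This FOC-based derivation operationalizes the sequential-imputation viewpoint of \citet{arkhangelsky2024sequential} invoked in the paper's preamble.
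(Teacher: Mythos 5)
Your proposal is correct and follows essentially the same route as the paper's proof: the row FOC for $\hat{\alpha}_i$, the column FOC at $t_g$ for $\hat{\xi}_{t_g}$, and the reduction to $\sum_{i \in \mathcal{C}_{g,1},\, t \in \text{pre}_g} \hat{e}_{it} = 0$, established by summing the column FOCs over $\text{pre}_g$ and cancelling the contributions of $\mathcal{G}_g$ and all earlier-treated cohorts via their row FOCs. This matches Steps 1--3 of the proof in Appendix~A almost line for line.
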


\noindent\textit{Proof of Lemma~\ref{lemma:initial_period_equivalence} is provided in the Appendix~\ref{proof:initial_period_equivalence}.}

\vspace{1em}

This lemma establishes an exact algebraic identity for the imputed counterfactual in the first post-treatment period. Taking the expectation of the expression in Lemma~\ref{lemma:initial_period_equivalence} implies that the overall bias, $\delta^{\text{Imp}}_{\mathcal{G}_g,1}$, is equivalent to the block bias, $\Delta^{\text{Imp}}_{\mathcal{G}_g,1}$, for every treated cohort. This equivalence provides a crucial link between the full, staggered panel and the simpler block structure consisting of cohort $\mathcal{G}_g$ and its initial control group, furnishing the first step for the bias decomposition that follows.

\subsection{The Bias Decomposition}

For relative periods beyond the first post-treatment period ($s>1$), the overall bias is no longer identical to the block bias. For both the imputation and the \textit{CS-NYT} estimators, the control group for an early-adopting cohort changes over time as some cohorts in the initial control group receive treatment later. This dynamic produces a recursive structure where the overall bias of one cohort depends on the block biases of those treated after it. 

The following proposition formalizes this relationship for the imputation estimator, decomposing the overall bias for any cohort-period cell into its own block bias plus a linear combination of the block biases of later-adopting cohorts. This result is a variant of the Corollary 1 in \citet{aguilar2025comparative}, though my formulation and proof—which leverages the sequential nature of the imputation estimator—are distinct.

\begin{proposition}[Bias Decomposition for the Imputation Estimator]
\label{prop:bias-decompose-impute}
Let $t = t_g + s - 1$ be the calendar time corresponding to post-treatment period $s\geq 1$ for cohort $\mathcal{G}_g$.  Then the overall bias $\delta^{\text{Imp}}_{\mathcal{G}_g,s}$ satisfies\footnote{The cohort sizes $N_k$ are treated as fixed features of the design, and the expectations are conditional on this group structure. An alternative approach would be to define the weights using population shares, for which the sample proportions $N_k/N$ are consistent estimators.}
\[
\delta^{\text{Imp}}_{\mathcal{G}_g,s} = \Delta^{\text{Imp}}_{\mathcal{G}_g,s} + \sum_{k \in \mathcal{K}_g(t)} \left( \frac{N_k}{\sum_{j=k}^{G} N_j + N_\infty} \right) \Delta^{\text{Imp}}_{\mathcal{G}_k,s_k(t)}
\]
where the set of later-treated cohorts that contributes to $\delta^{\text{Imp}}_{\mathcal{G}_g,s}$ is
$\mathcal{K}_g(t) = \bigl\{k \,\big|\, t_g < t_k \le t\bigr\}$ and the relative period for such a cohort $\mathcal{G}_k$ is $s_k(t) = t - (t_k - 1)$.
\end{proposition}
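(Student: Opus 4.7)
My plan is to prove the decomposition via two ingredients: a pseudo-outcome block-DiD formula that generalizes Lemma~\ref{lemma:initial_period_equivalence} to all post-treatment periods, and reverse induction over the cohort index $g$. The generalized formula, which I expect to be the main obstacle, asserts that for any cohort $\mathcal{G}_g$ and any $s\geq 1$ with $t = t_g + s - 1$,
$$\hat{Y}^{\text{Imp}}_{i,t}(0) = \bar{Y}_{i,\text{pre}_g} + \bigl(\bar{Z}_{\mathcal{C}_{g,1},t} - \bar{Y}_{\mathcal{C}_{g,1},\text{pre}_g}\bigr),\qquad i\in\mathcal{G}_g,$$
where $\bar{Z}_{\mathcal{C}_{g,1},t}$ averages, across units in the initial control group $\mathcal{C}_{g,1}$, the \emph{observed} outcome at $t$ for those still untreated and the already-imputed counterfactual $\hat{Y}^{\text{Imp}}_{i,t}(0)$ for those belonging to later-treated cohorts $\mathcal{K}_g(t)$. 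Lemma~\ref{lemma:initial_period_equivalence} is exactly the $s=1$ case. Extending to $s>1$ follows from the equivalence between the imputation estimator and the sequential DiD procedure of \citet{arkhangelsky2024sequential}; alternatively, it can be re-derived from the first-order conditions of the untreated-only TWFE regression by proceeding sequentially over calendar time, treating each newly imputed cohort's counterfactuals as pseudo-observations for subsequent imputations.

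Granted this formula, taking expectations and comparing $\mathbb{E}[\bar{Z}_{\mathcal{C}_{g,1},t}]$ to the ``true'' counterfactual average $\mathbb{E}[\bar{Y}_{\mathcal{C}_{g,1},t}(0)]$ isolates the discrepancy introduced by the imputed cohorts. Units in cohorts $\mathcal{K}_g(t)$ contribute $\mathbb{E}[\hat{Y}^{\text{Imp}}_{i,t}(0) - Y_{i,t}(0) \mid i\in\mathcal{G}_k] = -\delta^{\text{Imp}}_{\mathcal{G}_k,s_k(t)}$, while still-untreated and never-treated units contribute zero. Rearranging, and recognizing the remaining expression as the block bias $\Delta^{\text{Imp}}_{\mathcal{G}_g,s}$, delivers the one-level recursion
$$\delta^{\text{Imp}}_{\mathcal{G}_g,s} = \Delta^{\text{Imp}}_{\mathcal{G}_g,s} + \frac{1}{|\mathcal{C}_{g,1}|}\sum_{k\in\mathcal{K}_g(t)} N_k\,\delta^{\text{Imp}}_{\mathcal{G}_k,s_k(t)},\qquad |\mathcal{C}_{g,1}| = \sum_{j=g+1}^{G} N_j + N_\infty.$$

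The final step solves this recursion by induction over the cohort index $g$ in reverse order. The base case $g = G$ is immediate: $\mathcal{K}_G(t)=\emptyset$, so the sum vanishes and the claim reduces to $\delta^{\text{Imp}}_{\mathcal{G}_G,s}=\Delta^{\text{Imp}}_{\mathcal{G}_G,s}$. For the inductive step, substitute the formula of the proposition for each $\delta^{\text{Imp}}_{\mathcal{G}_k,s_k(t)}$ with $k>g$ and collect by $\Delta^{\text{Imp}}_{\mathcal{G}_l,s_l(t)}$. For each $l\in\mathcal{K}_g(t)$, the aggregated coefficient equals $N_l/|\mathcal{C}_{g,1}|$ (the direct contribution from $k=l$) plus routed contributions from intermediate cohorts $k$ with $g<k<l$; the telescoping identity $\sum_{k=g+1}^{l-1}N_k + \bigl(\sum_{j=l}^{G}N_j + N_\infty\bigr) = |\mathcal{C}_{g,1}|$ collapses these to the claimed weight $N_l/\bigl(\sum_{j=l}^{G}N_j + N_\infty\bigr)$, with the intermediate renormalizations cancelling so that each later cohort ultimately contributes its block bias weighted by its ``entry share'' in the control group it first joins.
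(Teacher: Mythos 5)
Your proposal is correct and follows essentially the same route as the paper: the generalized block-DiD formula you identify as the main obstacle is exactly the paper's Lemma~\ref{lemma:alternative-seq-imp} (established via the equivalence with the sequential imputation procedure), the one-level recursion in terms of the adjustment cohorts' \emph{overall} biases matches Part~1 of the paper's proof, and the reverse induction with the telescoping identity $\sum_{k=g+1}^{l-1}N_k+\sum_{j=l}^{G}N_j+N_\infty=N_{\mathcal{C}_{g,1}}$ is Part~2 verbatim. No gaps beyond the (correctly flagged) need to actually prove the sequential equivalence by induction over relative periods, which the paper does in Proposition~\ref{prop:impute-seq-impute-equivalence}.
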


\noindent\textit{Proof.}  See Appendix~\ref{proof:bias-decompose-impute}.

Proposition \ref{prop:bias-decompose-impute} shows that the overall bias from the imputation estimator for cohort $\mathcal{G}_g$ at a post-treatment period $s$, $\delta^{\text{Imp}}_{\mathcal{G}_g,s}$, equals its own block bias, $\Delta^{\text{Imp}}_{\mathcal{G}_g,s}$, plus a weighted sum of block biases from a specific set of later-adopting cohorts. These cohorts, which I term \textit{adjustment cohorts}, are those that begin treatment after $\mathcal{G}_g$ and no later than calendar time $t=t_g+s-1$, denoted by the set $\mathcal{K}_g(t) = \bigl\{k \,\big|\, t_g < t_k \le t\bigr\}$. The weight assigned to each adjustment cohort $\mathcal{G}_k$ depends on its size, $N_k$, relative to the total size of $\mathcal{G}_k$ and $\mathcal{G}_k$'s initial control group, $\mathcal{C}_{k,1}$, which is $\frac{N_k}{N_k+N_{\mathcal{C}_{k,1}}}=\frac{N_k}{\sum_{j=k}^{G} N_j + N_\infty}$. Finally, the relative period for each adjustment cohort, $s_k(t) = t - t_k+1$, ensures its block bias is evaluated at the same calendar time $t$ as the overall bias it adjusts.
\begin{figure}[!h]
    \centering
    \includegraphics[width=1\linewidth]{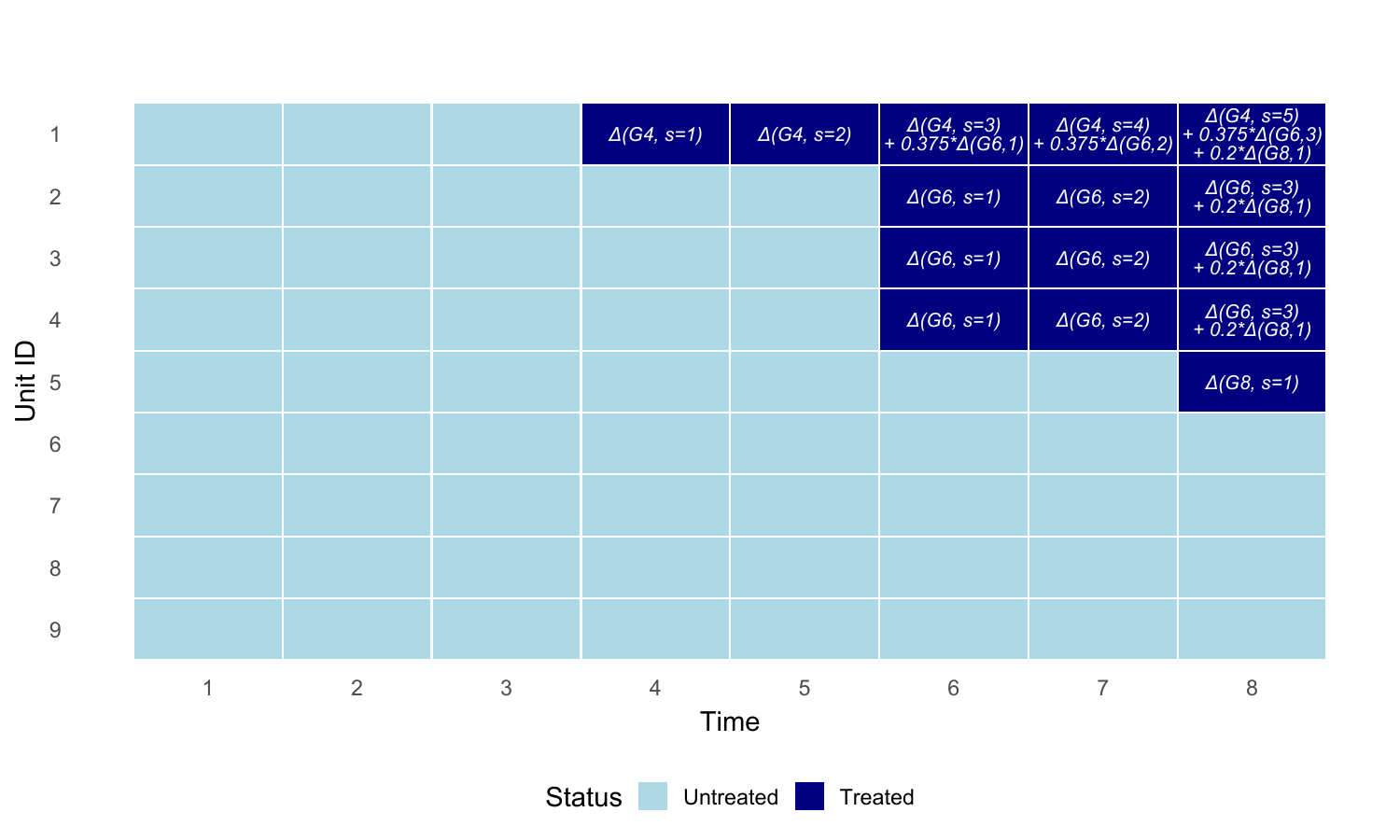}
    \caption{Bias Decomposition of the Imputation Estimator}
    \label{fig:bias_decompose}
    \vspace{0.5em} 
    \parbox{\linewidth}{\footnotesize%
        \textit{Notes:} The figure illustrates the bias decomposition from Proposition \ref{prop:bias-decompose-impute} using an example with three treated cohorts ($\mathcal{G}_4$ at $t=4$, $\mathcal{G}_6$ at $t=6$, $\mathcal{G}_8$ at $t=8$) and a never-treated group (units 6-9). Each treated (dark blue) cell shows how the overall bias ($\delta$) for a given cohort-period is the sum of the cohort's own block bias ($\Delta$) and the weighted block biases of later-adopting cohorts.
    }
\end{figure}

To illustrate the bias-decomposition formula from Proposition~\ref{prop:bias-decompose-impute}, Figure~\ref{fig:bias_decompose} presents an illustrative panel with three treated cohorts and one never-treated group. Cohort $\mathcal{G}_4$ (unit 1) is treated at $t=4$, Cohort $\mathcal{G}_6$ (units 2–4) at $t=6$, and Cohort $\mathcal{G}_8$ (unit 5) at $t=8$. Each dark-blue cell denotes a treated observation, and the expression inside displays how the overall bias, $\delta^{\text{Imp}}$, for that observation's cohort, is decomposed into its constituent block biases, $\Delta^{\text{Imp}}$.

In each cohort’s first post-treatment period, as Lemma~\ref{lemma:initial_period_equivalence} suggests, the overall bias equals the cohort’s own block bias. In later periods, the overall bias for early adopters becomes the sum of their own block bias and the block biases from their adjustment cohorts. For instance, at $t=6$, Cohort $\mathcal{G}_6$ begins to receive treatment, and the overall bias for Cohort $\mathcal{G}_4$ becomes $\delta^{\text{Imp}}_{\mathcal{G}_4,s=3} = \Delta^{\text{Imp}}_{\mathcal{G}_4,s=3} + 0.375\,\Delta^{\text{Imp}}_{\mathcal{G}_6,s=1}$. The weight of 0.375 is determined by Cohort $\mathcal{G}_6$’s size ($N_{6}=3$) relative to the total size of Cohort $\mathcal{G}_6$ and its initial control group, calculated as $0.375 = \frac{N_{6}}{N_{6} + N_{8} + N_\infty} = \frac{3}{3 + 1 + 4}$. Similarly, when $\mathcal{G}_8$ is treated at $t=8$, its block bias, $\Delta^{\text{Imp}}_{\mathcal{G}_8,s=1}$, contributes to the overall bias of $\mathcal{G}_4$ ($\delta^{\text{Imp}}_{\mathcal{G}_4,s=5}$) and $\mathcal{G}_6$ ($\delta^{\text{Imp}}_{\mathcal{G}_6,s=3}$) with a weight of $\frac{N_8}{N_8 + N_\infty} = \frac{1}{1 + 4} = 0.2$.

The bias decomposition for the \textit{CS-NYT} estimator takes a slightly different form, as stated in Proposition~\ref{prop:bias-decompose-CS}. This result is also a variant to the Corollary 3 in \citet{aguilar2025comparative}, albeit with a different formulation and proof.

\begin{proposition}[Bias Decomposition for the \textit{CS-NYT} Estimator]
\label{prop:bias-decompose-CS}
Let $t = t_g + s - 1$ be the calendar time corresponding to post-treatment period $s\geq 1$ for cohort $\mathcal{G}_g$. Then the overall bias $\delta^{\text{CS-NYT}}_{\mathcal{G}_g,s}$ satisfies
\[
\delta^{\text{CS-NYT}}_{\mathcal{G}_g,s} = \Delta^{\text{CS-NYT}}_{\mathcal{G}_g,s} + \sum_{k \in \mathcal{K}_g(t)} \left( \frac{N_k}{\sum_{j=k}^{G} N_j + N_\infty} \right) (\Delta^{\text{CS-NYT}}_{\mathcal{G}_k,s_k(t)}-\Delta^{\text{CS-NYT}}_{\mathcal{G}_k,s_k(t_g-1)})
\]
where the set of adjustment cohorts is $\mathcal{K}_g(t) = \{k \mid t_g < t_k \le t\}$, and the relative periods for cohort $\mathcal{G}_k$ are $s_k(t) = t - t_k + 1$ and $s_k(t_g-1) = (t_g -1) - (t_k - 1)=t_g-t_k$.
\end{proposition}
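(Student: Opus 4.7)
The plan is to show that $\delta^{\text{CS-NYT}}_{\mathcal{G}_g,s} - \Delta^{\text{CS-NYT}}_{\mathcal{G}_g,s}$ equals the claimed weighted sum of block-bias differences. First I would compare the definitions in \eqref{eq:CS-overall-bias} and \eqref{eq:block-bias-cs}: the only difference is that the overall bias uses the shrinking not-yet-treated control $\mathcal{C}_{g,s}$, whereas the block bias uses the fixed initial control $\mathcal{C}_{g,1}$. Setting $t = t_g+s-1$ and writing $T(\mathcal{C}) := \mathbb{E}[Y_{i,t}(0) - Y_{i,t_g-1}(0) \mid i \in \mathcal{C}]$ for any control set $\mathcal{C}$, the $\mathcal{G}_g$-specific terms cancel and the difference collapses to $\delta^{\text{CS-NYT}}_{\mathcal{G}_g,s} - \Delta^{\text{CS-NYT}}_{\mathcal{G}_g,s} = T(\mathcal{C}_{g,1}) - T(\mathcal{C}_{g,s})$. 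The task is then to rewrite this trend gap as a sum over adjustment cohorts.

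Next, I would construct a telescoping sequence of intermediate control groups. Order the adjustment cohorts by treatment date, $k_1 < \cdots < k_m$, and define $\tilde{\mathcal{C}}_\ell := \mathcal{C}_{g,1} \setminus \bigcup_{i=1}^\ell \mathcal{G}_{k_i}$, so $\tilde{\mathcal{C}}_0 = \mathcal{C}_{g,1}$ and $\tilde{\mathcal{C}}_m = \mathcal{C}_{g,s}$. The critical observation is that for every $\ell \geq 1$, the intermediate group coincides with the initial control group of the $\ell$-th adjustment cohort: $\tilde{\mathcal{C}}_\ell = \mathcal{C}_{k_\ell,1}$. This holds because the set of cohorts with $t_j > t_{k_\ell}$ is exactly $\mathcal{C}_{g,1}$ with $\mathcal{G}_{k_1}, \ldots, \mathcal{G}_{k_\ell}$ removed. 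This identification is the conceptual bridge that links the intermediate quantities arising from telescoping to the very block biases appearing in the statement.

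Third, I would telescope. With $T_\ell := T(\tilde{\mathcal{C}}_\ell)$ and $Y_{k_\ell} := \mathbb{E}[Y_{i,t}(0) - Y_{i,t_g-1}(0) \mid i \in \mathcal{G}_{k_\ell}]$, writing $T_{\ell-1}$ as the cohort-size-weighted mean of $T_\ell$ and $Y_{k_\ell}$ and rearranging yields
\[
T_{\ell-1} - T_\ell \;=\; \frac{N_{k_\ell}}{N_{k_\ell} + N_{\mathcal{C}_{k_\ell,1}}} \bigl( Y_{k_\ell} - T_\ell \bigr),
\]
whose prefactor is exactly the proposition's weight $\frac{N_k}{\sum_{j=k}^{G} N_j + N_\infty}$. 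Because $T_\ell = T(\mathcal{C}_{k_\ell,1})$, substituting the block-bias definition \eqref{eq:block-bias-cs} at the two relative periods $s_{k_\ell}(t) = t - t_{k_\ell}+1$ and $s_{k_\ell}(t_g-1) = t_g - t_{k_\ell}$ shows that the bracketed term equals $\Delta^{\text{CS-NYT}}_{\mathcal{G}_{k_\ell}, s_{k_\ell}(t)} - \Delta^{\text{CS-NYT}}_{\mathcal{G}_{k_\ell}, s_{k_\ell}(t_g-1)}$: the $\mathbb{E}[Y_{i,t_{k_\ell}-1}(0)\mid\cdot]$ reference terms cancel across the difference, leaving precisely $Y_{k_\ell} - T_\ell$. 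Summing over $\ell = 1, \ldots, m$ telescopes $T_0 - T_m$ into the claimed formula.

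The main obstacle I expect is the identification $\tilde{\mathcal{C}}_\ell = \mathcal{C}_{k_\ell,1}$—recognizing that the natural telescoping order (peeling off adjustment cohorts in order of adoption) traces exactly through the initial control groups used to define the block biases. Once that link is in place, the rest is algebra on weighted averages of conditional expectations. A minor technicality to flag in the write-up is that expectations must be taken in terms of potential outcomes $Y_{it}(0)$ rather than observed outcomes, since for $\ell \geq 1$ some units in $\tilde{\mathcal{C}}_\ell$ are themselves treated by calendar time $t$; this is consistent with the definition \eqref{eq:block-bias-cs}, so no further accommodation is needed.
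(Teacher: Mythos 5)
Your proposal is correct. It shares the paper's opening move---reducing $\delta^{\text{CS-NYT}}_{\mathcal{G}_g,s} - \Delta^{\text{CS-NYT}}_{\mathcal{G}_g,s}$ to the trend gap $T(\mathcal{C}_{g,1}) - T(\mathcal{C}_{g,s})$ between the initial and current control groups---but from there it takes a genuinely different and leaner route. The paper splits $\mathcal{C}_{g,1}$ into $\mathcal{C}_{g,s}$ and all adjustment cohorts at once, which yields weights $v_{g,k}=N_k/N_{\mathcal{C}_{g,1}}$ measured relative to $\mathcal{G}_g$'s own control group; it must then expand recursively (each adjustment cohort's trend gap relative to $\mathcal{C}_{g,s}$ spawns further such gaps), derive the coefficient recursion $C_{g,k}=v_{g,k}+\sum_{j:\,t_g<t_j<t_k} v_{g,j}C_{j,k}$, and prove by reverse induction that its unique solution is $w_k$. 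Your telescoping through the nested chain $\mathcal{C}_{g,1}=\tilde{\mathcal{C}}_0\supset\tilde{\mathcal{C}}_1\supset\cdots\supset\tilde{\mathcal{C}}_m=\mathcal{C}_{g,s}$, together with the observation that $\tilde{\mathcal{C}}_\ell=\mathcal{C}_{k_\ell,1}$ (which holds because adoption times are strictly ordered, so peeling off $\mathcal{G}_{k_1},\dots,\mathcal{G}_{k_\ell}$ leaves exactly the cohorts treated after $t_{k_\ell}$ plus the never-treated), makes the correct weight $N_{k_\ell}/(N_{k_\ell}+N_{\mathcal{C}_{k_\ell,1}})=w_{k_\ell}$ appear in closed form at each step, so the paper's coefficient-collection and induction machinery is not needed at all. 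The remaining ingredient---that $Y_{k_\ell}-T_\ell$ equals $\Delta^{\text{CS-NYT}}_{\mathcal{G}_{k_\ell},s_{k_\ell}(t)}-\Delta^{\text{CS-NYT}}_{\mathcal{G}_{k_\ell},s_{k_\ell}(t_g-1)}$ because the reference-period terms at $t_{k_\ell}-1$ cancel---is the same realignment step the paper performs. What your approach buys is brevity and a one-line derivation of the weight; what the paper's buys is a recursive structure that mirrors its treatment of the imputation estimator, where the sequential-imputation interpretation makes the recursion the natural object. Your flag about working with potential outcomes $Y_{it}(0)$ throughout is the right technicality to note and is handled by the definitions.
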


\noindent\textit{Proof.} See Appendix~\ref{proof:bias-decompose-CS}.

While the definitions of the bias terms ($\delta^{\text{CS-NYT}}$ and $\Delta^{\text{CS-NYT}}$) differ from the imputation estimator, the weights on the adjustment terms, $\frac{N_k}{\sum_{j=k}^{G} N_j + N_\infty}$, are identical. The distinction lies in the structure of the adjustment term. For the imputation estimator, the adjustment is a single block bias, $\Delta^{\text{Imp}}_{\mathcal{G}_k,s_k}$. For the \textit{CS-NYT} estimator, the adjustment term from $\mathcal{G}_k$ is the \textit{difference} between two of its block biases: $\Delta^{\text{CS-NYT}}_{\mathcal{G}_k,s_k(t)} - \Delta^{\text{CS-NYT}}_{\mathcal{G}_k,s_k(t_g-1)}$. This structure arises because the block biases of $\mathcal{G}_g$ and $\mathcal{G}_k$ are measured with respect to different reference periods. The block bias for cohort $\mathcal{G}_g$ is defined relative to its reference period ($t_g-1$), while the block bias for a later cohort $\mathcal{G}_k$ is defined relative to $\mathcal{G}_k$'s reference period ($t_k-1$). To make them comparable, the adjustment term for cohort $\mathcal{G}_k$ must be re-calibrated to cohort $\mathcal{G}_g$'s baseline. The term $\Delta^{\text{CS-NYT}}_{\mathcal{G}_k,s_k(t_g-1)}$ serves as this baseline correction. Since $s_k(t_g-1)=t_g-t_k<0$, this correction term is simply a pre-treatment block bias of $\mathcal{G}_k$. Subtracting this term isolates the component of the block bias from $\mathcal{G}_k$ that is properly aligned with $\mathcal{G}_g$'s block bias.

A key component of the bias decompositions in Propositions~\ref{prop:bias-decompose-impute} and \ref{prop:bias-decompose-CS} is the weights placed on the adjustment cohort, $\mathcal{K}_g(t)$. The form, $w_k=\frac{N_k}{\sum_{j=k}^G N_j+N_{\infty}}$, implies that in $\mathcal{K}_g(t)$, (i) cohorts with larger population sizes $N_k$ receive larger weights; and (ii) with fixed cohort sizes, later-treated cohorts (larger $k$) in $\mathcal{K}_g(t)$ receive larger weights, since $\sum_{j=k}^G N_j$ decreases in $k$. The identical weights in both propositions highlight a similarity between the two estimators. Although they take different forms and have different implementations, both the imputation estimator and the \textit{CS-NYT} estimator compare the trend of a treated cohort to that of its not-yet-treated control group. The primary distinction lies in how they use pre-treatment information: the imputation estimator uses the average of outcomes across all pre-treatment periods as the reference, while the \textit{CS-NYT} estimator uses only the last pre-treatment period as the reference period. \citet{chen2025efficient} provides a detailed discussion on the differences between these two estimators from this perspective.

Both propositions show that the overall bias of any post-treatment cohort-period cell can be expressed as a linear combination of block biases. For notational simplicity, I extend this relationship across all periods by defining the overall bias in pre-treatment periods to be equal to the block bias in the same periods. With this convention in place, the entire system of biases can be written compactly in matrix form. The stacked vectors of overall biases ($\vec{\delta}$) and block biases ($\vec{\Delta}$) are then related by the invertible linear mapping: $\vec{\delta} = \mathbf{W}\vec{\Delta}$.

Order the cohort–period cells $(g,s)$ by increasing calendar time $t=t_g+s-1$ and, within each $t$, by increasing adoption time $t_g$. Under this stacking, the mapping matrix $\mathbf W$ is block diagonal for the imputation estimator and block triangular for the \textit{CS-NYT} estimator across calendar times:
\[
\mathbf W =
\begin{cases}
\mathrm{diag}(A_1,\dots,A_T), & \text{(Imputation)}\\[6pt]
\begin{bmatrix}
A_1 & 0   & \cdots & 0\\
*   & A_2 & \ddots & \vdots\\
\vdots & \ddots & \ddots & 0\\
* & \cdots & * & A_T
\end{bmatrix}, & \text{(CS-NYT),}
\end{cases}
\]
where for each $t$ the diagonal block $A_t$ is unit upper–triangular. Hence $\det(A_t)=1$ for all $t$, and by the block–triangular determinant identity, $\det(\mathbf W)=\prod_{t=1}^T \det(A_t)=1\neq 0$. Therefore $\mathbf W$ is square and invertible (with $\mathbf W^{-1}$ block–diagonal in the imputation case and block lower–triangular in the \textit{CS-NYT} case). Appendix~\ref{sec:appendix_bias_decompose_toy} provides the explicit form of the $\mathbf W$ matrix for the \textit{CS-NYT} estimator as applied to the toy example in the introduction.

\subsection{Intuition of the Bias Decomposition}

The intuition behind the bias decomposition for the imputation estimator becomes clear by conceptualizing it as a \textit{sequential procedure} that constructs the counterfactual panel one relative period at a time. The equivalence between the imputation estimator and sequential procedures is studied by \citet{arkhangelsky2024sequential} that considers a more general form with unit- or period-specific weights. My derivation of the bias decomposition follows this sequential approach, providing an alternative perspective to the direct matrix algebra used in \citet{aguilar2025comparative}.

\begin{figure}[!h]
    \centering
    \includegraphics[width=1\linewidth]{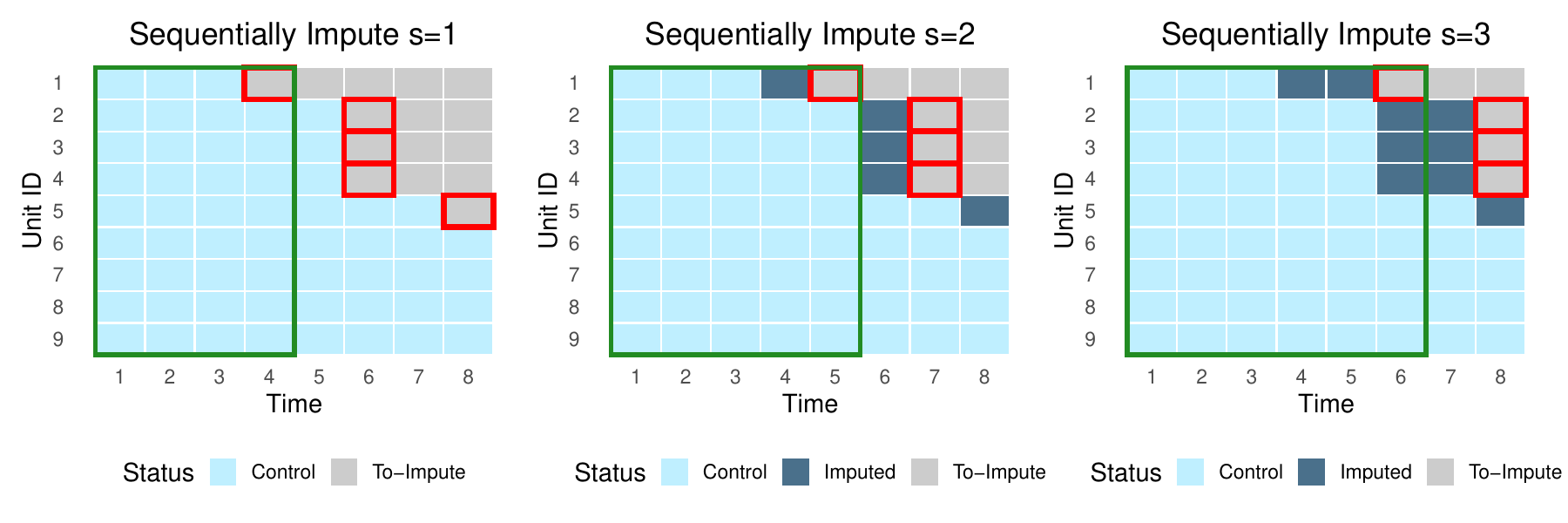}
    \caption{Sequential Procedure of the Imputation Estimator}
    \label{fig:seq_impute}
    \vspace{0.5em} 
    \parbox{\linewidth}{\footnotesize%
        \textit{Notes:} The figure visualizes the imputation estimator as a sequential procedure that proceeds one relative period at a time. In each step (panel), the counterfactuals for the cells to be imputed (gray with red outlines) are calculated using all available control (light blue) and previously imputed (dark blue) data. This illustrates how imputations for early-adopting cohorts in later periods (e.g., unit 1 at $t=6$) depend on the imputed values of later-adopting cohorts (e.g., units 2-4 at $t=6$). The green rectangle marks the block comprising unit 1 (cohort $\mathcal{G}_{4}$) and its initial control group, which is used to construct the imputed counterfactual for unit 1 in the corresponding post-treatment period.
    }
\end{figure}

Figure~\ref{fig:seq_impute} illustrates this sequential procedure with the example in Figure~\ref{fig:bias_decompose}. Imagine starting with a data matrix containing only the observed outcomes of untreated cells (light blue). The procedure begins by imputing all cells corresponding to the first post-treatment period, $s=1$ (highlighted in red in the left panel). For each such cohort-period cell, the counterfactual is imputed using a block structure that consists of the cohort and its initial controls, utilizing data from the current and all prior periods (marked by the green rectangle). For example, the counterfactual for unit 1 (cohort $\mathcal{G}_4$) at $t=4$ is constructed as $\hat{Y}_{1,4}(0) = \bar{Y}_{1,\text{pre}_4}(0) + [\bar{Y}_{\mathcal{C}_{4,1},4}(0) - \bar{Y}_{\mathcal{C}_{4,1},\text{pre}_4}(0)]$, where $\text{pre}_4 = \{1,2,3\}$ and the initial control group $\mathcal{C}_{4,1}$ consists of units 2 through 9. After this step, the imputed values for all $s=1$ cells fill their corresponding entries in the matrix.

The procedure continues by imputing cells for $s=2$ (middle panel), then $s=3$ (right panel), and so on. The key feature of these later rounds is that the construction of counterfactuals may rely on previously imputed values. For instance, to impute the counterfactual for unit 1 at $t=6$ (in the $s=3$ round), the procedure uses the formula: $\hat{Y}_{1,6}(0)=\bar{Y}_{1,\text{pre}_6}(0)+[\bar{Y}_{\mathcal{C}_{4,1},6}(0) - \bar{Y}_{\mathcal{C}_{4,1},\text{pre}_6}(0)]$ where $\text{pre}_6 = \{1,2,3,4,5\}$. However, units 2, 3, and 4 are treated at this time, the calculation of $\bar{Y}_{\mathcal{C}_{4,1},6}(0)$ thus uses their imputed counterfactuals---$\hat{Y}_{2,6}(0), \hat{Y}_{3,6}(0),$ and $\hat{Y}_{4,6}(0)$---which were already imputed during the $s=1$ round. Similarly, the calculation of $\bar{Y}_{1,\text{pre}_6}(0)$ also uses values for unit 1 that were imputed during the earlier $s=1$ and $s=2$ rounds.

This iterative process continues over all cohort-period cells from $s=1$ to the largest post-treatment relative periods. I prove in Appendix~\ref{prop:impute-seq-impute-equivalence} that this sequential procedure yields results that are numerically equivalent to the standard imputation estimator that solves for and uses all fixed effects simultaneously.

The equivalence between the imputation estimator and the sequential procedure implies that every cohort-period level imputed counterfactual can be expressed as a DiD-style comparison within the block structure formed by the cohort and its initial controls. Specifically, for any post-treatment cell $(g,s)$ at calendar time $t=t_g+s-1$, the average imputed counterfactual can be constructed with:
$$\overline{\hat{Y}}_{\mathcal{G}_g, t}(0) = \bar{Y}^*_{\mathcal{G}_g, \text{pre}_t}(0) + [\bar{Y}^*_{\mathcal{C}_{g,1}, t}(0) - \bar{Y}^*_{\mathcal{C}_{g,1}, \text{pre}_t}(0)]$$
where the terms on the right-hand side are: (i) $\bar{Y}^*_{\mathcal{G}_g, \text{pre}_t}(0)$ is the average untreated outcome for cohort $\mathcal{G}_g$ over all prior periods; (ii) $\bar{Y}^*_{\mathcal{C}_{g,1}, t}(0)$ is the average untreated outcome for its initial control group at the current time $t$; and (iii) $\bar{Y}^*_{\mathcal{C}_{g,1}, \text{pre}_t}(0)$ is the average untreated outcome for the initial control group over all prior periods. The star ($^*$) signifies that these averages are taken over observed untreated outcomes or, if unavailable, over counterfactuals imputed in previous imputation rounds.

The overall bias for a post-treatment cell $(g,s)$ with calendar time $t$ is the expected difference between the true and imputed counterfactuals for cohort $\mathcal{G}_g$. Its finite-sample expression is:
$$\bar{Y}_{\mathcal{G}_g,t}(0)-\overline{\hat{Y}}_{\mathcal{G}_g, t}(0) = [\bar{Y}_{\mathcal{G}_g,t}(0)-\bar{Y}^*_{\mathcal{G}_g, \text{pre}_t}(0)] - [\bar{Y}^*_{\mathcal{C}_{g,1}, t}(0) - \bar{Y}^*_{\mathcal{C}_{g,1}, \text{pre}_t}(0)]$$In contrast, the finite-sample expression for the block bias is:$$[\bar{Y}_{\mathcal{G}_g,t}(0)-\bar{Y}_{\mathcal{G}_g, \text{pre}_g}(0)] - [\bar{Y}_{\mathcal{C}_{g,1}, t}(0) - \bar{Y}_{\mathcal{C}_{g,1}, \text{pre}_g}(0)]$$
Comparing these two expressions reveals two key distinctions: the overall bias calculation uses different reference periods ($\text{pre}_t$ vs. $\text{pre}_g$) and relies on potentially imputed counterfactuals of $\mathcal{G}_g$ and its initial control group (indicated by the notation $^*$). As shown in the proof, the difference in reference periods cancels out algebraically. The remaining difference—the use of \textit{previously imputed values}—is what introduces the biases from the adjustment cohorts $\mathcal{K}_g(t)$. Consequently, the overall bias for $\mathcal{G}_g$ is the sum of its own block bias and the weighted sum of biases inherited from these adjustment cohorts.

Crucially, the biases ``inherited" from the adjustment cohorts are the \textit{overall biases} of those cohorts at calendar time $t$. The weight placed on each adjustment cohort, given by $\frac{N_k}{N_{\mathcal{C}_{g,1}}}$, is its share within $\mathcal{G}_g$'s initial control group, reflecting the importance of $\mathcal{G}_k$ in constructing $\mathcal{G}_g$'s potential outcome. This relationship provides a decomposition expressing the overall bias of cohort $\mathcal{G}_g$ as its own block bias plus the weighted overall biases from adjustment cohorts. This kind of decomposition is then applied sequentially from the earliest-treated to the latest-treated cohort. Since the overall bias of each adjustment cohort can, in turn, be decomposed into its own block bias and the overall biases of its own adjustment cohorts, this process creates a nested, recursive structure that ultimately expresses any overall bias as a linear combination of only block biases.

Solving this nested recursion yields the final decomposition: any cohort's overall bias is the sum of its own block bias plus a linear combination of the block biases from its adjustment cohorts. The weights in this combination are given by $w_k = \frac{N_k}{\sum_{j=k}^G N_j + N_\infty}$. This weight, $w_k$, represents the total contribution of the block bias from a later cohort, $\mathcal{G}_k$, to the overall bias of an earlier cohort, $\mathcal{G}_g$. It captures both direct and indirect paths of influence from $\mathcal{G}_k$ that are mediated through the overall biases of intermediate cohorts during the sequential imputation.

The example in Figure~\ref{fig:seq_impute} makes this logic concrete. At $t=8$, we have the relative periods for each cohort as $s_8(8)=1$, $s_6(8)=3$, and $s_4(8)=5$, so
\[
\delta^{\text{Imp}}_{\mathcal G_6,3}
= \Delta^{\text{Imp}}_{\mathcal G_6,3}
+ w_8\,\Delta^{\text{Imp}}_{\mathcal G_8,1},
\qquad
\delta^{\text{Imp}}_{\mathcal G_4,5}
= \Delta^{\text{Imp}}_{\mathcal G_4,5}
+ w_6\,\Delta^{\text{Imp}}_{\mathcal G_6,3}
+ w_8\,\Delta^{\text{Imp}}_{\mathcal G_8,1}.
\]
With cohort sizes $(N_6,N_8,N_\infty)=(3,1,4)$,
\[
w_8=\frac{N_8}{N_8+N_\infty}=\frac{1}{1+4}=0.2,
\qquad
w_6=\frac{N_6}{N_6+N_8+N_\infty}=\frac{3}{3+1+4}=0.375.
\]
For $\delta^{\text{Imp}}_{\mathcal G_4,5}$, the contribution of bias from $\mathcal G_8$ can be represented as the sum of a direct path and an indirect path via $\mathcal G_6$:
\[
\underbrace{\frac{N_8}{N_6+N_8+N_\infty}}_{\text{direct at }t}
\;+\;
\underbrace{\frac{N_6}{N_6+N_8+N_\infty}\cdot \frac{N_8}{N_8+N_\infty}}_{\text{via }\mathcal G_6}
\;=\;
\frac{N_8}{N_8+N_\infty}
\,=\, w_8,
\]

Here, the indirect path captures the bias transmitted from $\mathcal{G}_8$ to $\mathcal{G}_4$ through the intermediate imputation of $\mathcal{G}_6$'s counterfactual. Its weight is the product of two components. The first fraction, $\frac{N_6}{N_6+N_8+N_\infty}$, is the share of cohort $\mathcal{G}_6$ in $\mathcal{G}_4$'s initial control group, representing the weight placed on $\mathcal{G}_6$ when imputing for $\mathcal{G}_4$. The second fraction, $\frac{N_8}{N_8+N_\infty}$, is the weight of $\mathcal{G}_8$ when imputing for $\mathcal{G}_6$. As the equation shows, the sum of this indirect path and the direct path simplifies to the total weight, $w_8$.

For the \textit{CS-NYT} estimator, the mechanism is analogous but more immediate. The overall bias for an early-adopting cohort $\mathcal{G}_g$ compares its trend to that of its not-yet-treated control group, $\mathcal{C}_{g,s}$, while its block bias compares $\mathcal{G}_g$ to its full initial control group, $\mathcal{G}_{g,1}$. Both biases are defined with respect to the reference period $t_g-1$. The difference between these two biases is therefore driven entirely by the adjustment cohorts $\mathcal{K}_g(t)$—that is, the cohorts that have ``dropped out" of the initial control group by becoming treated.

The difference arising from the ``drop-outs" can be expressed as a weighted sum over these adjustment cohorts. Each term in the sum is the trend difference between an adjustment cohort ($\mathcal{G}_k$) and the currently not-yet-treated control group. Crucially, this not-yet-treated control group is the same for both $\mathcal{G}_g$ and $\mathcal{G}_k$ at given time $t$. The weight for each term is the adjustment cohort's share within the initial control group of $\mathcal{G}_g$.

This step reveals a recursive structure analogous to that of the imputation estimator: the overall bias of an early-adopting cohort can be written as its own block bias plus weighted sum of trends difference between its adjustment cohorts and the currently not-yet-treated group. Different from the imputation estimator, this trend difference of an adjustment cohort $\mathcal{G}_k$ is measured relative to cohort $\mathcal{G}_g$'s reference period rather than $\mathcal{G}_k$'s reference period, so it cannot be interpreted as the overall bias of cohort $\mathcal{G}_k$. This mismatch necessitates a correction to realign the reference periods, which is why the adjustment term in proposition \ref{prop:bias-decompose-CS} takes the form of a difference between two block biases. Once this realignment correction is introduced, one can solve this recursive structure and express the overall bias as the linear combination of block biases.

\subsection{Estimation of Pre-Treatment Block Biases}

Having established that the overall bias is a linear combination of block biases, the goal of robust inference is to use the observable pre-treatment block biases to restrict their unobservable post-treatment counterparts. These restrictions on block biases can then be translated into restrictions on the overall biases. The first step in this process is to estimate the block biases in the pre-treatment periods.

The pre-treatment block biases for both the \textit{CS-NYT} and imputation estimators are estimated by directly applying their respective definitions to the observed data. The calculation of block biases for each estimator is therefore comparing cohort $\mathcal{G}_g$'s outcome in each pre-treatment period to that of its initial control group $\mathcal{C}_{g,1}$, using either the last pre-treatment period as the reference period (\textit{CS-NYT}) or the average outcome across all pre-treatment periods as reference (imputation estimator).
\begin{align*}
\hat{\Delta}^{\text{CS-NYT}}_{\mathcal{G}_g,s} &= \frac{1}{N_{g}}\sum_{i \in \mathcal{G}_{g}}[Y_{i,t_{g}+s-1}-Y_{i,t_g-1}]-\frac{1}{N_{\mathcal{C}_{g,1}}}\sum_{i \in \mathcal{C}_{g,1}}[Y_{i,t_{g}+s-1}-Y_{i,t_g-1}] \\
\hat{\Delta}^{\text{Imp}}_{\mathcal{G}_g,s} &= \frac{1}{N_{g}}\sum_{i \in \mathcal{G}_{g}}[Y_{i,t_{g}+s-1}-\bar{Y}_{i,\text{pre}_{g}}]-\frac{1}{N_{\mathcal{C}_{g,1}}}\sum_{i \in \mathcal{C}_{g,1}}[Y_{i,t_{g}+s-1}-\bar{Y}_{i,\text{pre}_{g}}]
\end{align*}

The \texttt{did} package \citep{didpackage} incorporates the estimated pre-treatment block biases for the \textit{CS-NYT} estimator as part of its output and provides ways to visualize them. In contrast, to my knowledge, no statistical program directly reports the corresponding pre-treatment block biases for the imputation estimator, let alone uses them to visualize the pre-trend for each cohort.

For the imputation estimator, the estimated pre-treatment block biases are not constructed strictly symmetrically with the post-treatment effects (in the sense of \citet{Roth2024interpret}). To illustrate, consider the latest treated cohort, $\mathcal{G}_G$. For this cohort, the initial control group is the never-treated cohort, $\mathcal{G}_{\infty}$, and its overall bias is equal to its block bias in all post-treatment periods. Therefore, both its pre-treatment block biases and its post-treatment estimated effects for period $s$ can be expressed by the following formula: $\frac{1}{N_{G}}\sum_{i \in \mathcal{G}_{G}}[Y_{i,t_{G}+s-1}-\bar{Y}_{i,\text{pre}_{G}}]-\frac{1}{N_{\infty}}\sum_{i \in \mathcal{G}_{\infty}}[Y_{i,t_{G}+s-1}-\bar{Y}_{i,\text{pre}_{G}}]$. The asymmetry arises from the construction of the pre-treatment average, $\bar{Y}_{i,\text{pre}_{G}}$. For pre-treatment periods ($s \le 0$), the outcome $Y_{i,t_{G}+s-1}$ is included in the calculation of $\bar{Y}_{i,\text{pre}_{G}}$. For post-treatment periods ($s > 0$), it is not.

Despite this asymmetry in construction, the block bias can still serve as the building block for robust inference. First, the interpretation of the pre- and post-treatment block biases is still the same, as both compare the treated cohort to its initial controls relative to the same pre-treatment baseline. Second, for the RM and SD restriction sets considered in this paper, the benchmark is constructed using differences in block biases between consecutive or multiple periods. The asymmetry in construction will not distort these differences between periods. Appendix~\ref{sec:alt_estimation} discusses alternative procedures for estimating pre-trends for the imputation estimator and their relationship with the pre-treatment block bias.

To facilitate a direct comparison between the cohort-anchored and aggregated frameworks in later sections, I construct aggregated pre-treatment coefficients from the estimated block biases for both estimators. The procedure involves aggregating the pre-treatment block biases by relative time period, using weights proportional to each cohort's size. The \texttt{did} package provides similar functionality for the \textit{CS-NYT} estimator.

For inference, I compute standard errors and the full variance-covariance (VCOV) matrix for all cohort-period coefficients (including the estimated pre-treatment block biases and post-treatment treatment effects for each cohort-period cell) using a \textit{stratified cluster bootstrap}. The procedure involves resampling units with replacement from within their original cohorts, thereby preserving the fixed cohort structure of the data. When a unit is selected, its entire time-series of observations is included in the bootstrap sample, which accounts for potential serial correlation in the errors. On each bootstrap sample, the full set of pre-treatment block biases and post-treatment treatment effects is re-estimated.

\section{The Robust Inference Framework}
\label{sec:inference_framework}

The previous section has shown that the overall bias vector of the imputation estimator can be written as a linear transformation of the block bias vector, $\vec{\delta} = \mathbf{W}\vec{\Delta}$, and that the pre-treatment block biases can be estimated. Armed with these results, I now formally introduce the cohort-anchored robust inference framework. I will focus the main analysis on the imputation estimator, and the framework also works for the \textit{CS-NYT} estimator once the definitions of overall bias and block bias are changed. Accordingly, for notational simplicity, I will omit the $\text{Imp}$ superscript from the overall and block bias terms unless I am explicitly discussing the differences between the two estimators.

First, I present the general setup and then describe how to impose credible restrictions on the block bias vector, $\vec{\Delta}$. I focus on two common restrictions introduced by \RR—Relative Magnitudes (RM) and Second Differences (SD). Next, I adapt the machinery of \RR to my setting and discuss the algorithm and computational considerations. Finally, I examine the respective roles of identification and statistical uncertainty.

\subsection{Setup}
The cohort–anchored robust inference framework operates on cohort–period estimates of the imputation estimator. For any treated cohort $\mathcal{G}_{g}$, the pre–treatment coefficients ($s \leq 0$) are the estimated block biases, $\hat{\Delta}_{\mathcal{G}_{g},s}$, as defined in the previous section:
\[
\hat{\beta}_{\mathcal{G}_{g}}^{s}=\hat{\Delta}_{\mathcal{G}_{g},s}=\frac{1}{N_{g}}\sum_{i \in \mathcal{G}_{g}}\!\bigl[Y_{i,t_{g}+s-1}-\bar{Y}_{i,\text{pre}_{g}}\bigr]-\frac{1}{N_{\mathcal{C}_{g,1}}}\sum_{i \in \mathcal{C}_{g,1}}\!\bigl[Y_{i,t_{g}+s-1}-\bar{Y}_{i,\text{pre}_{g}}\bigr],
\]
where $\mathcal{C}_{g,1}$ is the initial control group for $\mathcal{G}_{g}$ and $\text{pre}_{g}=\{1,\ldots,t_g-1\}$ denotes its pre–treatment periods. For post–treatment periods ($s \geq 1$), the coefficients are the estimated cohort–period ATTs obtained from the imputation estimator:
\[
\hat{\beta}_{\mathcal{G}_{g}}^{s}=\frac{1}{N_{g}}\sum_{i \in \mathcal{G}_{g}}\!\bigl[Y_{i,t_{g}+s-1}-\hat{Y}_{i,t_{g}+s-1}(0)\bigr],
\]
where $\hat{Y}_{i,t_{g}+s-1}(0)$ is the imputed counterfactual for unit $i$ in period $t_{g}+s-1$.

These coefficients are then stacked into vectors. Let $\vec{\hat{\beta}}_{\mathcal{G}_{g},\text{pre}}$ and $\vec{\hat{\beta}}_{\mathcal{G}_{g},\text{post}}$ be the vectors of pre- and post-treatment coefficients for cohort $\mathcal{G}_{g}$, respectively. These are then stacked across all $G$ cohorts into full pre- and post-treatment vectors:
\[
\vec{\hat{\beta}}_{\text{pre}}=
\bigl[\vec{\hat{\beta}}_{\mathcal{G}_{1},\text{pre}}',\ldots,\vec{\hat{\beta}}_{\mathcal{G}_{G},\text{pre}}'\bigr]'
\quad\text{and}\quad
\vec{\hat{\beta}}_{\text{post}}=
\bigl[\vec{\hat{\beta}}_{\mathcal{G}_{1},\text{post}}',\ldots,\vec{\hat{\beta}}_{\mathcal{G}_{G},\text{post}}'\bigr]'.
\]
By construction, the estimated pre-treatment coefficients are precisely the pre-treatment block biases, so $\vec{\hat{\beta}}_{\text{pre}}=\vec{\hat{\Delta}}_{\text{pre}}$. The full vector of all estimated coefficients is therefore $\vec{\hat{\beta}}=[\vec{\hat{\beta}}_{\text{pre}}',\vec{\hat{\beta}}_{\text{post}}']'$. Let $\vec{\beta}$, $\vec{\beta}_{\text{pre}}$, $\vec{\beta}_{\text{post}}$, and $\vec{\Delta}_{\text{pre}}$ denote the corresponding population analogues. The framework's inferential approach relies on the asymptotic normality of these stacked coefficients, as formalized below.

\begin{assumption}[Asymptotic Normality]
\label{ass:normality}
Let $N$ be the total number of units. Assuming cohort shares remain fixed as $N \to \infty$, the stacked vector of all estimated cohort–period coefficients, $\vec{\hat{\beta}}$, is asymptotically normally distributed:
\[
\sqrt{N}\,(\vec{\hat{\beta}}-\vec{\beta}) \stackrel{d}{\to} \mathcal{N}\!\left(0,\Sigma^*\right),
\]
where $\Sigma^*$ is the asymptotic variance–covariance matrix of the estimator. A consistent estimator $\hat{\Sigma}_N$ of the finite–sample variance approximation $\Sigma_N=\Sigma^*/N$ is available.
\end{assumption}

Let $\tau_{\mathcal{G}_g,s}$ denote the \textit{true treatment effect} for cohort $\mathcal{G}_g$ in relative period $s$, and let $\delta_{\mathcal{G}_g,s}$ be the corresponding overall bias. Stacking these cohort-period level quantities into vectors $\vec{\tau}$ and $\vec{\delta}$, using the same ordering as for $\vec{\beta}$, the full vector of population coefficients, $\vec{\beta}$, decomposes as:
\[
\vec{\beta}
=\binom{\vec{\beta}_{\text{pre}}}{\vec{\beta}_{\text{post}}}=
\underbrace{\binom{\vec{\tau}_{\text{pre}}}{\vec{\tau}_{\text{post}}}}_{=:\,\vec{\tau}}
+
\underbrace{\binom{\vec{\delta}_{\text{pre}}}{\vec{\delta}_{\text{post}}}}_{=:\,\vec{\delta}},
\qquad \text{with } \vec{\tau}_{\text{pre}}=0 \text{ and } \vec{\delta}_{\text{pre}}=\vec{\Delta}_{\text{pre}}.
\]
The restriction $\vec{\tau}_{\text{pre}}=0$ follows from the no-anticipation assumption. This decomposition implies that in pre-treatment periods, the population coefficients equal the block biases ($\vec{\beta}_{\text{pre}} = \vec{\Delta}_{\text{pre}}$), while in post-treatment periods, they are the sum of the true treatment effect and the overall bias ($\vec{\beta}_{\text{post}} = \vec{\tau}_{\text{post}} + \vec{\delta}_{\text{post}}$).

The parameter of interest is typically a linear combination of the true cohort-period level treatment effects $\theta \;=\; \ell'\vec{\tau}_{\text{post}}$, such as the average treatment effect across all cohorts and post–treatment periods or the average treatment effect across all cohorts at a certain relative period $s$.

The robust inference framework uses the observable pre-treatment block biases ($\vec{\beta}_{\text{pre}}=\vec{\Delta}_{\text{pre}}$) to impose restrictions on their unobserved post-treatment counterparts ($\vec{\Delta}_{\text{post}}$). It then connects the full vector of block biases to the overall biases via the bias decomposition $\vec{\delta}=\mathbf{W}\vec{\Delta}$. Formally, this restriction is expressed as the assumption that the full block bias vector lies within a set $\Lambda_{\Delta}$ (i.e., $\vec{\Delta} \in \Lambda_{\Delta}$). Following \RRR, this paper focuses on restriction sets that take the form of a single polyhedron or a union of polyhedra. A restriction set $\Lambda_\Delta$ is polyhedral if it can be written as $\Lambda_\Delta=\{\vec{\Delta} : A \vec{\Delta} \leq d\}$ for some known matrix $A$ and vector $d$.

Similar to \RRR, I formalize the robust inference by first defining the identified set, $\mathcal{S}(\vec{\beta}, \Lambda_{\Delta})$, as the set of all possible values for the target parameter $\theta$ that are consistent with the population coefficients ($\vec{\beta}$) and the assumed restrictions on the block biases ($\Lambda_{\Delta}$):
\[\mathcal{S}(\vec{\beta}, \Lambda_{\Delta}) = \left\{ \theta : \exists \vec{\delta} \text{ s.t. } \theta = \ell'(\vec{\beta}_{\text{post}} - \vec{\delta}_{\text{post}}),\; \vec{\delta}_{\text{pre}} = \vec{\Delta}_{\text{pre}} = \vec{\beta}_{\text{pre}},\;  \vec{\delta}=\mathbf{W}\vec{\Delta}, \text{ and } \vec{\Delta} \in \Lambda_{\Delta} \right\}\]
The identified set for $\theta$ consists of all values of the parameter of interest $\ell'\vec{\tau}_{\text{post}}$ consistent with a plausible overall bias vector $\vec{\delta}$ that satisfies four conditions: (1) the true treatment effect is equal to the post-treatment population coefficients minus the overall bias, $\vec{\tau}_{\text{post}} = \vec{\beta}_{\text{post}} - \vec{\delta}_{\text{post}}$; (2) the pre-treatment overall bias is equal to the pre-treatment block bias and the pre-treatment coefficients, $\vec{\delta}_{\text{pre}}=\vec{\Delta}_{\text{pre}} = \vec{\beta}_{\text{pre}}$; (3) the overall bias $\vec{\delta}$ is a linear transformation of the block biases, $\vec{\delta}=\mathbf{W}\vec{\Delta}$; and (4) the underlying block biases $\vec{\Delta}$ fall within the assumed restriction set, $\vec{\Delta} \in \Lambda_{\Delta}$.

For classes of restrictions that take a polyhedral form or a union of polyhedra, my framework provides a straightforward way to map the restrictions on block biases ($\vec{\Delta}$) to equivalent restrictions on overall biases ($\vec{\delta}$). This is achieved using the invertible linear transformation matrix, $\mathbf{W}$, from the bias decomposition. Specifically, I can define the restriction set for overall biases, $\Lambda_{\delta}$, as the image of the block bias set $\Lambda_{\Delta}$ under the linear map $\mathbf{W}$:
\[
\Lambda_{\delta} := \mathbf{W}\Lambda_{\Delta} = \bigl\{\vec{\delta}:\ \exists\,\vec{\Delta}\in\Lambda_{\Delta}\ \text{s.t.}\ \vec{\delta}=\mathbf{W}\vec{\Delta}\bigr\}.
\]
This transformation preserves the structure of the restriction set. If $\Lambda_{\Delta}$ is a polyhedron defined by $\{\vec{\Delta}:A\vec{\Delta}\le d\}$, then $\Lambda_{\delta}$ is also a polyhedron with the explicit representation $\{\vec{\delta}: A\mathbf{W}^{-1}\vec{\delta}\le d\}$. Similarly, if $\Lambda_{\Delta}$ is a union of polyhedra, $\Lambda_{\Delta}=\bigcup_{k=1}^K\left\{\vec{\Delta}: A_k \vec{\Delta} \leq d_k\right\}$, then $\Lambda_{\delta}$ is also a union of polyhedra given by $\Lambda_\delta = \bigcup_{k=1}^K\left\{\vec{\delta}: A_k \mathbf{W}^{-1} \vec{\delta} \leq d_k\right\}$.

Using this mapped set, the identified set for $\theta$ can be written with an explicit restriction on $\vec{\delta}$:
\[
\mathcal{S}(\vec{\beta},\Lambda_{\delta})
=\Bigl\{\theta:\ \exists\,\vec{\delta}\in\Lambda_{\delta}\ \text{s.t.}\ 
\theta=\ell'(\vec{\beta}_{\text{post}}-\vec{\delta}_{\text{post}}),\ 
\vec{\delta}_{\text{pre}}=\vec{\beta}_{\text{pre}}\Bigr\}.
\]
This formulation is identical to $\mathcal{S}(\vec{\beta},\Lambda_{\Delta})$ because the condition $\vec{\delta}\in\Lambda_\delta$ is equivalent to the simultaneous conditions that $\vec{\delta}=\mathbf{W}\vec{\Delta}$ for $\vec{\Delta}\in\Lambda_{\Delta}$.

It is noteworthy that the definition of the identified set, $\mathcal{S}(\vec{\beta}, \Lambda_{\Delta})$, uses the population coefficients, $\vec{\beta}$, rather than their sample estimates, $\vec{\hat{\beta}}$. The identified set is therefore a conceptual object that captures only identification uncertainty arising from the partial identification of the model; it does not account for the statistical uncertainty inherent in the estimated coefficients.

\subsection{Placing Restrictions on Block Biases}

I now turn to the choice of the restriction set, $\Lambda_{\Delta}$, imposed on the block bias. My discussion focuses on two prominent restrictions introduced by \RRR—the RM restriction and the SD restriction—while noting that my framework is flexible and can readily accommodate other restrictions in \RRR, such as sign and monotonicity constraints, or combinations of these polyhedral restrictions.

\subsubsection{RM Restriction}

The RM restriction formalizes the intuition that while one cannot observe the post-treatment bias, it is unlikely to change more erratically or dramatically than the bias before treatment. As proposed in \RRR, the RM restriction operationalizes this by asserting that the change in violation of PT between any two consecutive post-treatment periods is bounded by a sensitivity parameter ($\overline{M}$) multiplied by the largest change in the violation of PT between any two consecutive pre-treatment periods. By varying $\overline{M}$, researchers can assess how their conclusions depend on the assumed severity of the post-treatment PT violation relative to what is observed in the pre-treatment data.

In the cohort-anchored robust inference framework, I impose the RM restriction directly to the block bias vector, $\vec{\Delta}$. The by-cohort nature of my approach allows for a natural extension that incorporates potential heterogeneity across cohorts when setting the benchmark. Specifically, I consider two versions of the RM restriction that differ in how they define this benchmark.

The first approach uses a single \textit{global benchmark}, derived from the pre-treatment data of all cohorts combined. The restriction set is:
\[
\Lambda_{\Delta}^{\text{RM, Global}}(\overline{M}) = \left\{ \vec{\Delta} : |\Delta_{\mathcal{G}_g,s} - \Delta_{\mathcal{G}_g,s-1}| \le \overline{M} \cdot \max_{k \in \{1..G\}, s' \leq 0} |\Delta_{\mathcal{G}_k,s'} - \Delta_{\mathcal{G}_k,s'-1}| \quad \forall s\geq1, \forall g \in \{1..G\} \right\}
\]
This restriction asserts that the change in block bias for any cohort between consecutive post-treatment periods cannot exceed $\overline{M}$ times the single largest such change between any two consecutive pre-treatment periods across all cohorts.

Alternatively, a more flexible approach is to use \textit{cohort-specific benchmarks}, where each cohort’s post-treatment bias is benchmarked only against its own pre-treatment history:
\[
\Lambda_{\Delta}^{\text{RM, Cohort}}(\overline{M}) = \left\{ \vec{\Delta} : |\Delta_{\mathcal{G}_g,s} - \Delta_{\mathcal{G}_g,s-1}| \le \overline{M} \cdot \max_{s' \le 0} |\Delta_{\mathcal{G}_g,s'} - \Delta_{\mathcal{G}_g,s'-1}| \quad \forall s\geq1, \forall g \in \{1..G\} \right\}
\]
This restriction requires that the change in a cohort's block bias between consecutive post-treatment periods cannot exceed $\overline{M}$ times the largest such change between any two consecutive pre-treatment periods of that cohort.

The cohort-specific benchmark is more theoretically sound, as it uses one cohort's own pre-treatment history to restrict its post-treatment block bias. It also leads to a narrower identified set for the same value of $\overline{M}>0$ (i.e., $\mathcal{S}(\vec{\beta},\Lambda_{\Delta}^{\text{RM, Cohort}}(\overline{M})) \subseteq \mathcal{S}(\vec{\beta},\Lambda_{\Delta}^{\text{RM, Global}}(\overline{M}))$), reflecting less identification uncertainty. However, this theoretical advantage comes with a practical drawback: the cohort-specific benchmark is more computationally demanding than the global benchmark, particularly in settings with a large number of treatment cohorts, as I will discuss in detail later.

Figure~\ref{fig:restriction_set} provides a geometric illustration of the two restriction sets for a simple case with two cohorts, $\mathcal{G}_{i}$ and $\mathcal{G}_{j}$. The x-axis represents the change in block bias for cohort $\mathcal{G}_{i}$ between its first post-treatment and last pre-treatment period, $\Delta_{\mathcal{G}_{i},1}-\Delta_{\mathcal{G}_{i},0}$, and the y-axis represents the same quantity for cohort $\mathcal{G}_{j}$.

The RM restriction set is the union over all possible pre-treatment benchmarks. For the global benchmark, this can be expressed as: $$\Lambda_{\Delta}^{\text{RM, Global}}(\overline{M}) = \bigcup_{b \in \mathcal{B}} \left\{ \vec{\Delta} : |\Delta_{\mathcal{G}_g,s} - \Delta_{g,s-1}| \le b \quad \forall s\geq1, \forall g \in \{i,j\} \right\}$$
where $\mathcal{B}$ is the set of all possible benchmark values, $\{\overline{M} \cdot |\Delta_{k,s'} - \Delta_{k,s'-1}|\}_{k \in \{i,j\}, s' \le 0}$. For any single benchmark value $b \in \mathcal{B}$, the restriction is a square. The RM restriction set (outlined in red) is thus the union of all these squares, which simplifies to the single largest square.

For the cohort-specific benchmark, the restriction set is a union over the Cartesian product of each cohort's independent set of benchmarks: $$\Lambda_{\Delta}^{\text{RM, Cohort}}(\overline{M}) = \bigcup_{(b_i, b_j) \in \mathcal{B}_i \times \mathcal{B}_j} \left\{ \vec{\Delta} : |\Delta_{\mathcal{G}_g,s} - \Delta_{\mathcal{G}_g,s-1}| \le b_g \quad \forall s\geq1, g \in \{i,j\} \right\}$$ where $\mathcal{B}_i = \{\overline{M} \cdot |\Delta_{\mathcal{G}_i,s'} - \Delta_{\mathcal{G}_i,s'-1}|\}_{s' \le 0}$, and likewise for $\mathcal{B}_j$. Each pair of benchmarks $(b_i, b_j)$ defines a rectangle. The RM restriction set is thus the union of all possible rectangles, which results in the largest rectangle. This rectangle has the same width as the largest square in the left, but with a smaller height, since the largest benchmark for Cohort $\mathcal{G}_j$ is smaller than the largest overall benchmark.

\begin{figure}[!h]
    \centering
    \includegraphics[width=1\linewidth]{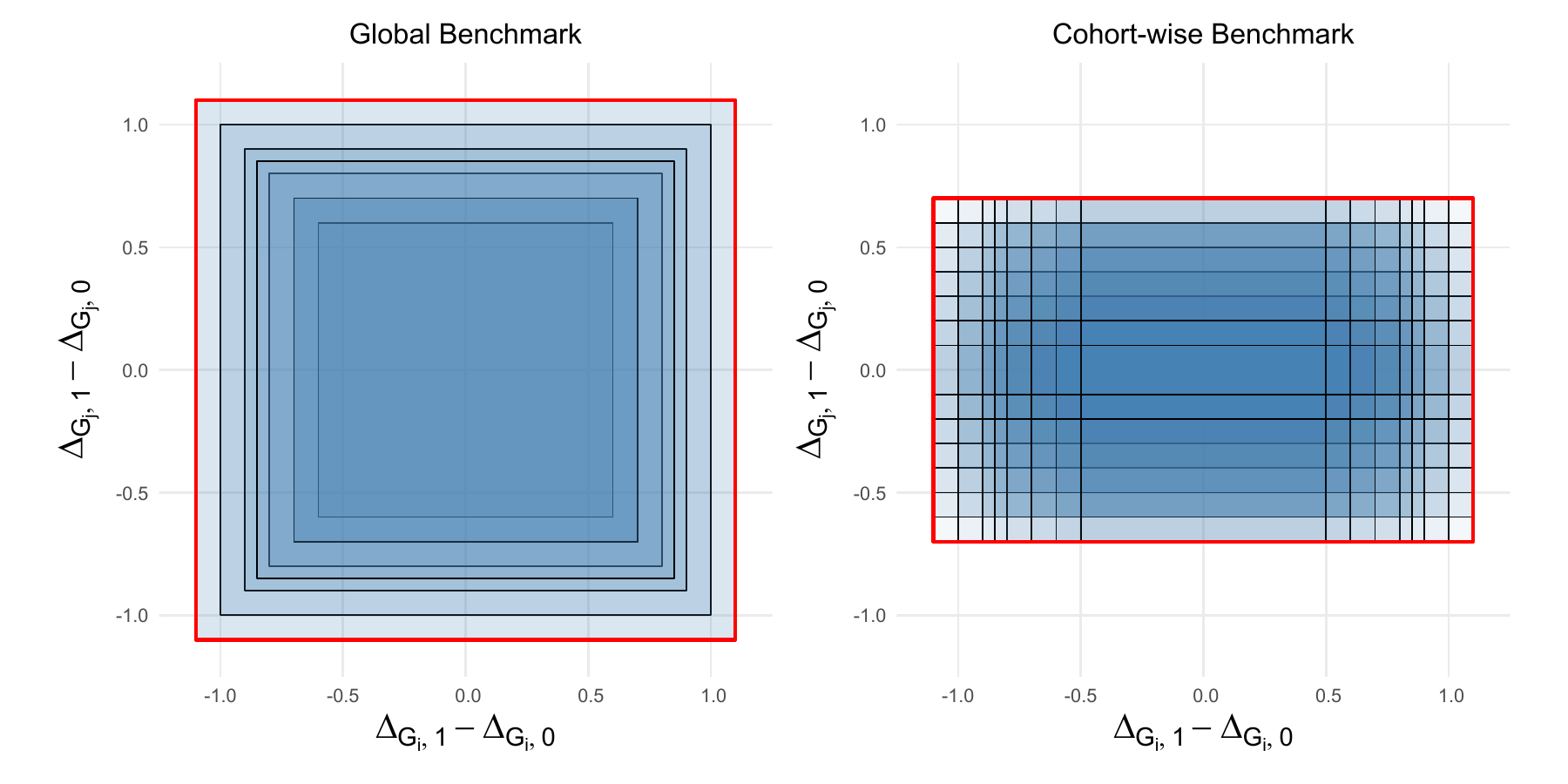}
    \caption{Restriction Set}
    \label{fig:restriction_set}
    {\footnotesize
    \textbf{Note:} The axes represent the post-treatment change in block bias for Cohort $i$ (x-axis) and Cohort $j$ (y-axis). Each black shape (square or rectangle) represents the restriction set implied by a single possible pre-treatment benchmark. The red line outlines the final restriction set, which is the union of all these individual sets. Darker blue shading indicates regions with more overlap of squares or rectangles.
    }
\end{figure}

In the special case where $\overline{M}=0$, both the global and cohort-specific restrictions simplify to the same condition: $|\Delta_{\mathcal{G}_g,s} - \Delta_{\mathcal{G}_g,s-1}| = 0$ for all $s \ge 1$. This implies that for each cohort, the block bias must remain constant across all post-treatment periods, equal to its value in the last pre-treatment period ($s=0$).

A key feature of the imputation estimator is that the block bias in the last pre-treatment period is not mechanically normalized to zero (i.e., $\Delta^{\text{Imp}}_{\mathcal{G}_g,0} \neq 0$). This distinguishes it from the original framework in \RRR, which uses estimators where the coefficient for the last pre-treatment period is, by construction, normalized to zero. This feature implies that a non-zero “baseline” block bias, $\Delta^{\text{Imp}}_{\mathcal{G}_g,0}$, persists throughout the post-treatment periods for cohort $\mathcal{G}_g$ in my framework. For example, when $\overline{M}=0$, the restriction implies that $\Delta^{\text{Imp}}_{\mathcal{G}_g,s}=\Delta^{\text{Imp}}_{\mathcal{G}_g,0}$ for all $s \geq 1$. In this scenario, my robust inference procedure can be interpreted as a debiasing tool, subtracting the estimated block bias in the last pre-treatment period from the estimates in all post-treatment periods.

One can also incorporate a normalization condition inherent to the imputation estimator into the restriction set: for any given cohort, the sum of its pre-treatment block biases must be zero ($\sum_{s \leq 0}\Delta^{\text{Imp}}_{\mathcal{G}_{g},s} = 0$). This condition is a mechanical property of the estimator. Because this normalization only operates on pre-treatment block biases, it does not alter the identified set for the treatment effects. In my implementation, I also find that including this condition does not affect the constructed confidence sets.

The \textit{CS-NYT} estimator normalizes the block bias in the last pre-treatment period to zero, making it more analogous to the original implementation in \RRR. Consequently, when $\overline{M}=0$, all post-treatment block biases for each cohort are restricted to zero. However, this does not imply that the \textit{overall bias} in post-treatment periods is necessarily zero. The reason, as the bias decomposition for the \textit{CS-NYT} estimator shows, is that the overall bias for an early cohort ($\mathcal{G}_g$) includes a correction term from each of its adjustment cohorts ($\mathcal{G}_k$). This correction term takes the form $\Delta^{\text{CS-NYT}}_{\mathcal{G}_k,s_k(t)}-\Delta^{\text{CS-NYT}}_{\mathcal{G}_k,s_k(t_g-1)}$, and it includes not only a post-treatment block bias (the first term) but also a \textit{pre-treatment} block bias (the second term). While the RM restriction with $\overline{M}=0$ forces the post-treatment component to zero, it does not affect the non-zero pre-treatment block biases. Therefore, the overall bias for $\mathcal{G}_g$ can be non-zero even when all post-treatment block biases are restricted to zero. In effect, this non-zero overall bias quantifies the ``hidden bias" inherited from the adjustment cohorts' pre-treatment trends, and the framework de-biases the original estimate by accounting for this contamination.

\subsubsection{SD Restriction}

The SD restriction offers an alternative approach that is particularly well-suited for settings where pre-trends are approximately linear. The intuition is that while the level of the bias path may change after treatment, its slope is unlikely to change dramatically. My framework operationalizes this by bounding the change in the slope of each cohort's block bias path—the second difference—by a sensitivity parameter, $M$:
\[
\Lambda_{\Delta}^{\text{SD}}(M) = \left\{ \vec{\Delta} : |(\Delta_{\mathcal{G}_g,s} - \Delta_{\mathcal{G}_g,s-1}) - (\Delta_{\mathcal{G}_g,s-1} - \Delta_{\mathcal{G}_g,s-2})| \le M \quad \forall s\geq1, \forall g \in \{1..G\} \right\}.
\]
The special case of $M=0$ imposes a strict linear extrapolation of the trend observed in the last two pre-treatment periods. Positive values of $M$ relax this by allowing for some degree of slope change. As this restriction is a single polyhedron, it is computationally simple.

The block biases in the last two pre-treatment periods play the most critical role in the SD restriction, as they establish the linear trend that serves as the benchmark for post-treatment periods. As I will demonstrate with simulated and empirical examples, large discrepancies between the cohort-anchored and aggregated frameworks often arise when using the SD restriction. This is because the benchmark learned from aggregated coefficients can be highly ill-suited for any individual cohort, leading to large differences in the final robust inference, especially when cohorts have heterogeneous pre-trends.

The robust inference under SD restriction with $M=0$ can also be viewed as a debiasing procedure that differs for each estimator. For the \textit{CS-NYT} estimator, since $\Delta^{\text{CS-NYT}}_{\mathcal{G}_g,0}=0$ by construction, the procedure corrects estimates by subtracting a linear trend determined solely by the value of the block bias in the second-to-last pre-treatment period, $\Delta^{\text{CS-NYT}}_{\mathcal{G}_g,-1}$. For the imputation estimator, where $\Delta^{\text{Imp}}_{\mathcal{G}_g,0} \neq 0$, the procedure corrects estimates by subtracting a linear trend defined by both an intercept ($\Delta^{\text{Imp}}_{\mathcal{G}_g,0}$) and a slope ($\Delta^{\text{Imp}}_{\mathcal{G}_g,0}-\Delta^{\text{Imp}}_{\mathcal{G}_g,-1}$).

While the framework allows for cohort-specific sensitivity parameters ($\{M_g\}_{g=1}^G$), this paper focuses on the simpler case of a common parameter, $M$, for all cohorts.

\subsection{Inferential Goal and Computational Considerations}

Having defined the restrictions placed upon the block biases, I now formally define the confidence sets. As discussed before, I assume the stacked vector of cohort-period level coefficients, $\vec{\hat{\beta}}$, is asymptotically normal, $\sqrt{N}(\vec{\hat{\beta}}-\vec{\beta}) \rightarrow_d \mathcal{N}\left(0, \Sigma^*\right)$. It suggests a finite-sample normal approximation $\overrightarrow{\hat{\beta}} \approx_d \mathcal{N}\left(\vec{\beta}, \Sigma_N\right)$, where $\Sigma_N=\frac{\Sigma^*}{N}$. Following \RRR, I construct confidence sets that are uniformly valid for all parameter values $\theta$ in the identified set when this normal approximation holds with the known $\Sigma_N$. That is, I construct confidence sets $\mathcal{C}_N(\vec{\hat{\beta}}, \Sigma_N)$ satisfying
\[
\inf_{\vec{\delta} \in \Lambda_\delta,\vec{\tau}} \inf_{\theta \in \mathcal{S}(\vec{\tau}+\vec{\delta}, \Lambda_\delta)} \mathbb{P}_{\vec{\hat{\beta}} \sim \mathcal{N}\left(\vec{\tau}+\vec{\delta}, \Sigma_N\right)}\left(\theta \in \mathcal{C}_N\left(\vec{\hat{\beta}}, \Sigma_N\right)\right) \geq 1-\alpha.
\]
where $\Lambda_{\delta}$ is the image of the block bias set $\Lambda_{\Delta}$ under the linear map $\mathbf{W}$.

As shown in \RRR, this finite-sample size control under the normal approximation translates to a uniform asymptotic size control over a large class of data-generating processes when $\Sigma_N$ is replaced by a consistent estimate, $\hat{\Sigma}_N$. The constructed confidence sets therefore satisfy
\[
\liminf_{N \to \infty} \inf_{P \in \mathcal{P}} \inf_{\theta \in \mathcal{S}\left(\vec{\delta}_P+\vec{\tau}_P, \Lambda_\delta\right)} \mathbb{P}_P\left(\theta \in \mathcal{C}_N\left(\vec{\hat{\beta}}, \hat{\Sigma}_N\right)\right) \geq 1-\alpha
\]
for a large class of distributions $\mathcal{P}$ such that the true bias vector $\vec{\delta}_P$ is in the assumed restriction set $\Lambda_\delta$ for all $P \in \mathcal{P}$. I refer readers to the Section 3.3 of \RR for the formal proofs and a more detailed technical discussion of these properties.

The algorithm in \RR focuses on constructing confidence sets when the restriction set on the overall biases, $\Lambda_\delta$, is either a single polyhedron (such as for the SD restriction) or a finite union of polyhedra (such as for the RM restriction). When $\Lambda_\delta$ is a finite union of polyhedra, $\Lambda_\delta = \bigcup_{k=1}^K \Lambda_{\delta,k}$, a valid confidence set can be constructed by taking the union of the confidence sets for each of its polyhedral components. As established by Lemma 2.2 in \RRR, if a confidence set $\mathcal{C}_{N,k}$ provides valid coverage for a restriction set $\Lambda_{\delta,k}$, then the union of these sets, $\mathcal{C}_N = \bigcup_{k=1}^K \mathcal{C}_{N,k}$, provides valid coverage for the union set $\Lambda_\delta$.

The procedure for constructing confidence sets under a single polyhedron constraint adapts the moment inequality approach detailed in \RRR, which builds on the conditional and hybrid methods developed by \citet{andrews2023inference}. The core of this approach is to invert a hypothesis test for the parameter of interest, $\theta$. For each candidate value $\theta_0$ on a pre-specified grid, the algorithm tests the joint null hypothesis that the true parameter is $\theta_0$ and that the true bias vector $\vec{\delta}$ satisfies the restrictions. I use the recommended \textit{hybrid test} in the implementation of my framework. This approach is attractive because it remains computationally tractable even with many nuisance parameters and has desirable asymptotic power properties. For a complete technical discussion of its properties, I refer readers to \RR and \citet{andrews2023inference}.

A key computational consideration arises for the RM restriction, which is a union of polyhedra. As shown in Figure~\ref{fig:restriction_set}, each polyhedron in the union corresponds to a different pre-treatment benchmark. Because the population ``max benchmark" is unknown— we only observe the estimated pre-treatment block biases and sampling noise can change which pre-treatment period attains the maximum—the algorithm must compute a confidence set for each polyhedron individually and then take their union. This necessitates a loop over all possible benchmark configurations, and the structure of this loop differs between the two benchmark specifications.

The global benchmark approach requires a single loop through every pre-treatment difference between consecutive periods across all cohorts. The number of benchmark configurations thus grows linearly with the total number of pre-treatment periods, $O(t_{1}+\cdots+t_{G})$. In contrast, the cohort-specific benchmark requires a nested loop structure to test every possible \textit{combination} of cohort-specific benchmarks. The number of benchmark configurations therefore grows multiplicatively with the number of cohorts, $O(t_{1}\times\cdots\times t_{G})$. It can become infeasible in settings with many cohorts. This computational burden is the practical price of the more theoretically appealing cohort-specific restriction.

For the specific case of the SD restriction, \RR recommends an alternative approach based on Fixed-Length Confidence Intervals (FLCIs), which can offer improved power in finite samples when the identified set is small, while the hybrid method remains asymptotically valid. For simplicity and consistency in my analysis, I use the hybrid method for both the RM and SD restrictions.

\subsection{Identification Uncertainty and Statistical Uncertainty}

Finally, I discuss the two sources of uncertainty inherent in the robust inference framework. The restriction set, $\Lambda_{\Delta}$, is imposed on the true, unobservable vector of block biases, $\vec{\Delta}$. We do not observe this vector; we only have a noisy estimate of its pre-treatment components, $\hat{\vec{\Delta}}_{\text{pre}}$. Therefore, the inference procedure must account for two distinct sources of uncertainty.

First, there is \textit{identification uncertainty}: even if we knew the population pre-treatment block biases, the restriction imposed on the post-treatment block biases would only allow partial identification of the parameter of interest, yielding the identified set. Second, there is \textit{statistical uncertainty}, which arises because we only have noisy estimates of the pre-treatment block biases and post-treatment estimated treatment effects. The inference procedure described previously constructs confidence sets that are uniformly valid by simultaneously accounting for both sources of uncertainty. The benefit of a smaller identified set (e.g., $\mathcal{S}(\vec{\beta}, \Lambda_{\Delta}^{\text{Cohort}}) \subseteq \mathcal{S}(\vec{\beta}, \Lambda_{\Delta}^{\text{Global}})$) is therefore most pronounced when statistical uncertainty is low. When the variance of the estimated coefficients $\vec{\hat{\beta}}$ is large, the gains from a smaller identified set can be outweighed by the large degree of statistical uncertainty.

To illustrate these two sources of uncertainty, I conduct a simulation study comparing the confidence sets under the RM restriction with the global and cohort-specific benchmarks under different levels of statistical noise. I construct a simple two-cohort staggered adoption design with one cohort treated at $t=3$ ($\mathcal{G}_3$) and another at $t=5$ ($\mathcal{G}_5$), observed over $T=6$ periods. I directly generate a vector of estimated coefficients, $\vec{\hat{\beta}}$, where the ``good'' cohort, $\mathcal{G}_3$, has no pre-treatment block bias (all its pre-treatment $\hat{\Delta}_{\mathcal{G}_3,s}$ are set to zero), while the ``bad'' cohort, $\mathcal{G}_5$, has a non-zero pre-treatment block bias, which I introduce by setting $\hat{\Delta}_{\mathcal{G}_5,s=-3}=-0.25$ and $\hat{\Delta}_{\mathcal{G}_5,s=-2}=0.25$.

I then construct confidence sets for a parameter $\theta(w) = w \cdot \tau_{\mathcal{G}_3, s=1} + (1-w) \cdot \tau_{\mathcal{G}_5, s=1}$, which is a weighted average of the first-period treatment effects for the two cohorts. I vary the weight $w$ from 0 to 1 to trace a path of parameters, from the effect solely on the ``bad'' cohort ($w=0$) to the effect solely on the ``good'' cohort ($w=1$). This entire exercise is repeated for two levels of statistical uncertainty: a high-noise scenario with a baseline variance-covariance matrix, $\Sigma=\text{Diag}(0.1)$, and a low-noise scenario with $\Sigma/100$. The sensitivity parameter $\overline{M}$ is fixed at 1 for both restrictions throughout the simulation.

\begin{figure}[!ht]
\centering
\begin{minipage}{0.85\linewidth}{
\centering
\subfigure[Large Noise]{%
\centering
  \includegraphics[width=1\textwidth]{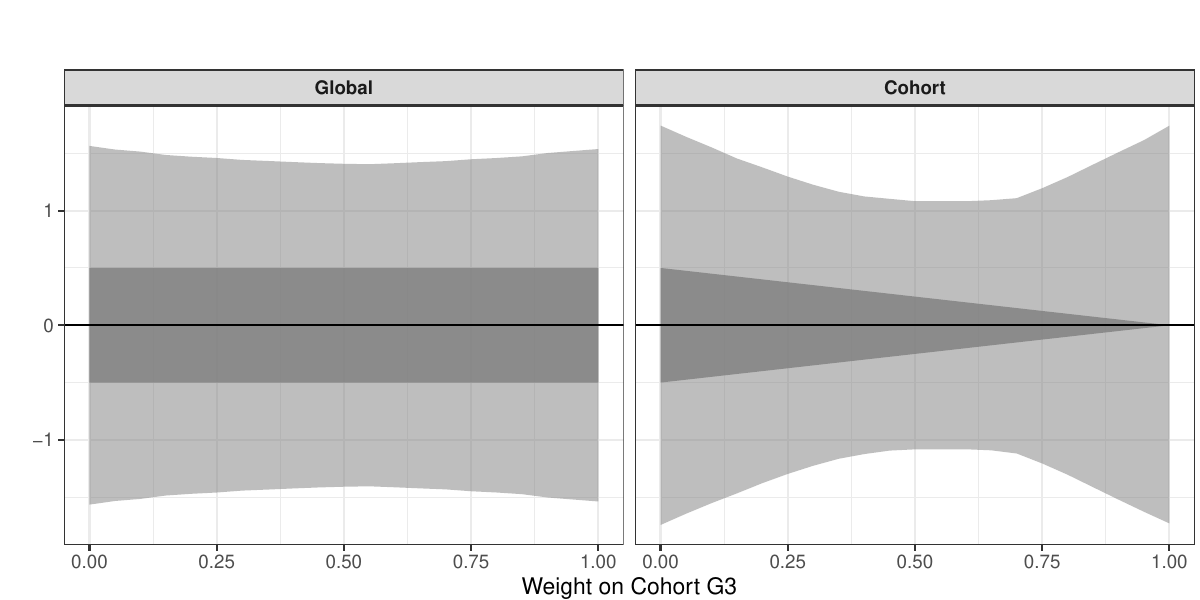}%
}
\subfigure[Small Noise]{%
\centering
  \includegraphics[width=1\textwidth]{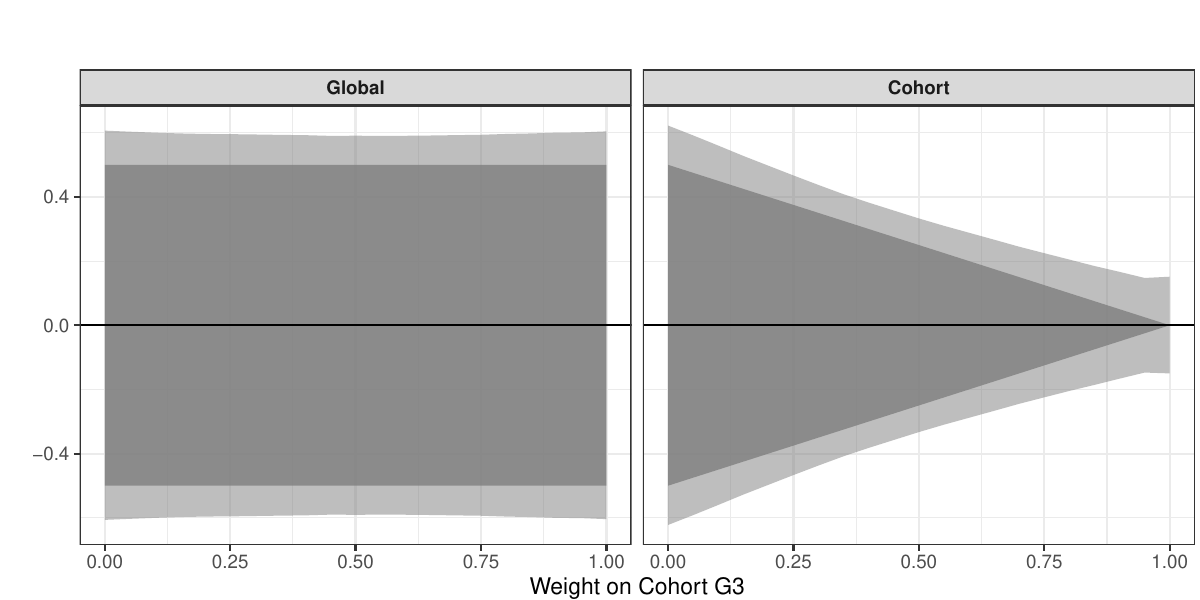}%
}
}

\footnotesize\textbf{Note:} This figure compares the plug-in identified sets (dark gray) and confidence sets (light gray) for the parameter $\theta(w) = w \tau_{\mathcal{G}_{3},s=1} + (1-w)\tau_{\mathcal{G}_{5},s=1}$, under the RM restriction using global (left plots) or cohort-specific (right plots) benchmarks. The x-axis represents the weight $w$, which varies from 0 to 1. Panel (a) depicts a scenario with large estimation noise, while Panel (b) shows a scenario with small noise.
\end{minipage}
\vspace{0.2cm}
\caption{Comparison of Confidence Sets}
\label{fig:identified_set_compare}
\end{figure}

Figure~\ref{fig:identified_set_compare} plots the confidence sets for $\theta(w)$ across the range $w \in [0,1]$ in light gray and, for comparison, the corresponding \textit{plug-in identified set} in dark gray. The plug-in identified set, $\mathcal{S}(\vec{\hat{\beta}},\Lambda_{\Delta})$, is obtained by substituting the estimated coefficients, $\vec{\hat{\beta}}$, directly into the definition of the identified set. It represents a hypothetical scenario with no statistical uncertainty, where the sample estimates are treated as the true population values. The upper and lower panels of the figure show the results for the high-noise and low-noise scenarios, respectively.

As expected, the plug-in identified set under the global benchmark remains constant for all values of $w$, as it applies the same single benchmark to both the ``good" and ``bad" cohorts. Under the cohort-specific benchmark, however, the identified set changes with $w$. The benchmark for the ``bad" cohort, $\mathcal{G}_5$, is determined by its own pre-treatment violation, $|\hat{\Delta}_{\mathcal{G}_5,s=-2}-\hat{\Delta}_{\mathcal{G}_5,s=-3}|=0.5$. This implies that the identified set for its first-period treatment effect, $\tau_{\mathcal{G}_5, s=1}$, is $[-0.5, 0.5]$. For the ``good" cohort, $\mathcal{G}_3$, the corresponding benchmark is zero, so the identified set for its effect is the single point $\{0\}$. Consequently, as the weight $w$ on the good cohort in the parameter of interest, $\theta(w)$, increases from 0 to 1, the plug-in identified set for $\theta(w)$ smoothly shrinks from $[-0.5, 0.5]$ to the point $\{0\}$.

The confidence set is determined by both identification uncertainty and statistical uncertainty. In the high-noise scenario (upper panel), statistical uncertainty dominates; the confidence sets are substantially wider than the identified sets, and the benefit of the cohort-specific benchmark is obscured. In the low-noise scenario (lower panel), the confidence sets are only slightly wider than the identified sets, revealing identification uncertainty as the primary driver. Here, the advantage of the cohort-specific benchmark becomes clear, as its confidence set visibly shrinks with the underlying identified set, while the global benchmark's confidence set remains wide.

The degree of statistical uncertainty not only informs the choice between the global and cohort-specific benchmarks but also adds to the more fundamental question of whether to conduct robust inference on aggregated or cohort-period level coefficients. As previously discussed, the aggregated framework suffers from theoretical issues such as dynamic control group and dynamic treated composition, which can obscure interpretation. However, it sometimes benefits from lower statistical uncertainty of the aggregated coefficients. When the statistical variance of the cohort-period level coefficients is particularly large, the theoretical benefits of the cohort-anchored framework may be outweighed by the loss of precision. Given the importance of these two sources of uncertainty, I recommend that empirical researchers plot both the plug-in identified sets and the final confidence sets. This practice separately visualizes the contributions of identification and statistical uncertainty, as I will demonstrate in the following section.

\section{Two Illustrative Simulated Examples}
\label{sec:illustrative_example}

This section presents two simulated examples to demonstrate the practical implications of the cohort-anchored framework. I contrast its performance with the robust inference that relies on aggregated event-study coefficients (the aggregated framework) to highlight the relative merits of my approach.

The first example applies the RM restriction. I compare the confidence sets generated by three approaches: the cohort-anchored framework using a global benchmark, the cohort-anchored framework using cohort-specific benchmarks, and aggregated framework. The second example employs the SD restriction and also provide a comparison between the cohort-anchored framework and the aggregated framework.

\subsection{Example I: The RM Restriction}

I simulate a panel dataset spanning 11 time periods with two treated cohorts, $\mathcal{G}_{8}$ and $\mathcal{G}_{10}$, and a never-treated group, $\mathcal{G}_{\infty}$. The first cohort, $\mathcal{G}_{8}$, with size $N_{8} = 40$, receives treatment at $t=8$, while the second cohort, $\mathcal{G}_{10}$, with size $N_{10} = 40$, is treated at $t=10$. An additional $N_{\infty} = 60$ units constitute the never-treated group. The baseline potential outcome, $Y_{it}(0)$, is generated from a TWFE model, $Y_{it}(0) = \alpha_i + \xi_t + \epsilon_{it}$, where $\alpha_i$ and $\xi_t$ are unit and time fixed effects and $\epsilon_{it} \sim N(0,2)$ is an idiosyncratic error term. I introduce an oscillating trend, given by $(-1)^{t+1}$, \textit{only} to the late-adopting cohort, $\mathcal{G}_{10}$, as a violation of PT. This adds 1 to its potential outcome in all odd-numbered periods and subtracts 1 in all even-numbered periods. Finally, the observed outcome is generated by adding a constant treatment effect of $+3$ to all units in cohorts $\mathcal{G}_{8}$ and $\mathcal{G}_{10}$ during all of their respective post-treatment periods.

From this data, I apply the imputation estimator to obtain the pre-treatment block biases and post-treatment ATTs for each cohort. These coefficients are then stacked into pre-treatment and post-treatment vectors, respectively:
\begin{align*}
    \vec{\hat{\beta}}_{\text{pre}} &= \vec{\hat{\Delta}}_{\text{pre}}=[\hat{\beta}_{\mathcal{G}_{8}}^{s=-6},\cdots,\hat{\beta}_{\mathcal{G}_{8}}^{s=0},\hat{\beta}_{\mathcal{G}_{10}}^{s=-8},\cdots,\hat{\beta}_{\mathcal{G}_{10}}^{s=0}]' \\
    \vec{\hat{\beta}}_{\text{post}} &= [\hat{\beta}_{\mathcal{G}_{8}}^{s=1},\cdots,\hat{\beta}_{\mathcal{G}_{8}}^{s=4},\hat{\beta}_{\mathcal{G}_{10}}^{s=1},\hat{\beta}_{\mathcal{G}_{10}}^{s=2}]'
\end{align*}
The full vector of coefficients is formed by stacking these two components, $\vec{\hat{\beta}}=[\vec{\hat{\beta}}_{\text{pre}}^{\prime},\vec{\hat{\beta}}_{\text{post}}^{\prime}]'$. I compute the full variance-covariance matrix for this vector using a stratified cluster bootstrap. For comparison with the aggregated framework, I also construct the corresponding aggregated coefficients by taking a cohort-size-weighted aggregation of these estimates at each relative time period.

Panels (a) and (b) of Figure~\ref{fig:sim_impute_RM} present the cohort-period coefficients for cohort $\mathcal{G}_{8}$ and cohort $\mathcal{G}_{10}$, respectively. In each plot, the pre-treatment coefficients (block biases) are highlighted in dark green. Panel (b) shows that, consistent with my data generating process, the pre-treatment coefficients for $\mathcal{G}_{10}$ exhibit a pronounced oscillating pattern. Notably, because the initial control group for $\mathcal{G}_{8}$ includes the cohort $\mathcal{G}_{10}$, the pre-treatment coefficients for $\mathcal{G}_{8}$ in panel (a) also display a faint oscillation.

Panel (c) of Figure~\ref{fig:sim_impute_RM} plots the aggregated event-study coefficients. In early pre-treatment periods ($s\in\{-8,-7\}$), only the late-treated cohort, $\mathcal{G}_{10}$, contributes to these coefficients. The aggregated plot clearly exhibits the oscillating trend introduced into $\mathcal{G}_{10}$ and signals an obvious violation of the PT assumption. In periods $-6\leq s \leq 2$, the aggregated coefficients are determined by both cohorts, while in late post-treatment periods ($s\in\{3,4\}$), only the early-treated cohort, $\mathcal{G}_{8}$, contributes.

This observation highlights the fundamental dilemma for robust inference using aggregated coefficients. On the one hand, it is inappropriate to use the pre-treatment violations from periods $s\in\{-8,-7\}$---which are specific to $\mathcal{G}_{10}$---to benchmark the effects for $\mathcal{G}_{8}$ or the cohort average. This is because the overall bias for $\mathcal{G}_{8}$ in its initial post-treatment periods is theoretically distinct from the block biases of $\mathcal{G}_{10}$. On the other hand, discarding these early coefficients of $s\in\{-8,-7\}$ would mean ignoring the most direct evidence of a PT violation for cohort $\mathcal{G}_{10}$ itself. Furthermore, the bias decomposition shows that these same block biases from $\mathcal{G}_{10}$ are mathematically necessary for calculating the overall bias of $\mathcal{G}_{8}$ in its later post-treatment periods ($s \in \{3,4\}$).

Panel (d) of Figure~\ref{fig:sim_impute_RM} displays the plug-in identified sets and 95\% confidence sets for the ATT across all post-treatment periods and treated cohorts. These sets are constructed under the RM restriction and are plotted against the sensitivity parameter $\overline{M}$. The plug-in identified set reflects identification uncertainty, whereas the confidence set captures both identification uncertainty and statistical uncertainty. 

Within the cohort-anchored framework, the cohort-specific benchmark (red) restricts the change in a cohort's post-treatment block bias between consecutive periods to be no larger than $\overline{M}$ times the maximum corresponding change within its own pre-treatment history. In contrast, the global benchmark (green) requires this change to remain within $\overline{M}$ times the largest pre-treatment change across all cohorts. In this simulation, cohort $\mathcal{G}_{10}$ has a significantly larger pre-treatment bias than $\mathcal{G}_{8}$. Consequently, the global benchmark is dictated by the large violations in $\mathcal{G}_{10}$ and is therefore overly conservative for $\mathcal{G}_{8}$. As a result, the plug-in identified sets and confidence sets generated by the cohort-specific method is considerably narrower than that from the global method for any $\overline{M} > 0$, while the two are equivalent at $\overline{M} = 0$.

\begin{figure}[!ht]
    \caption{Event-Study Plots and Confidence Set Comparison under the RM Restriction}
    \label{fig:sim_impute_RM}
    \centering
    \vspace{-0.5em}
    \begin{minipage}{\linewidth}{
        \begin{center}
          \subfigure[Event-study Graph of Cohort $\mathcal{G}_{8}$]{%
            \includegraphics[width=0.45\textwidth]{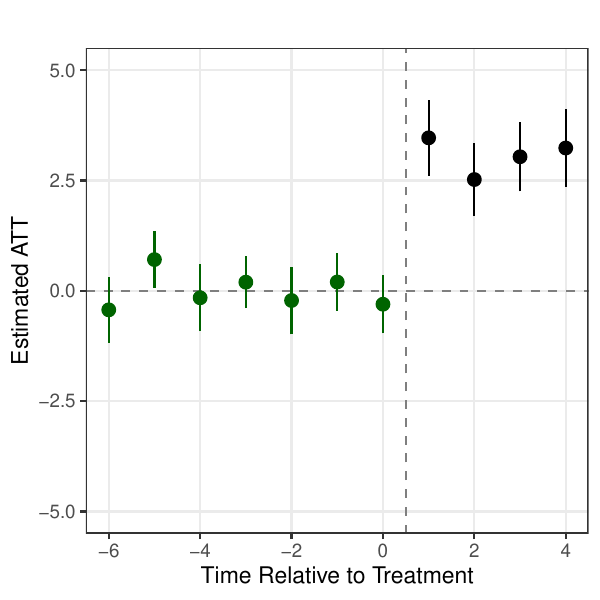}%
          }\hspace{0.25em}%
          \subfigure[Event-study Graph of Cohort $\mathcal{G}_{10}$]{%
            \includegraphics[width=0.45\textwidth]{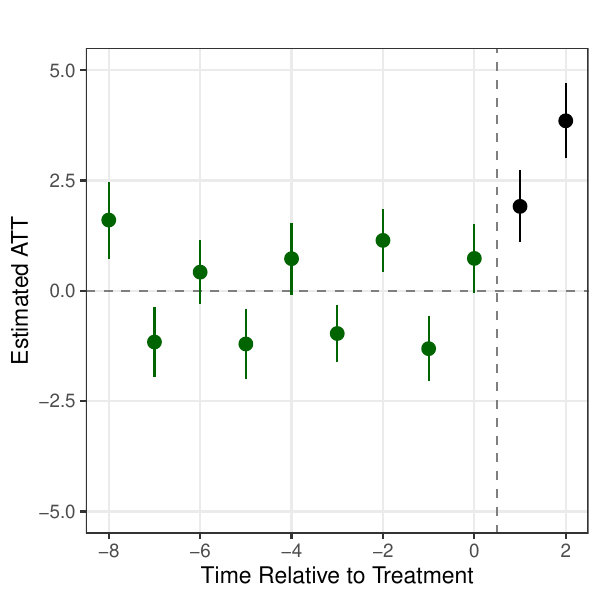}%
          }
          \vspace{-0.25em}
          \subfigure[Event-study Graph of Aggregated Coefficients]{%
            \includegraphics[width=0.45\textwidth]{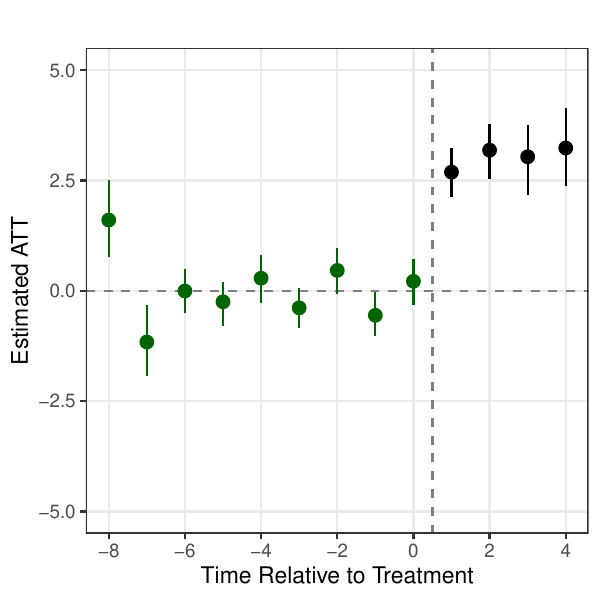}%
          }\hspace{0.25em}%
          \subfigure[Comparison of Confidence Sets]{%
            \includegraphics[width=0.45\textwidth]{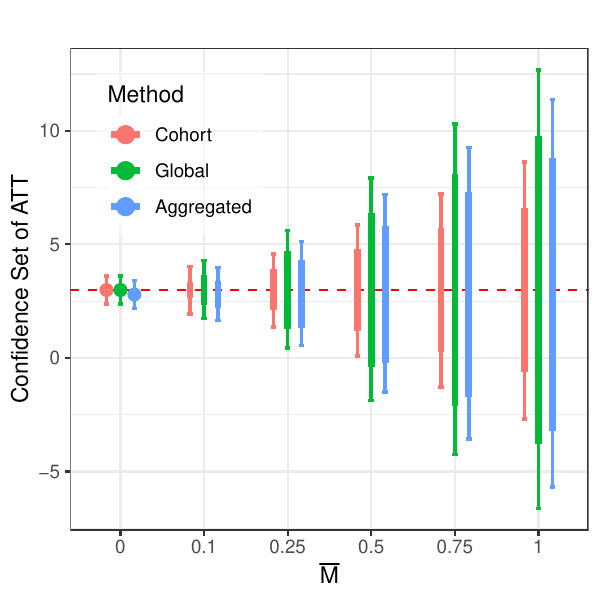}%
          }
        \end{center}
    }
    \end{minipage}
    {\footnotesize\textbf{Note:} Panels (a) and (b) present event-study plots for the early-treated cohort ($\mathcal{G}_{8}$) and the late-treated cohort ($\mathcal{G}_{10}$), while Panel (c) shows the plot for the aggregated coefficients. All coefficients are derived from the imputation estimator. Panel (d) compares confidence sets for the ATT under the RM restriction, plotting them against the sensitivity parameter $\overline{M}$. The red horizontal dashed line indicates the true ATT value. The vertical lines show plug-in identified sets (thick) and confidence sets (thin) derived from three approaches: two using the cohort-anchored framework (with the cohort-specific benchmark in red and the global benchmark in green), and one using the aggregated framework (blue).
    }
\end{figure}

Under the aggregated framework,\footnote{I make minor modifications to the original \textit{HonestDiD} package to allow for a non-zero estimate at the last pre-treatment period, which is suitable for the imputation estimator; details are in \citet{chiu2025causal}.} the plug-in identified sets and 95\% confidence sets for the ATT are shown in blue in Panel (d). Here the RM restriction bounds post-treatment changes in the aggregated bias based on the maximum corresponding change in the pre-treatment aggregated series. Because the pre-treatment coefficients for $s \in \{-8, -7\}$ exhibit a large violation of PT, the benchmark in the RM restriction for the aggregated series is also large. This leads to a confidence set that is wider than the one produced by the cohort-anchored framework with the cohort-specific benchmark.

It is noteworthy that a direct comparison of the confidence sets obtained from the cohort-anchored and aggregated frameworks requires careful interpretation. These methods are not perfectly analogous, as they apply the RM restriction to different objects: the former to cohort-period block biases and the latter to the aggregated biases. Consequently, the benchmark values used to construct the RM restriction sets are not directly comparable, even under the same sensitivity parameter $\overline{M}$. For instance, in a scenario where individual cohort pre-trends are volatile but their average is smooth, the cohort-specific approach could yield a wider plug-in identified set and confidence set than the aggregated approach due to its more conservative (i.e., larger) benchmark. The relative width of the plug-in identified sets and confidence sets from the two frameworks is therefore data-dependent.

The confidence sets produced by the cohort-anchored and aggregated frameworks are also centered at slightly different values. This occurs because the cohort-anchored framework adjusts each cohort's estimates using its \textit{own} block bias from the last pre-treatment period, while the aggregated framework uses a single adjustment based on the \textit{average} of the last-period block biases across both cohorts. This averaged adjustment is ill-suited for cohort $\mathcal{G}_8$ in its later post-treatment periods ($s \in \{3,4\}$), which are periods in the aggregated series that are identified solely by that cohort. This difference in centering becomes even more pronounced under the SD restriction.

In a parallel analysis presented in the appendix, I apply the \textit{CS-NYT} estimator to the same simulated data. Figure~\ref{fig:sim_CSnotyet_RM} displays the resulting event-study plots and confidence sets, which are qualitatively similar to those from the imputation estimator discussed in the main text.

\subsection{Example II: The SD Restriction}

In this section, I present a second simulated example to illustrate the application of the SD restriction within the cohort-anchored. To better motivate this restriction, I modify the data generating process. Instead of an oscillating trend, I now introduce a linear trend violation for cohort $\mathcal{G}_{10}$.

Specifically, the simulation design mirrors the first example, featuring two treated cohorts, $\mathcal{G}_{8}$ ($N_{8}=40$, treated at $t=8$) and $\mathcal{G}_{10}$ ($N_{10}=40$, treated at $t=10$), and a never-treated group, $\mathcal{G}_{\infty}$ ($N_{\infty}=60$), over 11 time periods. The baseline potential outcome again follows a TWFE model, $Y_{it}(0) = \alpha_i + \xi_t + \epsilon_{it}$. The key difference lies in the PT violation: I add a linear time trend, $0.75 \cdot t$, to the potential outcomes of the late-adopting cohort, $\mathcal{G}_{10}$. The treatment effect remains a constant $+3$ for all treated units in their respective post-treatment periods.

Following the same procedure as in the first example, I estimate the cohort-period level and aggregated coefficients and their corresponding variance-covariance matrices. Panels (a), (b), and (c) of Figure~\ref{fig:sim_impute_SD} display the resulting event-study plots for cohort $\mathcal{G}_{8}$, cohort $\mathcal{G}_{10}$, and the aggregated coefficients, respectively.

In panel (b), the pre-treatment coefficients for $\mathcal{G}_{10}$ exhibit a pronounced positive linear trend, corresponding directly to the violation I introduced. Because the initial control group for $\mathcal{G}_{8}$ includes $\mathcal{G}_{10}$, the violation of PT in $\mathcal{G}_{10}$ in turn induces a mild downward trend in the estimated pre-treatment block biases for $\mathcal{G}_{8}$, as seen in panel (a).

The SD restriction uses the trend in the last two pre-treatment periods ($s \in \{-1, 0\}$) as its benchmark. The case of $M=0$ constrains this violation of PT to be perfectly linear in the post-treatment period, with a slope identical to that between $s=-1$ and $s=0$. Allowing $M>0$ relaxes this assumption by further permitting the slope of the bias to change between consecutive post-treatment periods.

When applying the cohort-anchored framework, this restriction is tailored appropriately: the post-treatment bias for $\mathcal{G}_{8}$ is benchmarked against its own pre-trend, and likewise for $\mathcal{G}_{10}$. The aggregated framework, however, uses a single benchmark derived from the slope of the averaged coefficients. This aggregated upward benchmark—an average of $\mathcal{G}_{8}$'s mild downward trend and $\mathcal{G}_{10}$'s strong upward trend—is ill-suited for both individual cohorts. It is too large in magnitude (and of the wrong sign) to be a plausible benchmark for $\mathcal{G}_{8}$, yet too small in magnitude to properly capture the steep trend for $\mathcal{G}_{10}$.

Panel (d) of Figure~\ref{fig:sim_impute_SD} contrasts the plug-in identified sets and 95\% confidence sets for the ATT, using the cohort-anchored framework and the aggregated framework. This comparison starkly illustrates the consequences of using an aggregated benchmark. The confidence set from the aggregated framework, while slightly narrower—likely due to the reduced statistical uncertainty of averaged coefficients—is systematically biased downwards. This bias is a direct result of applying the ill-suited benchmark derived from the aggregated pre-trends. Consequently, the plug-in identified set constructed using the aggregated framework is centered far below the true ATT, and its confidence set barely covers the true value. In contrast, the cohort-anchored framework produces a plug-in identified set and confidence set that are better centered on the true ATT.

\begin{figure}
    \caption{Event-Study Plots and Confidence Set Comparison under the SD Restriction}
    \label{fig:sim_impute_SD}
    \centering
    \vspace{-0.5em}
    \begin{minipage}{\linewidth}{
        \begin{center}
          \subfigure[Event-study Graph of Cohort $\mathcal{G}_{8}$]{%
            \includegraphics[width=0.45\textwidth]{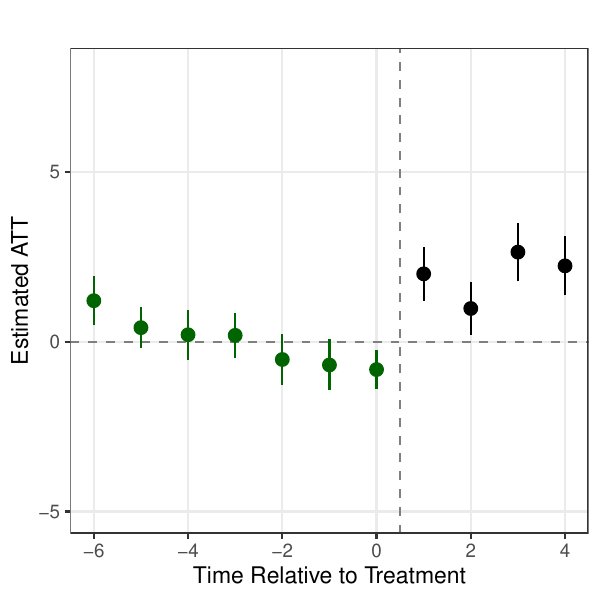}%
          }\hspace{0.25em}%
          \subfigure[Event-study Graph of Cohort $\mathcal{G}_{10}$]{%
            \includegraphics[width=0.45\textwidth]{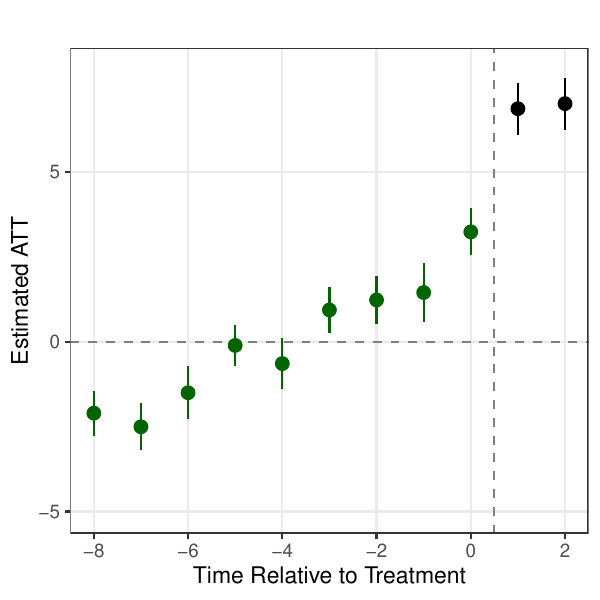}%
          }\vspace{-0.25em}
          \subfigure[Event-study Graph of Aggregated Coefficients]{%
            \includegraphics[width=0.45\textwidth]{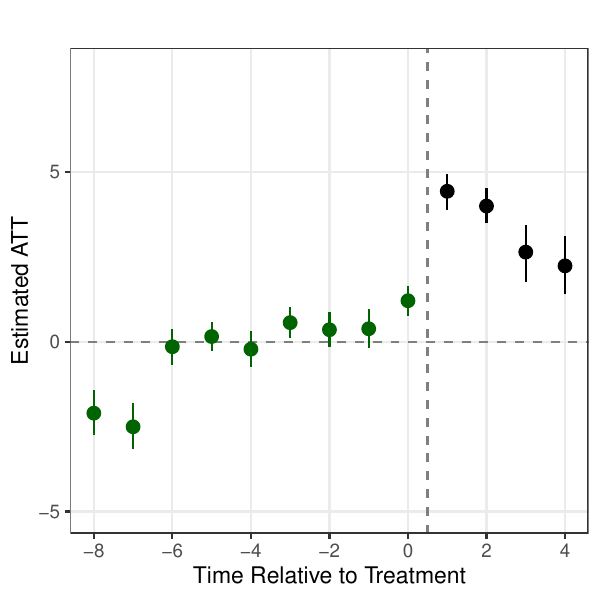}%
          }\hspace{0.25em}%
          \subfigure[Comparison of Confidence Sets]{%
            \includegraphics[width=0.45\textwidth]{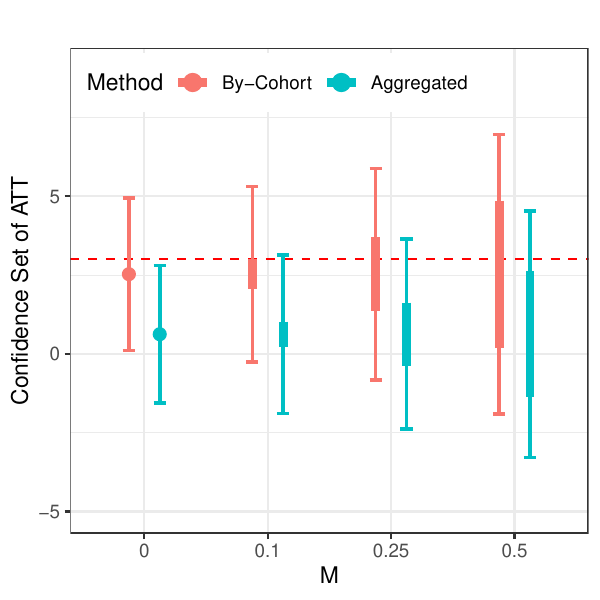}%
          }
        \end{center}
    }
    \end{minipage}

{\footnotesize\textbf{Note:} Panels (a) and (b) present event-study plots for the early-treated cohort ($\mathcal{G}_{8}$) and the late-treated cohort ($\mathcal{G}_{10}$), while Panel (c) shows the plot for the aggregated coefficients. All coefficients are derived from the imputation estimator. Panel (d) compares confidence sets for the ATT under the SD restriction, plotting them against the sensitivity parameter $M$. The red horizontal dashed line indicates the true ATT value. The vertical lines show plug-in identified sets (thick) and confidence sets (thin) derived from two approaches: one using the cohort-anchored framework (red), and one using the aggregated framework (cyan).}
\end{figure}

To clarify the differences between the two frameworks, Figure~\ref{fig:sim_impute_SD_dyn} disaggregates the results by plotting confidence sets for the ATT at each relative post-treatment period. Panel (a) shows the aggregated framework; panel (b) presents the cohort-anchored framework. The main divergence between the two methods appears at $s=3$ and $s=4$.

The ATTs in these later periods are identified solely from the early-treated cohort, $\mathcal{G}_{8}$. As established above, $\mathcal{G}_{8}$ exhibits a mild downward pre-trend. The aggregated framework, however, applies a benchmark derived from the upward trend in the averaged pre-treatment series. This mismatch leads to erroneous inference, producing the downward-biased confidence sets shown in panel (a) of Figure~\ref{fig:sim_impute_SD_dyn}. In contrast, the cohort-anchored framework avoids this problem. As shown in panel (b), it correctly applies the benchmark from $\mathcal{G}_{8}$’s own pre-treatment history to its post-treatment coefficients, yielding more accurately centered confidence sets.

\begin{figure}[!ht]
    \caption{Comparison of By-Period Confidence Sets}
    \label{fig:sim_impute_SD_dyn}
    \centering
    \vspace{-0.5em}
    \begin{minipage}{\linewidth}{
        \begin{center}
          \subfigure[Aggregated Framework]{%
            \includegraphics[width=0.45\textwidth]{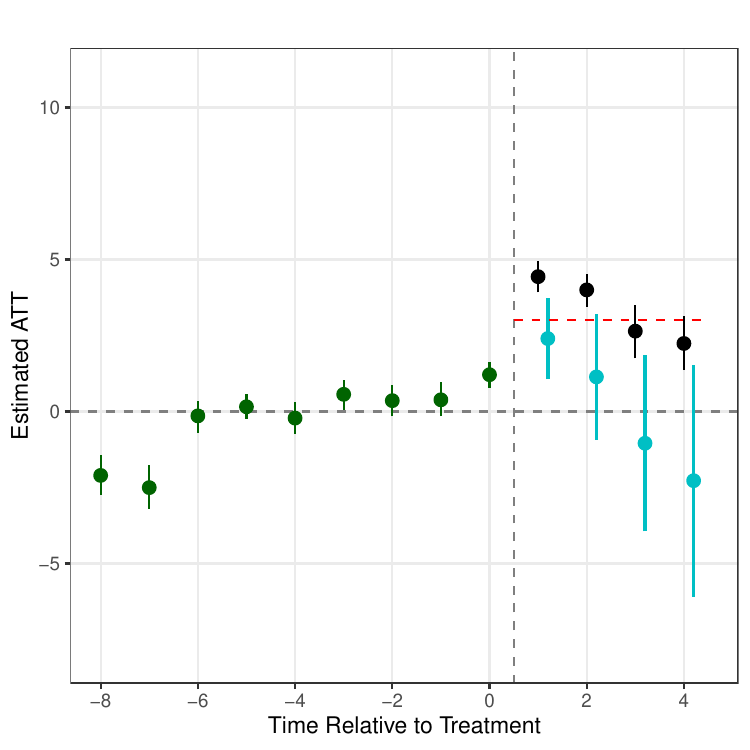}%
          }\hspace{1em}%
          \subfigure[Cohort-Anchored Framework]{%
            \includegraphics[width=0.45\textwidth]{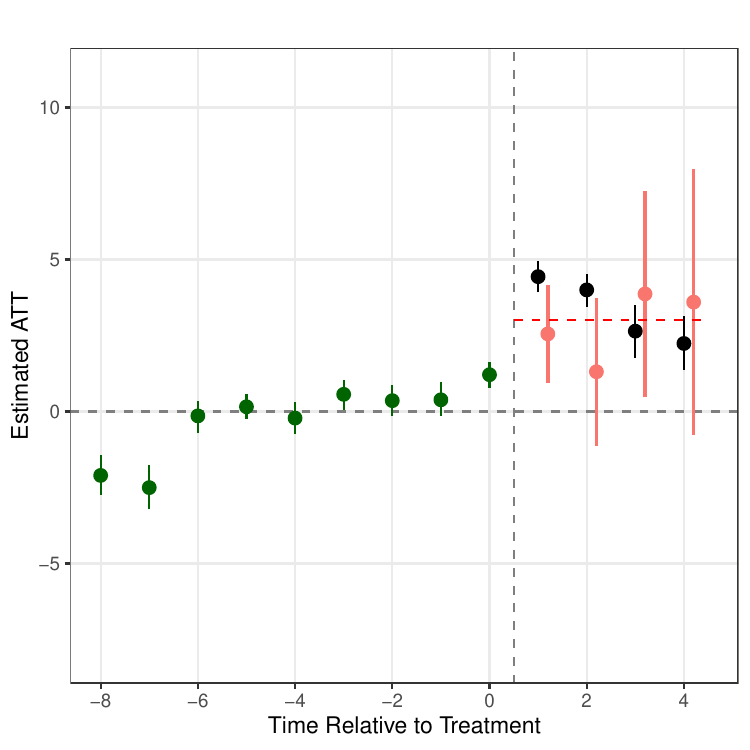}%
          }
        \end{center}
    }
    \end{minipage}
{\footnotesize\textbf{Note:} This figure displays event-study plots overlaid with by-period confidence sets for the post-treatment coefficients. All coefficients are derived from the imputation estimator. The confidence sets and corrected point estimates are calculated under the SD restriction with a sensitivity parameter of $M=0$. The two panels contrast the calculation method: Panel (a) uses the aggregated framework, while Panel (b) uses the cohort-anchored framework. The red horizontal dashed line indicates the true ATT value.
}
\end{figure}

In the appendix, I demonstrate that my main findings are robust to the choice of estimator by conducting a parallel analysis with the \textit{CS-NYT} estimator (Figures~\ref{fig:sim_CSnotyet_SD} and \ref{fig:sim_CSnotyet_SD_dyn}). The qualitative results are consistent. However, the discrepancy between the aggregated and cohort-anchored frameworks is less pronounced than with the imputation estimator.

\section{Empirical Example: Minimum Wage and Teen Employment}
\label{sec:empirical_example}

To demonstrate the practical utility and implications of my framework, I revisit a prominent empirical example from \citet{callaway2021-did}: the effect of minimum wage increases on teen employment. I apply the cohort-anchored robust inference framework to their data and compare my findings to both the original study's conclusions and the results from the robust inference based on the aggregated coefficients.

The empirical analysis uses county-level data on teen employment from the Quarterly Workforce Indicators for the period 2001--2007, during which the federal minimum wage was held constant at \$5.15 per hour. The ``treatment'' is defined as a state-level increase in the minimum wage above this federal baseline. Because states implemented these increases at different times (in year 2004, 2006, 2007), the design is naturally staggered, creating three distinct treatment cohorts based on the year of adoption. The analysis focuses on counties in states that initially adhered to the federal minimum wage; those in states that later raised their minimum wage form treatment groups, while those in states that did not change their policy serve as the never-treated control group.

In the original analysis, \citet{callaway2021-did} finds evidence that increasing the minimum wage led to a reduction in teen employment. I aim to assess the sensitivity of these conclusions to potential violations of PT that are not explicitly modeled by the original estimator. With the imputation estimator, I consider the unconditional PT assumption that is not conditional on pre-treatment covariates.

Following the same procedure, I use the imputation estimator to obtain the cohort-period level and aggregated coefficients. Panels (a)--(d) of Figure~\ref{fig:min_wage_impute_compare} display the event-study plots for cohort $\mathcal{G}_{4}$ ($N_4=100$), cohort $\mathcal{G}_{6}$ ($N_6=223$), cohort $\mathcal{G}_{7}$ ($N_7=584$), and the aggregated series, respectively.\footnote{The data covers the period 2001--2007. For notational convenience, I denote the cohort first treated in 2004 as $\mathcal{G}_{4}$, in 2006 as $\mathcal{G}_{6}$, and in 2007 as $\mathcal{G}_{7}$.} The number of units in the never-treated cohort is $N_{\infty}=1377$. The plots reveal notable heterogeneity in the pre-trends across cohorts. Both cohort $\mathcal{G}_{4}$ and cohort $\mathcal{G}_{6}$ exhibit upward trends in their pre-treatment periods, while cohort $\mathcal{G}_{7}$ shows a downward trend in its \textit{last two pre-treatment periods}. This observed heterogeneity motivates the application of the cohort-anchored framework.

Panel (e) of Figure~\ref{fig:min_wage_impute_compare} plots the identified and confidence sets derived from the cohort-anchored framework (both cohort-specific and global benchmarks) alongside those from the aggregated framework, all under the RM restriction. In this empirical example, the confidence set from the aggregated framework is narrower than those from the cohort-anchored frameworks when $\overline{M}>0$. This outcome stems from two factors.

First and more importantly, the aggregated pre-treatment series is less volatile than the individual cohort series, which results in a less conservative (i.e., smaller) benchmark in the RM restriction. This can be seen by comparing the plug-in identified sets: the set from the aggregated approach is much narrower. Secondly, the cohort-period level coefficients have larger statistical uncertainty than the aggregated coefficients. The combination of a more conservative restriction and higher statistical variance results in the wider confidence sets for the cohort-anchored framework.

It is possible that these wider confidence sets appear to make the cohort-anchored framework less attractive. However, as I discussed before, a direct comparison of the set widths between the cohort-anchored and aggregated frameworks is inappropriate. The RM restrictions are imposed on different objects and imply different benchmarks for the same $\overline{M}$. I therefore view the resulting wider confidence set not as a drawback, but as the necessary price for conducting a more transparent and interpretable robust inference.

Panel (f) of Figure~\ref{fig:min_wage_impute_compare} contrasts the identified and confidence sets under the SD restriction, comparing the cohort-anchored framework to the aggregated framework. While the former yields a slightly wider confidence set, a key difference emerges in their location. The confidence set from the aggregated framework is centered around zero even when $M=0$, suggesting that the estimated negative treatment effect is not robust to a linear violation of the PT assumption. In contrast, the set from the cohort-anchored framework remains centered below zero and only barely includes zero when $M$ is small. This indicates that the negative treatment effect is largely robust to cohort-specific linear PT violations.

Panels (g) and (h) of Figure~\ref{fig:min_wage_impute_compare} display the by-period confidence sets, which clarify the divergence observed in Panel (f). The aggregated approach in Panel (g) produces inaccurate results because the pre-trend of the aggregated series is dominated by the largest cohort, $\mathcal{G}_{7}$, which exhibits a mild downward trend. This creates an ill-suited benchmark for cohorts $\mathcal{G}_{4}$ and $\mathcal{G}_{6}$, both of which have upward pre-trends. Applying this mismatched benchmark results in significant bias, particularly for the ATT estimates in later post-treatment periods ($s \geq 2$), which are identified exclusively from cohorts $\mathcal{G}_{4}$ and $\mathcal{G}_{6}$. In contrast, the cohort-anchored framework in Panel (h) applies the appropriate, cohort-specific benchmark to each series, resulting in more credible, better-centered confidence sets. This suggests that the original study's conclusion of a negative treatment effect holds, and may even be larger in magnitude for cohorts $\mathcal{G}_{4}$ and $\mathcal{G}_{6}$ after accounting for their specific pre-trends.

In the appendix, I repeat the analysis using the \textit{CS-NYT} estimator and find qualitatively similar conclusions (Figure~\ref{fig:min_wage_CSnotyet_compare}). One notable difference arises under the RM restriction: the confidence sets from the cohort-anchored framework with the cohort-specific benchmarks are of a similar width to those from the aggregated approach. This is because the underlying benchmarks for the RM restriction are more comparable across the two frameworks, a fact reflected in the similar widths of their respective plug-in identified sets. This contrasts with the imputation estimator, where the cohort-anchored framework produced a more conservative benchmark. This highlights that the choice of estimator—and therefore the definition of the block biases—can impact the benchmarks used in robust inference and the width of the resulting confidence sets.

Taken together, the results from the RM and SD restrictions illustrate a pitfall of the aggregated framework in this application: it can lead to qualitatively different—and likely erroneous—conclusions about the robustness of the estimated treatment effects. The cohort-anchored framework, while sometimes sacrificing statistical power in favor of wider, more conservative confidence sets, provides a more transparent, interpretable, and more credible tool for robust inference.

\begin{figure}[!ht]
\caption{Event-Study Analysis of Minimum Wage Data from \citet{callaway2021-did}}
\label{fig:min_wage_impute_compare}
\centering
\vspace{-0.5em}
\begin{minipage}{\linewidth}
    \begin{center}
      \subfigure[Cohort $\mathcal{G}_{4}$]{%
        \includegraphics[width=0.24\textwidth]{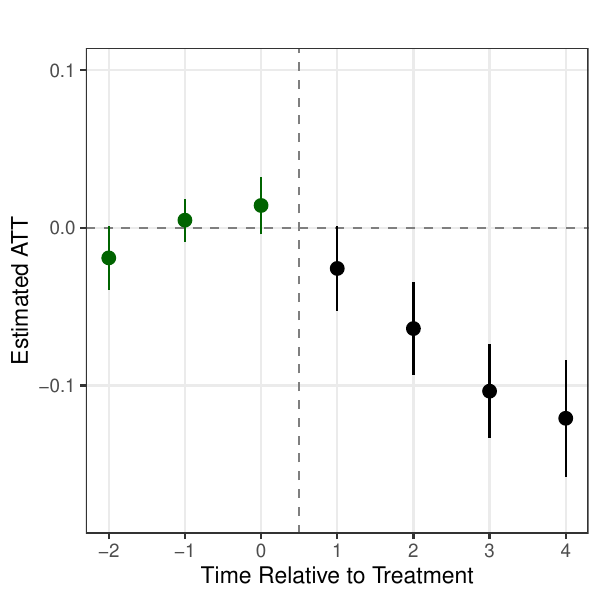}%
      }\
      \subfigure[Cohort $\mathcal{G}_{6}$]{%
        \includegraphics[width=0.24\textwidth]{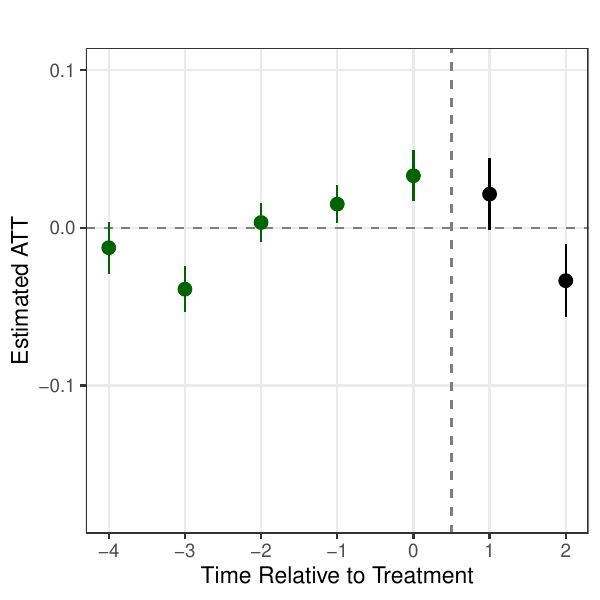}%
      }
      \subfigure[Cohort $\mathcal{G}_{7}$]{%
        \includegraphics[width=0.24\textwidth]{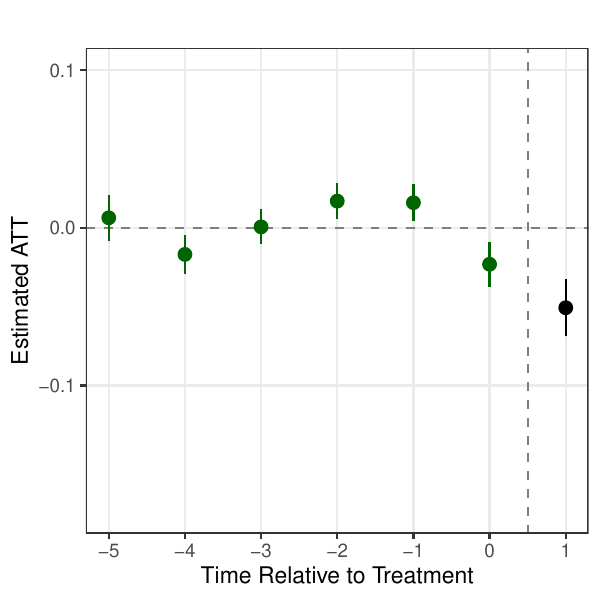}%
      }
      \subfigure[Aggregated Coefficients]{%
        \includegraphics[width=0.24\textwidth]{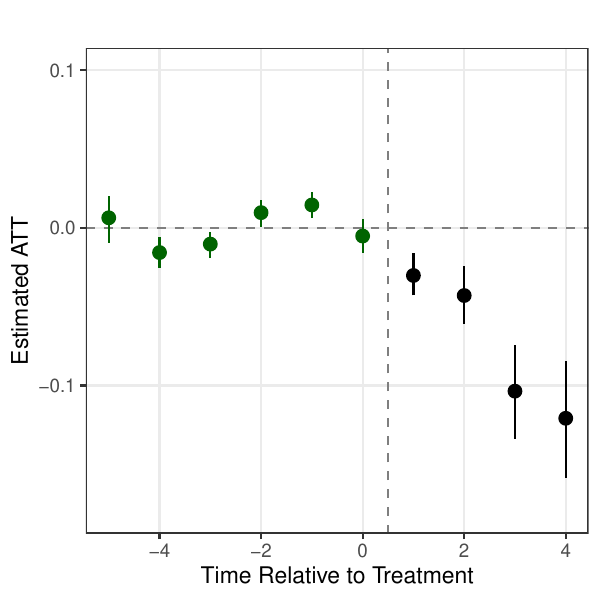}%
      }

      \subfigure[Comparison of Confidence Sets (RM)]{%
        \includegraphics[width=0.24\textwidth]{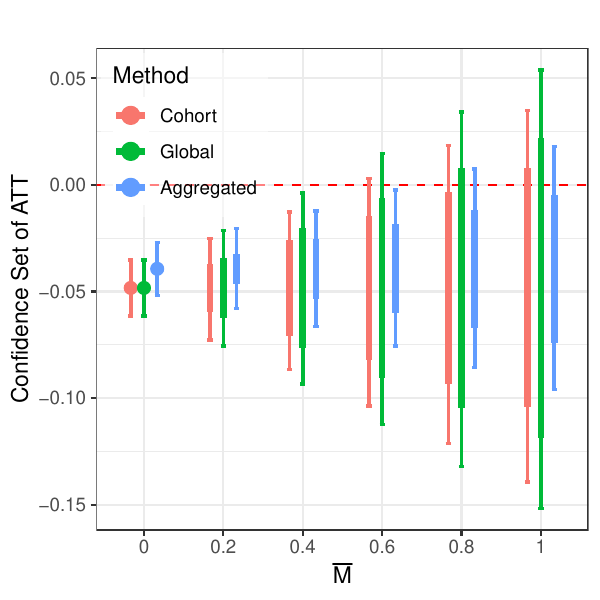}%
      }
      \subfigure[Comparison of Confidence Sets (SD)]{%
        \includegraphics[width=0.24\textwidth]{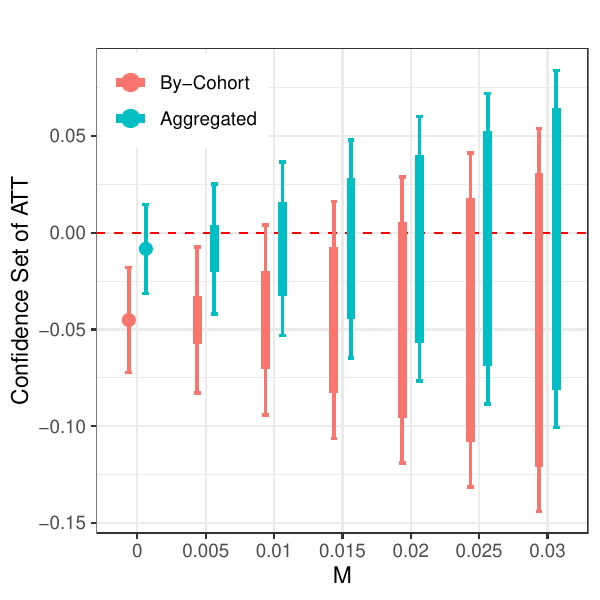}%
      }
      \subfigure[Confidence Sets by Relative Periods (Aggregated)]{%
        \includegraphics[width=0.24\textwidth]{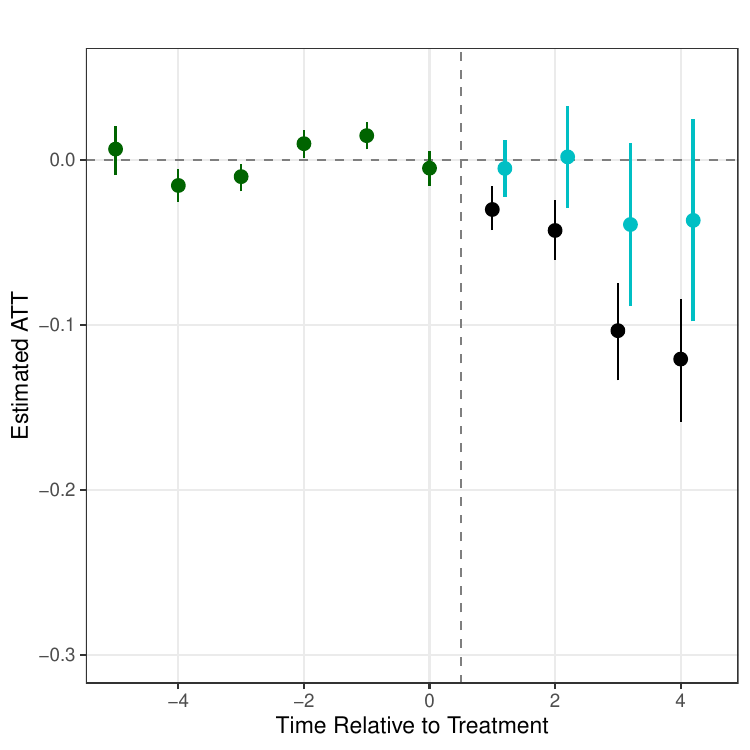}%
      }
      \subfigure[Confidence Sets by Relative Periods (Cohort-Anchored)]{%
        \includegraphics[width=0.24\textwidth]{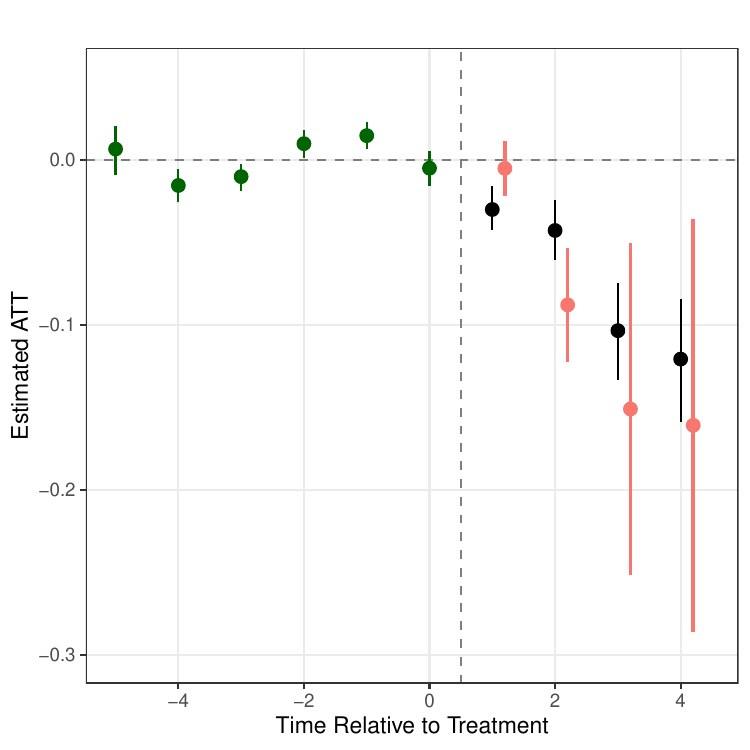}%
      }
    \end{center}
\end{minipage}
{\footnotesize
\textbf{Note:} 
The top row (Panels a-d) shows event-study plots for three treatment cohorts ($\mathcal{G}_{4}$, $\mathcal{G}_{6}$, and $\mathcal{G}_{7}$) and their aggregated coefficients; all coefficients are from the imputation estimator. Panels (e) and (f) compare plug-in identified sets (thick) and confidence sets (thin) for the ATT under the RM and SD restrictions, respectively. Panel (e) plots sets against the sensitivity parameter $\overline{M}$, comparing the cohort-anchored framework (cohort-specific benchmarks in red and global benchmark in green) to the aggregated framework (blue). Panel (f) plots sets against $M$, comparing the cohort-anchored framework (red) and the aggregated framework (cyan). Panels (g) and (h) display by-period confidence sets under the SD restriction with $M=0$. Both overlay corrected estimates (identified point) and confidence sets on the event-study plot. Panel (g) constructs the sets using the aggregated framework, while Panel (h) uses cohort-anchored framework.
}
\end{figure}

\clearpage

\section{Conclusion}
\label{sec:conclusion}

This paper develops a cohort-anchored framework for conducting robust inference in staggered-adoption settings, addressing the problems of dynamic treated composition and dynamic control group that challenge robust inference based on aggregated event-study coefficients. I operates at the more granular cohort–period level and introduce the concept of the block bias—a PT violation for a given cohort relative to its anchored initial control group. I derive a bias decomposition that expresses the overall bias as an invertible linear transformation of these block biases. This decomposition lets us impose transparent, theoretically grounded restrictions on interpretable block biases and then map them to overall biases for inference using the machinery of \RRR. Simulated and empirical results show that the cohort-anchored framework can yield more accurately centered—and sometimes narrower—confidence sets than the aggregated approach, especially when pre-trends are heterogeneous across cohorts.

The framework is most suitable when there are several cohorts (but not too many if applying RM with cohort-specific benchmarks) and each cohort is large enough that cohort–period coefficients can be estimated with reasonable precision. It offers the greatest advantages over robust inference based on aggregated coefficients when by-cohort heterogeneity is substantial. At present, the framework does not accommodate missing data or estimators that weight units using pre-treatment covariates (e.g., inverse probability weighting or the doubly robust estimators in \citet{callaway2021-did}), because missingness and unit-level reweighting can alter the bias-decomposition identity linking overall and block biases. Extending the framework to conditional-PT settings with covariates—and to designs with missing data—is an important direction for future research.

\clearpage

\bibliographystyle{ecta}
\bibliography{refs.bib}
\clearpage

\appendix
\onehalfspacing
\setcounter{page}{1}
\setcounter{table}{0}
\setcounter{figure}{0}
\setcounter{equation}{0}
\setcounter{footnote}{0}
\setcounter{lemma}{0}
\setcounter{proposition}{0}
\renewcommand{\theassumption}{A\arabic{assumption}}
\renewcommand\thetable{A\arabic{table}}
\renewcommand\thefigure{A\arabic{figure}}
\renewcommand{\thepage}{A-\arabic{page}}
\renewcommand{\theequation}{A\arabic{equation}}
\renewcommand{\thefootnote}{A\arabic{footnote}}
\renewcommand{\thelemma}{A\arabic{lemma}}

\section{Appendix A: Mathematical Proof}

\subsection{Imputation Estimator under Staggered Adoption}
\label{sec:appendix_imputation_staggered}
\subsubsection{Model Setup}
We consider a DiD design with staggered treatment adoption. The framework consists of the following components:

\begin{itemize}
    \item \textbf{Units and Time}: We observe a balanced panel of $N$ units, $i = 1, \ldots, N$, over $T$ periods, $t = 1, \ldots, T$.

    \item \textbf{Cohorts and Treatment Adoption}: Units are classified into $G+1$ cohorts based on when they receive treatment. There are $G$ treated cohorts, indexed by $g \in \{1, \ldots, G\}$. Each cohort $\mathcal{G}_g$ begins treatment in at the period $t_g$. We assume these adoption times are ordered such that $1 < t_1 < t_2 < \cdots < t_G \le T$. There is one never-treated cohort and we denote its adoption time as $t_\infty = \infty$. Let $\mathcal{G}_g$ be the set of units belonging to the cohort begins treatment at $t_g$. The number of units in cohort $\mathcal{G}_g$ is $N_g = |\mathcal{G}_g|$.

    \item \textbf{Treatment Indicator}: The treatment status of unit $i$ at period $t$ depends on the cohort it belongs to. If unit $i \in \mathcal{G}_g$, its treatment indicator is $D_{it} = \mathbf{1}\{t \ge t_g\}$. For never-treated units ($i \in \mathcal{G}_\infty$), $D_{it} = 0$ for all $t$.

    \item \textbf{Potential and Observed Outcomes}: For each unit $i$ at time $t$, $Y_{it}(1)$ is the potential outcome if treated, and $Y_{it}(0)$ is the potential outcome if not treated. The observed outcome is $Y_{it} = D_{it} Y_{it}(1) + (1-D_{it}) Y_{it}(0)$.
\end{itemize}

\subsubsection{Imputation Estimator and Related Properties}

With the imputation estimator, we denote the estimated average treatment effect for cohort $\mathcal{G}_g$ at a relative post-treatment period $s \ge 1$ as $\hat{\tau}_{\mathcal{G}_g,s}^{\text{Imp}}$. The calendar time corresponding to this cohort-period cell is $t = t_g + s-1$.

\paragraph*{The Imputation Estimator}

The imputation estimator yields $\hat{\tau}_{\mathcal{G}_g,s}^{\text{Imp}}$ with the two steps.

\begin{enumerate}
    \item \textbf{Estimate Fixed Effects}: Estimate the unit and time fixed effects, $(\hat{\alpha}_i, \hat{\xi}_t)$, by minimizing the sum of squared residuals for a TWFE model, $Y_{it} = \alpha_i + \xi_t + \epsilon_{it}$, using the set of \textit{all} untreated observations in the panel, $\mathcal{O} = \{(i,t) | D_{it}=0\}$. We impose a standard normalization on fixed effects (e.g., $\xi_1=0$) for identification.
    \item \textbf{Impute and Estimate}: The treatment effect for cohort $\mathcal{G}_g$ at relative period $s$ is the average difference between the observed outcome and the imputed counterfactual based on these estimated fixed effects:
    \begin{equation*}
    \hat{\tau}_{\mathcal{G}_g,s}^{\text{Imp}} = \frac{1}{N_g} \sum_{i \in \mathcal{G}_g} \left( Y_{i,t_g+s-1} - (\hat{\alpha}_i + \hat{\xi}_{t_g+s-1}) \right)
    \end{equation*}
\end{enumerate}

\begin{lemma}[Properties of OLS Residuals from the Standard Estimator]
\label{lemma_residuals}
Let $(\hat{\alpha}_i, \hat{\xi}_t)$ be the fixed effects obtained by minimizing the sum of squared residuals over the set of all untreated observations, $\mathcal{O}$. Let $\hat{e}_{it} = Y_{it} - (\hat{\alpha}_i + \hat{\xi}_t)$ be the corresponding OLS residual for any untreated observation $(i,t) \in \mathcal{O}$. These residuals satisfy the following properties:
\begin{enumerate}
    \item For any unit $i$, the sum of its residuals over all periods where it is included in the estimation sample is exactly zero. This implies:
    \begin{itemize}
        \item[(a)] For any treated unit $i \in \mathcal{G}_g$ (where $g \in \{1, \ldots, G\}$), the sum over its pre-treatment periods is zero:
        \[ \sum_{k=1}^{t_g-1} \hat{e}_{ik} = 0 \]
        \item[(b)] For any never-treated unit $i \in \mathcal{G}_\infty$, the sum over all time periods is zero:
        \[ \sum_{k=1}^{T} \hat{e}_{ik} = 0 \]
    \end{itemize}
    \item For any time period $t \in \{1, \ldots, T\}$, the sum of residuals over all units that are untreated at that time is zero:
    \[ \sum_{j \text{ s.t. } D_{jt}=0} \hat{e}_{jt} = 0 \]
\end{enumerate}
\end{lemma}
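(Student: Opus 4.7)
\textbf{Proof Proposal for Lemma~\ref{lemma_residuals}.}

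The plan is to derive both claims directly from the first-order conditions (FOCs) of the least–squares problem that defines $(\hat{\alpha}_i,\hat{\xi}_t)$. Writing the objective as
\[
(\hat{\alpha},\hat{\xi}) \in \argmin_{(\alpha,\xi)} \sum_{(i,t)\in\mathcal{O}} \bigl(Y_{it}-\alpha_i-\xi_t\bigr)^2,
\]
the FOC with respect to $\alpha_i$ for any unit $i$ yields
$
\sum_{t:(i,t)\in\mathcal{O}} \hat{e}_{it}=0,
$
and the FOC with respect to $\xi_t$ for any period $t$ yields
$
\sum_{i:(i,t)\in\mathcal{O}} \hat{e}_{it}=0.
$
These two FOCs are exactly the two properties in the lemma, once one identifies the sets $\{t:(i,t)\in\mathcal{O}\}$ and $\{i:(i,t)\in\mathcal{O}\}$ with the appropriate summation ranges.

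Next, I would translate these abstract summation sets into the concrete ranges stated in the lemma. For a treated unit $i\in\mathcal{G}_g$, the observations included in $\mathcal{O}$ are exactly those with $D_{it}=0$, i.e.\ $t\in\{1,\ldots,t_g-1\}$, which gives part (1a). For a never-treated unit $i\in\mathcal{G}_\infty$, all periods satisfy $D_{it}=0$, which gives part (1b). For any calendar time $t$, the set of units with $D_{it}=0$ is precisely the set of units not-yet-treated (or never-treated) at $t$, which is property (2).

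One subtlety worth flagging is the identification normalization (e.g.\ $\xi_1=0$): a normalization drops one FOC, so strictly speaking only $N+T-1$ of the above equations are ``imposed'' by the optimizer. I would resolve this by appealing to the projection view of OLS: the fitted values $\hat{Y}_{it}=\hat{\alpha}_i+\hat{\xi}_t$ are the orthogonal projection of the vector $\{Y_{it}\}_{(i,t)\in\mathcal{O}}$ onto the column space spanned by the unit–dummies and the time–dummies. Because this column space, and hence the projection, does not depend on which particular normalization is chosen, the residual vector is orthogonal to every unit dummy and every time dummy (including any dropped one). This is what delivers the FOC identity in the one case the chosen normalization nominally omits.

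I do not anticipate any real obstacle beyond this small bookkeeping point about the normalization; once the projection argument is invoked, properties (1) and (2) follow in one line each from the two families of FOCs. The only thing to be careful about is to write the index sets correctly and to confirm that $\{t:(i,t)\in\mathcal{O}\}$ and $\{i:(i,t)\in\mathcal{O}\}$ coincide with the ranges named in the lemma under Assumption~\ref{ass:setup} (staggered adoption, balanced panel, non-empty never-treated cohort), which guarantees that each of the sums is over a non-empty set so that the FOCs are non-trivial.
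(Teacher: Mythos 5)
Your proposal is correct and follows essentially the same route as the paper's proof: both derive Properties 1 and 2 directly from the first-order conditions of the least-squares problem on $\mathcal{O}$ and then identify the index sets $\{t:(i,t)\in\mathcal{O}\}$ and $\{i:(i,t)\in\mathcal{O}\}$ with the stated summation ranges. Your additional remark about the normalization (that the projection onto the span of unit and time dummies is normalization-invariant, so the nominally dropped FOC still holds) is a point the paper's proof passes over silently, and it is a worthwhile clarification rather than a deviation.
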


\begin{proof}
The fixed effects $(\hat{\alpha}_i, \hat{\xi}_t)$ are the solution to the minimization of the sum of squared residuals (SSR) over the sample $\mathcal{O}$:
\[
\min_{\{\alpha_i\}, \{\xi_t\}} SSR(\alpha, \xi) = \min_{\{\alpha_i\}, \{\xi_t\}} \sum_{(j,k) \in \mathcal{O}} (Y_{jk} - \alpha_j - \xi_k)^2
\]
The solution is characterized by the first-order conditions (FOCs), where the partial derivative of the SSR with respect to each parameter is zero.

\noindent \textbf{Proof of Property 1:}
The FOC for a unit fixed effect $\alpha_i$ is found by differentiating the SSR with respect to $\alpha_i$. The terms in the SSR that depend on $\alpha_i$ are those corresponding to unit $i$ in the sample $\mathcal{O}$.
\[
\frac{\partial SSR}{\partial \alpha_i} \bigg|_{\hat{\alpha}, \hat{\xi}} = -2 \sum_{k \text{ s.t. } (i,k) \in \mathcal{O}} (Y_{ik} - \hat{\alpha}_i - \hat{\xi}_k) = 0
\]
The term inside the summation is the OLS residual $\hat{e}_{ik}$. Thus, the FOC implies a property for any unit $i$:
\[
\sum_{k \text{ s.t. } (i,k) \in \mathcal{O}} \hat{e}_{ik} = 0
\]
This property has specific implications for each type of unit:
\begin{itemize}
    \item[(a)] For a treated unit $i \in \mathcal{G}_g$, the observations in the sample $\mathcal{O}$ are its observations in pre-treatment periods, i.e., $\{(i,k) \mid 1 \le k < t_g\}$. The condition thus becomes:
    \[ \sum_{k=1}^{t_g-1} \hat{e}_{ik} = 0 \]
    \item[(b)] For a never-treated unit $i \in \mathcal{G}_\infty$, all of its observations are in the sample $\mathcal{O}$, i.e., $\{(i,k) \mid 1 \le k \le T\}$. The condition thus becomes:
    \[ \sum_{k=1}^{T} \hat{e}_{ik} = 0 \]
\end{itemize}

\noindent \textbf{Proof of Property 2:}
The FOC for the time fixed effect $\xi_t$ is found by differentiating the SSR w.r.t $\xi_t$. The terms that depend on $\xi_t$ are those for all units $j$ that are untreated at time $t$ (i.e., $(j,t) \in \mathcal{O}$).
\[
\frac{\partial SSR}{\partial \xi_t} \bigg|_{\hat{\alpha}, \hat{\xi}} = -2 \sum_{j \text{ s.t. } D_{jt}=0} (Y_{jt} - \hat{\alpha}_j - \hat{\xi}_t) = 0
\]
The term inside the summation is the residual $\hat{e}_{jt}$. This FOC implies:
\[
\sum_{j \text{ s.t. } D_{jt}=0} \hat{e}_{jt} = 0
\]

\noindent As there is always a never-treated cohort, it holds for any time period $t$.
\end{proof}

We can use the Lemma~\ref{lemma_residuals} to derive the expression for the imputed counterfactual for the first post-treatment period $(s=1)$ of each treated cohort.

\begin{lemma}[Lemma \ref{lemma:initial_period_equivalence} in the main text]
For any unit $i \in \mathcal{G}_g$, the imputed counterfactual for its first treatment period, $s=1$ (which corresponds to the calendar period $t_g$), can be expressed as:
\[
\hat{Y}_{i,t_g}^{\text{Imp}}(0) \equiv \hat{\alpha}_i + \hat{\xi}_{t_g} = \overline{Y}_{i, \text{pre}_g} + \left( \overline{Y}_{\mathcal{C}_{g,1}, t_g} - \overline{Y}_{\mathcal{C}_{g,1}, \text{pre}_g} \right)
\]
where:
\begin{itemize}
    \item $\text{pre}_g = \{1, \ldots, t_g-1\}$ is the set of pre-treatment periods for cohort $\mathcal{G}_g$.
    \item $\mathcal{C}_{g,1} = \left(\bigcup_{k: t_k>t_g} \mathcal{G}_k\right) \cup \mathcal{G}_{\infty}$ is initial control group of $\mathcal{G}_g$.
    \item $\overline{Y}_{A,P}$ denotes the average outcome for the set of units $A$ over the set of time periods $P$.
\end{itemize}
\label{lemma:initial_period_equivalence_appendix}
\end{lemma}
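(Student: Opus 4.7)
My plan is to reduce the claim to a single algebraic identity linking the estimated time fixed effects, and then to verify that identity using only the unit and time first-order conditions supplied by Lemma~\ref{lemma_residuals}. Applying the unit FOC for $i\in\mathcal{G}_g$ (Property 1(a)) and averaging gives $\hat{\alpha}_i=\overline{Y}_{i,\text{pre}_g}-\overline{\hat{\xi}}_{\text{pre}_g}$, where $\overline{\hat{\xi}}_{\text{pre}_g}\equiv\tfrac{1}{t_g-1}\sum_{k=1}^{t_g-1}\hat{\xi}_k$. Substituting into $\hat{Y}_{i,t_g}^{\text{Imp}}(0)=\hat{\alpha}_i+\hat{\xi}_{t_g}$ reduces the lemma to the single time-effect identity $\hat{\xi}_{t_g}-\overline{\hat{\xi}}_{\text{pre}_g}=\overline{Y}_{\mathcal{C}_{g,1},t_g}-\overline{Y}_{\mathcal{C}_{g,1},\text{pre}_g}$.

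Next I will recast this identity in a form amenable to the FOCs. At period $t_g$ the untreated sample is exactly $\mathcal{C}_{g,1}$, so the time FOC (Property 2) yields $\overline{\hat{\alpha}}_{\mathcal{C}_{g,1}}=\overline{Y}_{\mathcal{C}_{g,1},t_g}-\hat{\xi}_{t_g}$. Substituting the residual definition $\hat{e}_{jt}=Y_{jt}-\hat{\alpha}_j-\hat{\xi}_t$ and rearranging, the target identity collapses to the single residual-cancellation claim
\[
\sum_{t=1}^{t_g-1}\sum_{j\in\mathcal{C}_{g,1}}\hat{e}_{jt}=0.
\]

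To establish this, I will combine both FOC families. For each $t\in\{1,\dots,t_g-1\}$ the time FOC gives $\sum_{j\in\mathcal{U}_t}\hat{e}_{jt}=0$, where $\mathcal{U}_t$ denotes the untreated sample at $t$; hence the residual sum over $\mathcal{C}_{g,1}$ equals minus the sum over $\mathcal{U}_t\setminus\mathcal{C}_{g,1}=\bigcup_{k\le g,\,t_k>t}\mathcal{G}_k$, i.e.\ units whose cohorts are treated no later than $t_g$ but are still untreated at $t$. Swapping the order of summation converts this double sum into an outer sum over such cohorts $\mathcal{G}_k$ with $k\le g$, each inner sum ranging over $\{1,\dots,t_k-1\}$; Property 1(a) then annihilates every inner sum, and the required identity follows. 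The main obstacle, in my view, is that $\overline{\hat{\alpha}}_{\mathcal{C}_{g,1}}$ cannot be matched directly to the pre-treatment average of $\hat{\xi}_t$, because the units in $\mathcal{C}_{g,1}$ belong to heterogeneous cohorts with different OLS windows; the nesting property — that $\{1,\dots,t_k-1\}\subseteq\{1,\dots,t_g-1\}$ for every intruder cohort $\mathcal{G}_k$ with $k\le g$ — is what allows the swap of summations to cancel everything out.
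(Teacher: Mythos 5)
Your proposal is correct and follows essentially the same route as the paper's proof: both use the unit FOC to write $\hat{\alpha}_i=\overline{Y}_{i,\text{pre}_g}-\overline{\hat{\xi}}_{\text{pre}_g}$, the time FOC at $t_g$ to pin down $\hat{\xi}_{t_g}$, and then establish $\sum_{t=1}^{t_g-1}\sum_{j\in\mathcal{C}_{g,1}}\hat{e}_{jt}=0$ by summing the per-period time FOCs and annihilating the contribution of the earlier-treated cohorts via Property 1(a) after swapping the order of summation. The only difference is presentational (you collapse the target to a single residual-cancellation claim before invoking the FOCs, whereas the paper derives the two time-effect expressions separately and subtracts), so no substantive comparison is needed.
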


\begin{proof}
\label{proof:initial_period_equivalence}
The proof proceeds in three steps. First, we derive the expression for the estimated unit fixed effect. Second, we derive expressions for estimated time effects. Third, we combine these results to establish the equivalence.

\noindent \textbf{Step 1: Characterize Unit Fixed Effects}

From Lemma~\ref{lemma_residuals} (Property 1a), the sum of residuals for unit $i \in \mathcal{G}_g$ over its pre-treatment periods is zero.
\[
\sum_{k=1}^{t_g-1} (Y_{ik} - \hat{\alpha}_i - \hat{\xi}_k) = 0
\]
Solving for $\hat{\alpha}_i$ by averaging over the $t_g-1$ periods yields:
\[
\hat{\alpha}_i = \overline{Y}_{i, \text{pre}_g} - \overline{\hat{\xi}}_{\text{pre}_g}
\]
Substituting this into the definition of the imputed counterfactual gives:
\begin{equation} \label{eq:revised_intermediate}
\hat{Y}_{i,t_g}^{\text{Imp}}(0) = \hat{\alpha}_i + \hat{\xi}_{t_g} = \overline{Y}_{i, \text{pre}_g} + \left(\hat{\xi}_{t_g} - \overline{\hat{\xi}}_{\text{pre}_g}\right)
\end{equation}
The proof now reduces to showing that $\hat{\xi}_{t_g} - \overline{\hat{\xi}}_{\text{pre}_g} = \overline{Y}_{\mathcal{C}_{g,1}, t_g} - \overline{Y}_{\mathcal{C}_{g,1}, \text{pre}_g}$.

\noindent \textbf{Step 2: Characterize Time Effects}

First, at time $t_g$, the set of untreated units is exactly the initial control group of $\mathcal{G}_g$, $\mathcal{C}_{g,1}$. By Lemma~\ref{lemma_residuals} (Property 2), the average residual for this group is zero:
\[ \sum_{j \in \mathcal{C}_{g,1}} \hat{e}_{j,t_g} = 0 \quad \implies \quad 
\overline{Y}_{\mathcal{C}_{g,1}, t_g} - \overline{\hat{\alpha}}_{\mathcal{C}_{g,1}} - \hat{\xi}_{t_g} = 0 \quad \implies \quad \hat{\xi}_{t_g} = \overline{Y}_{\mathcal{C}_{g,1}, t_g} - \overline{\hat{\alpha}}_{\mathcal{C}_{g,1}}
\]
Next, we demonstrate that the average residual for $\mathcal{C}_{g,1}$ over cohort $\mathcal{G}_g$'s pre-treatment periods, $\text{pre}_g$, is also zero. By Lemma~\ref{lemma_residuals} (Property 2), for any $k \in \text{pre}_g$, the sum of residuals over all units untreated at time $k$ is zero, i.e., $\sum_{j:D_{jk}=0} \hat{e}_{jk}=0$. Summing these zero sums from $k=1$ to $t_g-1$ must also yield zero:
\[
\sum_{k=1}^{t_g-1} \left( \sum_{j:D_{jk}=0} \hat{e}_{jk} \right) = 0
\]

If $\mathcal{G}_g$ is the first treated cohort ($g=1$), there is no cohort treated before $\mathcal{G}_g$. $\sum_{k=1}^{t_g-1} \left( \sum_{j:D_{jk}=0} \hat{e}_{jk} \right)$ can be expressed as $\sum_{k=1}^{t_g-1} \sum_{j \in \mathcal{C}_{g,1}\bigcup \mathcal{G}_g} \hat{e}_{jk} = 0$, as $\sum_{k=1}^{t_g-1} \sum_{j \in \mathcal{G}_g} \hat{e}_{jk} = 0$ (by Property 1a of Lemma~\ref{lemma_residuals}), we have $\sum_{k=1}^{t_g-1} \sum_{j \in \mathcal{C}_{g,1}} \hat{e}_{jk} = 0$.

If $\mathcal{G}_g$ is the not first treated cohort ($g>1$). $\sum_{k=1}^{t_g-1} \left( \sum_{j:D_{jk}=0} \hat{e}_{jk} \right)$ can be expressed as:
\[\sum_{k=1}^{t_g-1} \left( \sum_{j:D_{jk}=0} \hat{e}_{jk} \right)=\sum_{1\leq l \leq g}\sum_{j \in \mathcal{G}_l}\sum_{k=1}^{t_l-1}\hat{e}_{jk}+\sum_{k=1}^{t_g-1} \sum_{j \in \mathcal{C}_{g,1}} \hat{e}_{jk}\]

The first term is the sum of residuals across all cohorts (including $\mathcal{G}_g$) that are treated no later than cohort $\mathcal{G}_g$, taken over all of \textit{their corresponding pre-treatment periods}. By Property 1a of Lemma~\ref{lemma_residuals}, this term equals zero. Therefore, we also have $\sum_{k=1}^{t_g-1} \sum_{j \in \mathcal{C}_{g,1}} \hat{e}_{jk} = 0$.

Both cases prove:
\[
\sum_{k=1}^{t_g-1} \sum_{j \in \mathcal{C}_{g,1}} \hat{e}_{jk} = 0
\]
Averaging this expression implies $\overline{Y}_{\mathcal{C}_{g,1}, \text{pre}_g} - \overline{\hat{\alpha}}_{\mathcal{C}_{g,1}} - \overline{\hat{\xi}}_{\text{pre}_g} = 0$, which yields:
\[
\overline{\hat{\xi}}_{\text{pre}_g} = \overline{Y}_{\mathcal{C}_{g,1}, \text{pre}_g} - \overline{\hat{\alpha}}_{\mathcal{C}_{g,1}}
\]

\textbf{Step 3: Combine and Conclude}

Subtracting the expression for $\overline{\hat{\xi}}_{\text{pre}_g}$ from that for $\hat{\xi}_{t_g}$:
\begin{align*}
\hat{\xi}_{t_g} - \overline{\hat{\xi}}_{\text{pre}_g} &= \left( \overline{Y}_{\mathcal{C}_{g,1}, t_g} - \overline{\hat{\alpha}}_{\mathcal{C}_{g,1}} \right) - \left( \overline{Y}_{\mathcal{C}_{g,1}, \text{pre}_g} - \overline{\hat{\alpha}}_{\mathcal{C}_{g,1}} \right) \\
&= \overline{Y}_{\mathcal{C}_{g,1}, t_g} - \overline{Y}_{\mathcal{C}_{g,1}, \text{pre}_g}
\end{align*}
Substituting this result back into Equation \eqref{eq:revised_intermediate} gives the final expression:
\[
\hat{Y}_{i,t_g}^{\text{Imp}}(0) = \overline{Y}_{i, \text{pre}_g} + \left( \overline{Y}_{\mathcal{C}_{g,1}, t_g} - \overline{Y}_{\mathcal{C}_{g,1}, \text{pre}_g} \right)
\]
This completes the proof.
\end{proof}

\subsubsection{The Sequential Imputation Estimator}

We now define an alternative estimator, the \textit{sequential imputation estimator}, which builds the estimation on an iterative, period-by-period procedure. It imputes counterfactuals sequentially and uses these imputed values in subsequent steps. We will then prove its equivalence to the imputation estimator. This sequential estimator operates according to the following procedure:

\begin{enumerate}
    \item[]\textbf{Initialization:} Create a data matrix, denoted by $\mathbf{Z}$, and initialize it with observed outcomes for all untreated observations, such that $Z_{it} = Y_{it}$ for all $(i,t)$ with $D_{it} = 0$. For treated observations, leave their corresponding entries in the matrix initially empty.

    \item[]\textbf{Iteration:} Loop through the post-treatment relative periods $s = 1, 2, \ldots$. For each relative period, iterate over all treated cohorts $g \in \{1, \ldots, G\}$ that have observations in that post-treatment period.

    \item[]\textbf{Imputation for cell $(g,s)$:} To impute the counterfactual one unit $i$ of cohort $\mathcal{G}_g$ at post-treatment relative period $s$ (calendar time $t=t_g+s-1$), we perform a DiD-like calculation using the data currently in the $\mathbf{Z}$ matrix. The imputed counterfactual for unit $i \in \mathcal{G}_g$ is:
    \[
    \hat{Y}^{\text{Seq-Imp}}_{it}(0) = \overline{Z}_{i, \text{pre}_t} + \left( \overline{Z}_{\mathcal{C}_{g,1}, t} - \overline{Z}_{\mathcal{C}_{g,1}, \text{pre}_t} \right)
    \]
    where $\text{pre}_t=\{1,\cdots,t-1\}$ and $\mathcal{C}_{g,1} = \left(\bigcup_{k: t_k>t_g} \mathcal{G}_k\right) \cup \mathcal{G}_{\infty}$ is the initial control group for cohort $\mathcal{G}_g$. The averages $\overline{Z}$ are taken over the values in the $\mathbf{Z}$ matrix, which contains observed outcomes for untreated observations and imputed counterfactuals from previous rounds for treated observations.

    \item[]\textbf{Update:} The newly computed counterfactuals are then stored in the $\mathbf{Z}$ matrix. That is, for all $i \in \mathcal{G}_g$, we set $Z_{it} = \hat{Y}^{\text{Seq-Imp}}_{it}(0)$. This updated matrix will be used for all subsequent imputation calculations.

    \item[]\textbf{Estimation:} The treatment effect for the $(g,s)$ cell is estimated as the difference between the observed outcome and the newly imputed counterfactual:
    \[
    \hat{\tau}_{\mathcal{G}_g,s}^{\text{Seq-Imp}} = \frac{1}{N_g} \sum_{i \in \mathcal{G}_g} (Y_{it} - \hat{Y}^{\text{Seq-Imp}}_{it}(0))
    \]

\end{enumerate}

\subsubsection{Equivalence in the Staggered Case}

\begin{proposition}[Equivalence of Imputation Estimators]
\label{prop:impute-seq-impute-equivalence}
Under the specified staggered adoption model, the estimated treatment effect for any cohort $\mathcal{G}_g$ at any post-treatment relative period $s \geq 1$, $\hat{\tau}_{\mathcal{G}_g,s}$, is identical whether calculated via the imputation method or the sequential imputation method. That is,
\[ \hat{\tau}_{\mathcal{G}_g,s}^{\text{Imp}} = \hat{\tau}_{\mathcal{G}_g,s}^{\text{Seq-Imp}} \]
This holds because the imputed counterfactuals are numerically identical: $\hat{Y}_{it}^{\text{Imp}}(0) = \hat{Y}_{it}^{\text{Seq-Imp}}(0)$ for all observations $(i,t)$ with $D_{it}=1$.
\end{proposition}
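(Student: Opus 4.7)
My plan is to prove the equivalence by induction on the order in which the sequential procedure processes cohort–period cells. Because the sequential loop runs through $s=1,2,\ldots$ and, within each relative period, over cohorts whose calendar times $t=t_g+s-1$ are all distinct, I can induct on $s$. The base case $s=1$ is essentially Lemma~\ref{lemma:initial_period_equivalence_appendix}: at that stage the matrix $\mathbf{Z}$ contains only observed outcomes (no cell needed earlier has yet been imputed), so the sequential formula $\overline{Z}_{i,\text{pre}_{t_g}}+(\overline{Z}_{\mathcal{C}_{g,1},t_g}-\overline{Z}_{\mathcal{C}_{g,1},\text{pre}_{t_g}})$ coincides with the DiD expression already shown to equal $\hat{\alpha}_i+\hat{\xi}_{t_g}$.

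For the inductive step at a cell $(g,s)$ with $s\ge 2$ and calendar time $t=t_g+s-1$, I first identify which entries of $\mathbf{Z}$ the sequential formula consults. The set $(i,\text{pre}_t)$ for $i\in\mathcal{G}_g$ involves this unit's own post-treatment periods $s'<s$, which by the inductive hypothesis already equal the standard imputations. Likewise, any $j\in\mathcal{C}_{g,1}$ that has been treated by time $t$ (i.e.\ $j\in\mathcal{G}_k$ with $t_g<t_k\le t$) contributes imputed $Z$-values that, by the inductive hypothesis, equal the standard imputations $\hat{\alpha}_j+\hat{\xi}_k$. Thus every $Z_{jk}$ appearing in the sequential formula satisfies the key identity $Z_{jk}=\hat{\alpha}_j+\hat{\xi}_k+\tilde{e}_{jk}$, where the residual $\tilde{e}_{jk}$ equals the OLS residual $\hat{e}_{jk}$ from Lemma~\ref{lemma_residuals} on untreated cells and is \emph{identically zero} on imputed cells by construction.

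This structural observation lets me extend both FOC identities of Lemma~\ref{lemma_residuals} to the full panel: $\sum_{k=1}^{T}\tilde{e}_{ik}=0$ for every unit $i$ (pre-treatment residuals sum to zero by Property~1, post-treatment imputed residuals are zero) and $\sum_{\text{all }j}\tilde{e}_{jt}=0$ for every time $t$ (from Property~2 plus zero residuals on treated cells). I then repeat the three-step argument of Lemma~\ref{lemma:initial_period_equivalence_appendix} almost verbatim, but with $\mathbf{Z}$ in place of $Y$ and the widened averaging window $\text{pre}_t=\{1,\ldots,t-1\}$ in place of $\text{pre}_g$. Summing the row FOC gives $\hat{\alpha}_i=\overline{Z}_{i,\text{pre}_t}-\overline{\hat{\xi}}_{\text{pre}_t}$. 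Summing the column FOC at time $t$ over $\mathcal{C}_{g,1}$ gives $\hat{\xi}_t=\overline{Z}_{\mathcal{C}_{g,1},t}-\overline{\hat{\alpha}}_{\mathcal{C}_{g,1}}$, and aggregating column FOCs over $\text{pre}_t$ gives the analogous identity with $\overline{\hat{\xi}}_{\text{pre}_t}$. Subtracting and substituting yields $\hat{\alpha}_i+\hat{\xi}_t=\overline{Z}_{i,\text{pre}_t}+(\overline{Z}_{\mathcal{C}_{g,1},t}-\overline{Z}_{\mathcal{C}_{g,1},\text{pre}_t})$, which is exactly the sequential estimator's formula. Equality of imputed counterfactuals then implies equality of the cohort–period treatment-effect estimates.

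The main obstacle is showing that the column-FOC aggregation $\sum_{j\in\mathcal{C}_{g,1}}\sum_{k=1}^{t-1}\tilde{e}_{jk}=0$ holds over the broader window $\text{pre}_t$ (rather than the narrower $\text{pre}_g$ used in Lemma~\ref{lemma:initial_period_equivalence_appendix}), because $\text{pre}_t$ now contains calendar periods in which some early-adopting cohorts are already treated. The clean resolution is that for any earlier cohort $\mathcal{G}_l$ with $t_l\le t_g$, the row-wise FOC on $\mathcal{G}_l$'s own pre-treatment block kills its untreated residuals, while the construction $\tilde{e}_{jk}=0$ kills its treated residuals; hence the full-panel column-sum identity descends to the $\mathcal{C}_{g,1}$ subpopulation. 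Once that accounting is carried out, the rest of the argument is an algebraic repetition of Lemma~\ref{lemma:initial_period_equivalence_appendix} and the induction closes immediately.
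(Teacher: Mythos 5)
Your overall strategy coincides with the paper's: induction on the relative period $s$, base case via Lemma~\ref{lemma:initial_period_equivalence_appendix}, and an inductive step that reduces the three averages $\overline{Z}_{i,\text{pre}_t}$, $\overline{Z}_{\mathcal{C}_{g,1},t}$, $\overline{Z}_{\mathcal{C}_{g,1},\text{pre}_t}$ to fixed-effect expressions using the OLS first-order conditions of Lemma~\ref{lemma_residuals}. The analyses of the first two terms are fine. However, your treatment of the third term --- the step you yourself flag as the main obstacle --- contains a genuine error. You extend the column FOC to a ``full-panel'' identity $\sum_{\text{all }j}\tilde{e}_{jk}=0$ by asserting that $\tilde{e}_{jk}=0$ on treated cells ``by construction,'' and then descend to $\mathcal{C}_{g,1}$ by arguing that, for each cohort $\mathcal{G}_l$ with $t_l\le t_g$, the untreated residuals vanish by Property~1a and the treated residuals vanish because those cells are imputed. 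The second half of that claim is false for strictly earlier-adopting cohorts: at the moment the procedure reaches cell $(g,s)$ with calendar time $t=t_g+s-1$, a cell $(j,k)$ with $j\in\mathcal{G}_l$, $t_l<t_g$, and $k$ near $t-1$ sits at relative period $k-t_l+1$, which can be as large as $t-t_l>s-1$; such cells have \emph{not yet been imputed}, so $Z_{jk}$ is still empty and $\tilde{e}_{jk}$ is undefined rather than zero. Your accounting therefore breaks for every cohort except the first-treated one.

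The gap is repairable, and the repair is exactly what the paper does: do not extend the column identity to treated cells at all. Property~2 is a statement about \emph{untreated} observations only, so for each $k\in\{1,\dots,t-1\}$ partition the untreated-at-$k$ units into (i) units from cohorts with $t_l\le t_g$ that are still untreated at $k$ and (ii) units from $\mathcal{C}_{g,1}$ still untreated at $k$. Summing the resulting zero identities over $k=1,\dots,t-1$ and swapping the order of summation turns group (i) into $\sum_{j}\sum_{k=1}^{t_j-1}\hat{e}_{jk}$, which vanishes by Property~1a since each such cohort's entire pre-treatment block lies inside the window. The treated members of $\mathcal{C}_{g,1}$ within the window \emph{are} all imputed in earlier rounds (their relative periods are at most $t-t_k<s$ because $t_k>t_g$), so they contribute zero to $\sum_{j\in\mathcal{C}_{g,1}}\sum_{k=1}^{t-1}\tilde{e}_{jk}$, and the desired identity follows. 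With that substitution your induction closes; the rest of your argument is the paper's proof.
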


\begin{proof}
\label{proof:impute-seq-impute-equivalence}
We proceed by mathematical induction on the post-treatment relative period, $s$. 

\paragraph*{Base Case ($s=1$):}
For the first post-treatment period $s=1$ of cohort $\mathcal{G}_g$, the equivalence holds as established in Lemma~\ref{lemma:initial_period_equivalence_appendix}. At this initial step, no imputations have yet been performed, so the data used by the sequential estimator consists solely of observed outcomes for all untreated cells. Therefore, $\overline{Y}^{\,*}(0) = \overline{Y}$, making the formulas for both estimators identical.

\paragraph*{Inductive Step:}
Assume the inductive hypothesis (IH): for any treated cohort $\mathcal{G}_g$ and $i\in \mathcal{G}_g$, for all post-treatment periods $l < s$, the sequential imputation estimator yields the same counterfactuals as the imputation estimator. Let $(\hat{\alpha}_i, \hat{\xi}_t)$ be the fixed effects from the imputation estimator and the $\hat{e}_{it}$ be the OLS residual of untreated observations.

\begin{equation*}
\text{For } t' = t_g + l - 1 \text{ with } 1 \le l < s, \quad \hat{Y}_{i t'}^{\text{Seq-Imp}}(0) = \hat{Y}_{i t'}^{\text{Imp}}(0) = \hat{\alpha}_i + \hat{\xi}_{t'} .
\end{equation*}
We must show that this equivalence holds for relative period $s$ (calendar time $t = t_g + s - 1$). By the IH, the entries in $\mathbf{Z}$ are:
\[ Z_{it} = \begin{cases} Y_{it} = \hat{\alpha}_i + \hat{\xi}_t + \hat{e}_{it} & \text{if } D_{it}=0 \\ \hat{\alpha}_i + \hat{\xi}_t & \text{if } D_{it}=1 \text{ and imputed at a relative period} < s \end{cases} \]
The sequential estimator's formula for unit $i \in \mathcal{G}_g$ at time $t$ is:
\[ \hat{Y}_{it}^{\text{Seq-Imp}}(0) = \overline{Z}_{i, \text{pre}_t} + \left( \overline{Z}_{\mathcal{C}_{g,1}, t} - \overline{Z}_{\mathcal{C}_{g,1}, \text{pre}_t} \right) \]
Our goal is to show that this expression simplifies to $\hat{\alpha}_i + \hat{\xi}_t$. We proceed by analyzing each of the three terms.

\begin{enumerate}
    \item \textbf{Analysis of $\overline{Z}_{i, \text{pre}_t}$:} For unit $i \in \mathcal{G}_g$, the periods $k<t_g$ are pre-treatment ($Z_{ik}=Y_{ik}$), and the periods $t_g \le k < t$ have been imputed in prior steps ($Z_{ik}=\hat{\alpha}_i + \hat{\xi}_k$ by the IH).
    \begin{align*}
        \overline{Z}_{i, \text{pre}_t} &= \frac{1}{t-1} \sum_{k=1}^{t-1} Z_{ik} = \frac{1}{t-1} \left( \sum_{k=1}^{t_g-1} Y_{ik} + \sum_{k=t_g}^{t-1} (\hat{\alpha}_i + \hat{\xi}_k) \right)
    \end{align*}
    By Property 1a of Lemma~\ref{lemma_residuals}, $\sum_{k=1}^{t_g-1} Y_{ik} = (t_g-1)\hat{\alpha}_i + \sum_{k=1}^{t_g-1} \hat{\xi}_k$. Substituting this gives:
    \begin{align*}
        \overline{Z}_{i, \text{pre}_t} &= \frac{1}{t-1} \left( \left( (t_g-1)\hat{\alpha}_i + \sum_{k=1}^{t_g-1} \hat{\xi}_k \right) + \left( (t-t_g)\hat{\alpha}_i + \sum_{k=t_g}^{t-1} \hat{\xi}_k \right) \right) \\
        &= \frac{1}{t-1} \left( (t-1)\hat{\alpha}_i + \sum_{k=1}^{t-1} \hat{\xi}_k \right) = \hat{\alpha}_i + \overline{\hat{\xi}}_{\text{pre}_t}
    \end{align*}

    \item \textbf{Analysis of $\overline{Z}_{\mathcal{C}_{g,1}, t}$:} This is the average over the control group $\mathcal{C}_{g,1}$ at time $t$. Any unit $j \in \mathcal{C}_{g,1}$ is either still untreated ($Z_{jt} = Y_{jt} = \hat{\alpha}_j + \hat{\xi}_t + \hat{e}_{jt}$) or was treated at $t_j \le t$ and its outcome was imputed ($Z_{jt} = \hat{\alpha}_j + \hat{\xi}_t$ by the IH).
    \begin{align*}
        \overline{Z}_{\mathcal{C}_{g,1}, t} &= \frac{1}{N_{\mathcal{C}_{g,1}}} \sum_{j \in \mathcal{C}_{g,1}} Z_{jt} = \frac{1}{N_{\mathcal{C}_{g,1}}} \left( \sum_{j \in \mathcal{C}_{g,1}} (\hat{\alpha}_j + \hat{\xi}_t) + \sum_{j \in \mathcal{C}_{g,1} \text{ s.t. } D_{jt}=0} \hat{e}_{jt} \right)
    \end{align*}
    The set $\{j \in \mathcal{C}_{g,1} \text{ s.t. } D_{jt}=0\}$ is precisely the set of all units untreated at time $t$. By Property 2 of Lemma~\ref{lemma_residuals}, the sum of their residuals is zero. Therefore:
    \[ \overline{Z}_{\mathcal{C}_{g,1}, t} = \overline{\hat{\alpha}}_{\mathcal{C}_{g,1}} + \hat{\xi}_t \]

\item \textbf{Analysis of $\overline{Z}_{\mathcal{C}_{g,1}, \text{pre}_t}$:} This term is the average over all units $j \in \mathcal{C}_{g,1}$ and all periods $k$ from $1$ to $t-1$. We start by analyzing the inner sum, $\sum_{k=1}^{t-1} Z_{jk}$, for an arbitrary unit $j \in \mathcal{C}_{g,1}$ with treatment date $t_j > t_g$. (For never-treated units, we set $t_j = \infty$).

For any given $(j,k)$, the value $Z_{jk}$ is the observed outcome $Y_{jk} = \hat{\alpha}_j + \hat{\xi}_k + \hat{e}_{jk}$ if unit $j$ is untreated at time $k$ (i.e., $k < t_j$). Otherwise, by the Inductive Hypothesis, $Z_{jk} = \hat{\alpha}_j + \hat{\xi}_k$. We can therefore express the sum as:
\begin{align*}
    \sum_{k=1}^{t-1} Z_{jk} &= \sum_{k=1}^{t-1} \left( \hat{\alpha}_j + \hat{\xi}_k \right) + \sum_{\substack{k=1 \\ \text{s.t. } D_{jk}=0}}^{t-1} \hat{e}_{jk} \\
    &= (t-1)\hat{\alpha}_j + \sum_{k=1}^{t-1} \hat{\xi}_k + \sum_{k=1}^{\min(t, t_j)-1} \hat{e}_{jk}
\end{align*}
Now, we sum this over all $j \in \mathcal{C}_{g,1}$ and average:
\begin{align*}
    \overline{Z}_{\mathcal{C}_{g,1}, \text{pre}_t} &= \frac{1}{N_{\mathcal{C}_{g,1}}(t-1)} \sum_{j \in \mathcal{C}_{g,1}} \left( (t-1)\hat{\alpha}_j + \sum_{k=1}^{t-1} \hat{\xi}_k + \sum_{k=1}^{\min(t, t_j)-1} \hat{e}_{jk} \right) \\
    &= \overline{\hat{\alpha}}_{\mathcal{C}_{g,1}} + \overline{\hat{\xi}}_{\text{pre}_t} + \frac{1}{N_{\mathcal{C}_{g,1}}(t-1)} \sum_{j \in \mathcal{C}_{g,1}} \sum_{k=1}^{\min(t, t_j)-1} \hat{e}_{jk}
\end{align*}
We can swap the order of summation in the final term:
\[ \sum_{j \in \mathcal{C}_{g,1}} \sum_{k=1}^{\min(t, t_j)-1} \hat{e}_{jk} = \sum_{k=1}^{t-1} \left( \sum_{\substack{j \in \mathcal{C}_{g,1} \\ \text{s.t. } k < t_j}} \hat{e}_{jk} \right) \]
From Property 2 of Lemma~\ref{lemma_residuals}, for any fixed period $k$, the sum of residuals over all untreated units is zero: $\sum_{j' \text{ s.t. } D_{j'k}=0} \hat{e}_{j'k} = 0$. We can partition the set of all untreated units at time $k$ into two disjoint groups:
\begin{enumerate}
    \item[A.] Units from cohorts treated at or before $t_g$ that are not yet treated at time $k$.
    \item[B.] Units from the control group $\mathcal{C}_{g,1}$ that are not yet treated at time $k$. This is precisely the set $\{j \in \mathcal{C}_{g,1} \text{ s.t. } k < t_j\}$.
\end{enumerate}
Therefore, for each period $k$, we can write:
\[ \sum_{\substack{j \in \bigcup_{l=1}^{g} \mathcal{G}_l \\ \text{s.t. } k < t_l}} \hat{e}_{jk} + \sum_{\substack{j \in \mathcal{C}_{g,1} \\ \text{s.t. } k < t_j}} \hat{e}_{jk} = 0 \]
Summing this identity from $k=1$ to $t-1$:
\[ \sum_{k=1}^{t-1} \left( \sum_{\substack{j \in \bigcup_{l=1}^{g} \mathcal{G}_l \\ \text{s.t. } k < t_l}} \hat{e}_{jk} \right) + \sum_{k=1}^{t-1} \left( \sum_{\substack{j \in \mathcal{C}_{g,1} \\ \text{s.t. } k < t_j}} \hat{e}_{jk} \right) = 0 \]
The second term is exactly the double summation we are interested in. We can show the first term is zero by again swapping the summation order:
\[ \sum_{k=1}^{t-1} \left( \sum_{\substack{j \in \bigcup_{l=1}^{g} \mathcal{G}_l \\ \text{s.t. } k < t_l}} \hat{e}_{jk} \right)=\sum_{j \in \bigcup_{l=1}^{g} \mathcal{G}_l} \left( \sum_{k=1}^{t_j-1} \hat{e}_{jk} \right) \]
By Property 1a of Lemma~\ref{lemma_residuals}, the inner sum $\sum_{k=1}^{t_j-1} \hat{e}_{jk}$ is exactly zero for every treated unit $j$. Thus, the entire first term is zero. This implies our term of interest must also be zero.
With the residual term equals zero, we are left with:
\[ \overline{Z}_{\mathcal{C}_{g,1}, \text{pre}_t} = \overline{\hat{\alpha}}_{\mathcal{C}_{g,1}} + \overline{\hat{\xi}}_{\text{pre}_t} \]
\end{enumerate}
\noindent \textbf{Combining the results:} We substitute these three simplified expressions back into the formula for the sequential estimator:
\begin{align*}
    \hat{Y}_{it}^{\text{Seq-Imp}}(0) &= \overline{Z}_{i, \text{pre}_t} + \left( \overline{Z}_{\mathcal{C}_{g,1}, t} - \overline{Z}_{\mathcal{C}_{g,1}, \text{pre}_t} \right) \\
    &= \left( \hat{\alpha}_i + \overline{\hat{\xi}}_{\text{pre}_t} \right) + \left( (\overline{\hat{\alpha}}_{\mathcal{C}_{g,1}} + \hat{\xi}_t) - (\overline{\hat{\alpha}}_{\mathcal{C}_{g,1}} + \overline{\hat{\xi}}_{\text{pre}_t}) \right) \\
    &= \hat{\alpha}_i + \overline{\hat{\xi}}_{\text{pre}_t} + \hat{\xi}_t - \overline{\hat{\xi}}_{\text{pre}_t} \\
    &= \hat{\alpha}_i + \hat{\xi}_t = \hat{Y}_{it}^{\text{Imp}}(0)
\end{align*}
This shows that if the equivalence holds for all relative periods less than $s$, it also holds for $s$. By the principle of mathematical induction, the equivalence holds for all $s \ge 1$, which completes the proof.

\end{proof}

Based on the equivalence, we can derive the following lemma that gives an alternative expression for the sequential imputation estimator.

\begin{lemma}[Alternative Expression for the Sequential Imputed Counterfactual]
\label{lemma:alternative-seq-imp}
For any unit $i \in \mathcal{G}_g$ at any post-treatment period $s>1$ (corresponding to calendar time $t = t_g+s-1$), the sequentially imputed counterfactual, $\hat{Y}_{it}^{\text{Seq-Imp}}(0)$, can be expressed as:
\[
\hat{Y}_{it}^{\text{Seq-Imp}}(0) = \overline{Y}_{i, \text{pre}_g} + \left( \overline{Z}_{\mathcal{C}_{g,1}, t} - \overline{Y}_{\mathcal{C}_{g,1}, \text{pre}_g} \right)
\]
where $\overline{Y}_{i, \text{pre}_g}$ and $\overline{Y}_{\mathcal{C}_{g,1}, \text{pre}_g}$ are averages of observed outcomes over the initial pre-treatment period $\{1, \ldots, t_g-1\}$, and $\overline{Z}_{\mathcal{C}_{g,1}, t}$ is the average over the updated data matrix $Z$ at time $t$.
\end{lemma}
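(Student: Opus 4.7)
The plan is to reduce this to computations already carried out in the proof of Proposition~\ref{prop:impute-seq-impute-equivalence}, and then exploit the fact that the $\overline{\hat{\xi}}$ term over the pre-period cancels in the DiD-like expression. By Proposition~\ref{prop:impute-seq-impute-equivalence}, we know $\hat{Y}_{it}^{\text{Seq-Imp}}(0) = \hat{\alpha}_i + \hat{\xi}_t$, where $(\hat{\alpha}_i,\hat{\xi}_t)$ are the fixed effects from the standard imputation estimator. So it suffices to show that the three terms on the right-hand side of the claimed identity simplify in such a way that the $\overline{\hat{\alpha}}_{\mathcal{C}_{g,1}}$ and $\overline{\hat{\xi}}_{\text{pre}_g}$ components cancel, leaving $\hat{\alpha}_i + \hat{\xi}_t$.

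First, I would handle $\overline{Y}_{i,\text{pre}_g}$. Since $i\in\mathcal{G}_g$ is untreated throughout $\text{pre}_g=\{1,\dots,t_g-1\}$, Property~1a of Lemma~\ref{lemma_residuals} yields $\sum_{k=1}^{t_g-1}\hat{e}_{ik}=0$, so averaging gives $\overline{Y}_{i,\text{pre}_g}=\hat{\alpha}_i+\overline{\hat{\xi}}_{\text{pre}_g}$. Next, for $\overline{Y}_{\mathcal{C}_{g,1},\text{pre}_g}$, I would reuse the argument already developed in Step~2 of the proof of Lemma~\ref{lemma:initial_period_equivalence_appendix}: summing Property~2 of Lemma~\ref{lemma_residuals} over $k=1,\dots,t_g-1$ and subtracting off the residual sums for cohorts treated no later than $\mathcal{G}_g$ (each of which vanishes by Property~1a) shows that $\sum_{k=1}^{t_g-1}\sum_{j\in\mathcal{C}_{g,1}}\hat{e}_{jk}=0$. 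This yields $\overline{Y}_{\mathcal{C}_{g,1},\text{pre}_g}=\overline{\hat{\alpha}}_{\mathcal{C}_{g,1}}+\overline{\hat{\xi}}_{\text{pre}_g}$.

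For the remaining term $\overline{Z}_{\mathcal{C}_{g,1},t}$, I would simply cite the identity established in Step~2 of the inductive proof of Proposition~\ref{prop:impute-seq-impute-equivalence}, namely $\overline{Z}_{\mathcal{C}_{g,1},t}=\overline{\hat{\alpha}}_{\mathcal{C}_{g,1}}+\hat{\xi}_t$. (That identity was derived using the inductive hypothesis for earlier-treated cohorts together with Property~2 of Lemma~\ref{lemma_residuals} at time $t$, and applies here without modification.) Substituting the three pieces into the claimed right-hand side gives
\[
\bigl(\hat{\alpha}_i+\overline{\hat{\xi}}_{\text{pre}_g}\bigr)
+\bigl(\overline{\hat{\alpha}}_{\mathcal{C}_{g,1}}+\hat{\xi}_t\bigr)
-\bigl(\overline{\hat{\alpha}}_{\mathcal{C}_{g,1}}+\overline{\hat{\xi}}_{\text{pre}_g}\bigr)
=\hat{\alpha}_i+\hat{\xi}_t,
\]
which matches $\hat{Y}_{it}^{\text{Seq-Imp}}(0)$ and completes the argument.

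There is no genuine obstacle: the lemma is essentially a bookkeeping restatement of what was already proved, and the only nontrivial ingredient—showing that the average OLS residual over $\mathcal{C}_{g,1}\times\text{pre}_g$ is zero—is precisely the lemma needed (and proved) for the $s=1$ case. The only thing worth flagging is conceptual: the identity demonstrates that the reference window used to form $\hat{\alpha}_i$ via the imputation machinery can be ``collapsed'' from the expanding window $\text{pre}_t$ (which mixes observed and previously imputed entries) back to the original observed pre-treatment window $\text{pre}_g$, with all contamination from imputed values in prior rounds absorbed into $\overline{Z}_{\mathcal{C}_{g,1},t}$. This is exactly the representation used in the main text to interpret the overall bias as a block-DiD comparison using the initial control group and observed pre-treatment outcomes.
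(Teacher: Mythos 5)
Your proposal is correct and follows essentially the same route as the paper's proof: both reduce the claim to the identity $\hat{Y}_{it}^{\text{Seq-Imp}}(0)=\hat{\alpha}_i+\hat{\xi}_t$ from Proposition~\ref{prop:impute-seq-impute-equivalence}, then use the same three residual-based characterizations ($\overline{Y}_{i,\text{pre}_g}=\hat{\alpha}_i+\overline{\hat{\xi}}_{\text{pre}_g}$, $\overline{Y}_{\mathcal{C}_{g,1},\text{pre}_g}=\overline{\hat{\alpha}}_{\mathcal{C}_{g,1}}+\overline{\hat{\xi}}_{\text{pre}_g}$, and $\overline{Z}_{\mathcal{C}_{g,1},t}=\overline{\hat{\alpha}}_{\mathcal{C}_{g,1}}+\hat{\xi}_t$) to make the $\overline{\hat{\alpha}}_{\mathcal{C}_{g,1}}$ and $\overline{\hat{\xi}}_{\text{pre}_g}$ terms cancel. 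The only cosmetic difference is the direction of the substitution (you verify the right-hand side collapses to $\hat{\alpha}_i+\hat{\xi}_t$, while the paper builds up the right-hand side from the left), which is immaterial.
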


\begin{proof}
From the Proposition~\ref{prop:impute-seq-impute-equivalence}, we know that $\hat{Y}_{it}^{\text{Seq-Imp}}(0) = \hat{\alpha}_i + \hat{\xi}_t$. Furthermore, from the proof of Lemma~\ref{lemma:initial_period_equivalence_appendix} (Step 1), the unit fixed effect for unit $i \in \mathcal{G}_g$ is characterized by its pre-treatment observations:
\[
\hat{\alpha}_i = \overline{Y}_{i, \text{pre}_g} - \overline{\hat{\xi}}_{\text{pre}_g}
\]
Substituting this expression for $\hat{\alpha}_i$ into the equivalence identity gives:
\begin{align}
\label{eq:fixed-effect-form}
\hat{Y}_{it}^{\text{Seq-Imp}}(0) = \left( \overline{Y}_{i, \text{pre}_g} - \overline{\hat{\xi}}_{\text{pre}_g} \right) + \hat{\xi}_t = \overline{Y}_{i, \text{pre}_g} + \left( \hat{\xi}_t - \overline{\hat{\xi}}_{\text{pre}_g} \right)
\end{align}
The lemma holds if we can show that the trend term based on fixed effects, $(\hat{\xi}_t - \overline{\hat{\xi}}_{\text{pre}_g})$, is numerically identical to the trend term in the proposed expression, $(\overline{Z}_{\mathcal{C}_{g,1}, t} - \overline{Y}_{\mathcal{C}_{g,1}, \text{pre}_g})$. We prove this by finding expressions for $\hat{\xi}_t$ and $\overline{\hat{\xi}}_{\text{pre}_g}$ separately.

\begin{enumerate}
    \item \textbf{Characterize $\hat{\xi}_t$:} From the analysis of $\overline{Z}_{\mathcal{C}_{g,1}, t}$ in the proof of Proposition~\ref{prop:impute-seq-impute-equivalence}, we established that:
    \[
    \overline{Z}_{\mathcal{C}_{g,1}, t} = \overline{\hat{\alpha}}_{\mathcal{C}_{g,1}} + \hat{\xi}_t
    \]
    Solving for $\hat{\xi}_t$ yields:
    \[
    \hat{\xi}_t = \overline{Z}_{\mathcal{C}_{g,1}, t} - \overline{\hat{\alpha}}_{\mathcal{C}_{g,1}}
    \]

    \item \textbf{Characterize $\overline{\hat{\xi}}_{\text{pre}_g}$:} We analyze the average residuals for the initial control group $\mathcal{C}_{g,1}$ over the pre-treatment period, $\text{pre}_g = \{1, \ldots, t_g-1\}$. As shown in the proof of Lemma~\ref{lemma:initial_period_equivalence_appendix} (Step 2), the sum of residuals for this group over this period is zero, which implies:
    \[
    \overline{Y}_{\mathcal{C}_{g,1}, \text{pre}_g} - \overline{\hat{\alpha}}_{\mathcal{C}_{g,1}} - \overline{\hat{\xi}}_{\text{pre}_g} = 0
    \]
    Solving for $\overline{\hat{\xi}}_{\text{pre}_g}$ yields:
    \[
    \overline{\hat{\xi}}_{\text{pre}_g} = \overline{Y}_{\mathcal{C}_{g,1}, \text{pre}_g} - \overline{\hat{\alpha}}_{\mathcal{C}_{g,1}}
    \]
\end{enumerate}

Finally, we substitute these two results into the fixed-effect trend term:
\begin{align*}
\hat{\xi}_t - \overline{\hat{\xi}}_{\text{pre}_g} &= \left( \overline{Z}_{\mathcal{C}_{g,1}, t} - \overline{\hat{\alpha}}_{\mathcal{C}_{g,1}} \right) - \left( \overline{Y}_{\mathcal{C}_{g,1}, \text{pre}_g} - \overline{\hat{\alpha}}_{\mathcal{C}_{g,1}} \right) \\
&= \overline{Z}_{\mathcal{C}_{g,1}, t} - \overline{Y}_{\mathcal{C}_{g,1}, \text{pre}_g}
\end{align*}
The average unit fixed effect, $\overline{\hat{\alpha}}_{\mathcal{C}_{g,1}}$, cancels out. Since this trend term is identical to the one in Equation~\eqref{eq:fixed-effect-form}, the lemma is proven.
\end{proof}

\subsection{The Bias Decomposition Formula for the Imputation Estimator}

With the equivalence between the imputation estimator and the sequential imputation procedure, the bias decomposition of the imputation estimator can be derived naturally.

\subsubsection{Formal Definitions of Bias Components}

For completeness, I repeat the definition of the overall bias and the block bias of the imputation estimator in the main text.

\paragraph*{Overall bias ($\delta_{\mathcal{G}_g,s}$)}
The \textit{overall bias} is the expected difference between the true average counterfactual outcome for cohort $\mathcal{G}_g$ at relative period $s$ and the counterfactual imputed by the estimator. 
\[
\delta_{\mathcal{G}_g,s} = \mathbb{E}\left[Y_{i, t}(0)|i \in \mathcal{G}_g\right] - \mathbb{E}\left[\hat{Y}_{i,t}(0) \mid i \in \mathcal{G}_g \right]
\]

\paragraph*{Block Bias ($\Delta_{\mathcal{G}_g,s}$)}
The \textit{block bias} is the violation of the PT assumption for cohort $\mathcal{G}_g$ relative to its initial control group $\mathcal{C}_{g,1}$.
\[
\Delta_{\mathcal{G}_g,s} = \left(\mathbb{E}[Y_{i, t}(0)| i \in \mathcal{G}_g] - \mathbb{E}[\overline{Y}_{i, \text{pre}_g}(0)|i \in \mathcal{G}_g]\right) - \left(\mathbb{E}[Y_{i, t}(0)|i \in \mathcal{C}_{g,1}] - \mathbb{E}[\overline{Y}_{i, \text{pre}_g}(0)|i \in \mathcal{C}_{g,1}]\right)
\]

\subsubsection{Bias Decomposition}

\begin{proposition}[Bias Decomposition for the Imputation Estimator, Proposition \ref{prop:bias-decompose-impute} in the main text]
The overall bias of the imputation estimator for cohort $\mathcal{G}_g$ at post-treatment relative period $s$ can be expressed as a linear combination of block biases. The decomposition is given by:
\[
\delta_{\mathcal{G}_g,s} = \Delta_{\mathcal{G}_g,s} + \sum_{k \in \mathcal{K}_g(t)} w_{k} \Delta_{\mathcal{G}_k,s_k(t)}
\]
where:
\begin{itemize}
    \item The block bias for cohort $\mathcal{G}_k$ at time $t=t_g+s-1$ is:
    \[
    \Delta_{\mathcal{G}_k,s_k(t)} := \left( \mathbb{E}[Y_{it}(0) | i \in \mathcal{G}_k] - \mathbb{E}[\overline{Y}_{i, \text{pre}_k}(0) | i \in \mathcal{G}_k] \right) - \left( \mathbb{E}[Y_{it}(0) | i \in \mathcal{C}_{k,1}] - \mathbb{E}[\overline{Y}_{i, \text{pre}_k}(0) | i \in \mathcal{C}_{k,1}] \right)
    \]
    \item The initial control group for cohort $\mathcal{G}_k$, $\mathcal{C}_{k,1}$, consists of all units treated after time $t_k$ and the never-treated units:
    \[
    \mathcal{C}_{k,1} := \left(\bigcup_{j: t_j>t_k} \mathcal{G}_j\right) \cup \mathcal{G}_{\infty}
    \]
    \item The adjustment cohort $\mathcal{K}_g(t)$ identifies cohorts that are part of $\mathcal{G}_g$'s initial control group but are themselves treated by time $t$:
    \[
    \mathcal{K}_g(t) := \{k \mid t_g < t_k \le t\}
    \]
    \noindent and the sum $\sum_{k \in \mathcal{K}_g(t)}$ is over the set of cohort indices.
    \item The weight $w_k$ applied to the block bias of a cohort $k \in \mathcal{K}_g(t)$ is its share relative to all cohorts treated at or after its own treatment time $t_k$:
    \[
    w_k := \frac{N_k}{\sum_{j=k}^{G} N_j + N_\infty}
    \]
    \item The relative period for cohort $\mathcal{G}_k$ at calendar time $t$ is $s_k(t) := t - t_k + 1$.
\end{itemize}
\end{proposition}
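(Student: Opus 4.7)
The plan is to first exploit the equivalence between the imputation estimator and the sequential procedure (Proposition~\ref{prop:impute-seq-impute-equivalence}) together with the alternative expression in Lemma~\ref{lemma:alternative-seq-imp}, which writes the imputed counterfactual for any post-treatment cell $(g,s)$ as a DiD-style comparison between cohort $\mathcal{G}_g$ and its \emph{initial} control group $\mathcal{C}_{g,1}$:
\[
\hat{Y}_{it}^{\text{Imp}}(0) \;=\; \overline{Y}_{i,\text{pre}_g} + \bigl(\overline{Z}_{\mathcal{C}_{g,1},t} - \overline{Y}_{\mathcal{C}_{g,1},\text{pre}_g}\bigr),
\]
where $t=t_g+s-1$. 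Taking expectations and subtracting from the definition of the block bias $\Delta_{\mathcal{G}_g,s}$, the only term that fails to cancel is the difference between $\mathbb{E}[Y_{jt}(0)\mid j\in\mathcal{C}_{g,1}]$ and $\mathbb{E}[\overline{Z}_{\mathcal{C}_{g,1},t}]$.

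Next I would exploit the structure of $\overline{Z}_{\mathcal{C}_{g,1},t}$. For any $j\in\mathcal{C}_{g,1}$ untreated at $t$, $Z_{jt}=Y_{jt}=Y_{jt}(0)$, so those terms contribute zero. For the remaining units, which belong precisely to the adjustment cohorts $k\in\mathcal{K}_g(t)=\{k : t_g<t_k\le t\}$, the gap $\mathbb{E}[Y_{jt}(0)\mid j\in\mathcal{G}_k]-\mathbb{E}[Z_{jt}\mid j\in\mathcal{G}_k]$ is by definition the overall bias $\delta_{\mathcal{G}_k,s_k(t)}$ of cohort $\mathcal{G}_k$ at the same calendar time $t$. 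Aggregating using the cohort sizes inside $\mathcal{C}_{g,1}$ (whose total size is $N_{\mathcal{C}_{g,1}}=\sum_{j=g+1}^G N_j+N_\infty$), this produces the recursion
\[
\delta_{\mathcal{G}_g,s} \;=\; \Delta_{\mathcal{G}_g,s} + \sum_{k\in\mathcal{K}_g(t)} \frac{N_k}{N_{\mathcal{C}_{g,1}}}\,\delta_{\mathcal{G}_k,s_k(t)}.
\]

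Finally, I would solve this recursion by backward induction on the cohort index, from the latest-treated cohort $G$ (for which $\mathcal{K}_G(t)=\emptyset$, so $\delta_{\mathcal{G}_G,s}=\Delta_{\mathcal{G}_G,s}$) down to $g$. Let $A_\ell:=\sum_{j=\ell}^G N_j+N_\infty$ so that $w_\ell=N_\ell/A_\ell$ and $N_{\mathcal{C}_{g,1}}=A_{g+1}$. Substituting the inductive hypothesis into the recursion, the coefficient on $\Delta_{\mathcal{G}_\ell,s_\ell(t)}$ in $\delta_{\mathcal{G}_g,s}$ is a direct term $N_\ell/A_{g+1}$ plus indirect terms routed through intermediate cohorts $k$ with $g<k<\ell$. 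Collecting these and applying the telescoping identity $A_\ell=N_\ell+A_{\ell+1}$ repeatedly should collapse the nested sums to the target weight $w_\ell=N_\ell/A_\ell$, independent of the path $g\to\ell$. The main obstacle I anticipate is this combinatorial bookkeeping: verifying that summing over all indirect paths of arbitrary length yields exactly $w_\ell$. I expect this to follow cleanly from the identity $A_{g+1}=N_{g+1}+A_{g+2}$, applied one cohort at a time, but writing it cleanly for the general case requires careful induction on the distance $\ell-g$ rather than a one-shot calculation.
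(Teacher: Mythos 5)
Your proposal is correct and follows essentially the same route as the paper's proof: it invokes the sequential-imputation equivalence and Lemma~\ref{lemma:alternative-seq-imp} to get the one-step recursion $\delta_{\mathcal{G}_g,s}=\Delta_{\mathcal{G}_g,s}+\sum_{k\in\mathcal{K}_g(t)}\frac{N_k}{N_{\mathcal{C}_{g,1}}}\delta_{\mathcal{G}_k,s_k(t)}$, then solves it by reverse induction on the cohort index. The combinatorial bookkeeping you worry about is lighter than you anticipate: under backward induction the inductive hypothesis already assigns the final weight $w_\ell$ to every deeper block bias, so the coefficient on $\Delta_{\mathcal{G}_\ell,s_\ell(t)}$ is just $v_{g,\ell}+w_\ell\sum_{k:t_g<t_k<t_\ell}v_{g,k}$, which collapses to $w_\ell$ in a single application of $N_{\mathcal{C}_{g,1}}=\sum_{k:t_g<t_k<t_\ell}N_k+N_\ell+N_{\mathcal{C}_{\ell,1}}$ — no induction on the distance $\ell-g$ is needed.
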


\begin{proof}
\label{proof:bias-decompose-impute}

The proof consists of two parts. First, we derive a one-step recursive formula to express the overall bias of $\mathcal{G}_g$ in relative period $s \geq 1$ as its own block bias and the \textit{overall bias} of its adjustment cohorts $\mathcal{K}_g(t)$. Second, we solve this recursion structure to arrive at the final decomposition that expresses the overall bias as linear combinations of block biases.

\paragraph*{Part 1: Derivation of the Bias Recursion}
The overall bias for cohort $\mathcal{G}_g$ at post-treatment relative period $s$ (calendar time $t = t_g+s-1$) is:
\[
\delta_{\mathcal{G}_g,s} = \mathbb{E}[Y_{it}(0)| i \in \mathcal{G}_g] - \mathbb{E}[\hat{Y}_{it}^{\text{Imp}}(0)| i \in \mathcal{G}_g]
\]
By Proposition~\ref{prop:impute-seq-impute-equivalence} and Lemma~\ref{lemma:alternative-seq-imp}, we can express the expected imputed counterfactual as:
\[
\mathbb{E}[\hat{Y}_{it}^{\text{Imp}}(0)| i \in \mathcal{G}_g] = \mathbb{E}[\overline{Y}_{i, \text{pre}_g}(0)| i \in \mathcal{G}_g] + \left( \mathbb{E}[\overline{Z}_{\mathcal{C}_{g,1}, t}] - \mathbb{E}[\overline{Y}_{\mathcal{C}_{g,1}, \text{pre}_g}(0)] \right)
\]
Our task is to simplify the expected trend of the control group, which involves the term $\mathbb{E}[\overline{Z}_{\mathcal{C}_{g,1}, t}]$. We do this by analyzing the expected sum of $Z_{jt}$ over each cohort that constitutes the control group $\mathcal{C}_{g,1}$. For any cohort $\mathcal{G}_k \subseteq \mathcal{C}_{g,1}$, its expected sum depends on whether it has been treated by time $t$:
\begin{itemize}
    \item If cohort $\mathcal{G}_k$ is untreated at time $t$ (i.e., $t < t_k$), then for all $j \in \mathcal{G}_k$, $Z_{jt} = Y_{jt}(0)$. The expected sum for the cohort is:
    \[ \mathbb{E}\left[\sum_{j \in \mathcal{G}_k} Z_{jt}\right] = N_k \cdot \mathbb{E}[\overline{Y}_{\mathcal{G}_k, t}(0)] \]
    \item If cohort $\mathcal{G}_k$ is treated at time $t$ (i.e., $t \ge t_k$), then for all $j \in \mathcal{G}_k$, $Z_{jt} = \hat{Y}_{jt}(0)$. The definition of the overall bias for this cohort, $\delta_{\mathcal{G}_k, s_k(t)}$, implies:
    \[ \mathbb{E}\left[\sum_{j \in \mathcal{G}_k} Z_{jt}\right] = N_k \cdot \mathbb{E}[\overline{\hat{Y}}_{\mathcal{G}_k, t}(0)] = N_k \left( \mathbb{E}[\overline{Y}_{\mathcal{G}_k, t}(0)] - \delta_{\mathcal{G}_k, s_k(t)} \right) \]
\end{itemize}
The never-treated group, $\mathcal{G}_\infty$, always falls into the first case. We can now decompose the full control group average by summing over its constituent cohorts:
\begin{align*}
\mathbb{E}[\overline{Z}_{\mathcal{C}_{g,1}, t}] &= \frac{1}{N_{\mathcal{C}_{g,1}}} \sum_{k \text{ s.t. } \mathcal{G}_k \subseteq \mathcal{C}_{g,1}} \mathbb{E}\left[\sum_{j \in \mathcal{G}_k} Z_{jt}\right] \\
&= \frac{1}{N_{\mathcal{C}_{g,1}}} \left( \sum_{k \text{ s.t. } \mathcal{G}_k \subseteq \mathcal{C}_{g,1}} N_k \mathbb{E}[\overline{Y}_{\mathcal{G}_k, t}(0)] - \sum_{k \in \mathcal{K}_g(t)} N_k \cdot \delta_{\mathcal{G}_k, s_k(t)} \right) \\
&= \mathbb{E}[\overline{Y}_{\mathcal{C}_{g,1}, t}(0)] - \sum_{k \in \mathcal{K}_g(t)} \frac{N_k}{N_{\mathcal{C}_{g,1}}} \cdot \delta_{\mathcal{G}_k, s_k(t)}
\end{align*}
Letting $v_{g,k} := N_k / N_{\mathcal{C}_{g,1}}$ be the share of cohort $\mathcal{G}_k$ within the initial control group $\mathcal{C}_{g,1}$, we thus have our simplified expression:
\[
\mathbb{E}[\overline{Z}_{\mathcal{C}_{g,1}, t}] = \mathbb{E}[\overline{Y}_{\mathcal{C}_{g,1}, t}(0)] - \sum_{k \in \mathcal{K}_g(t)} v_{g,k} \cdot \delta_{\mathcal{G}_k, s_k(t)}
\]

Now, substituting this back into the formula for the overall bias:
\begin{align*}
\delta_{\mathcal{G}_g,s} &= \mathbb{E}[Y_{it}(0)| i \in \mathcal{G}_g] - \left[ \mathbb{E}[\overline{Y}_{i, \text{pre}_g}(0)| i \in \mathcal{G}_g] + \mathbb{E}[\overline{Y}_{\mathcal{C}_{g,1}, t}(0)] - \sum_{k \in \mathcal{K}_g(t)} v_{g,k} \delta_{\mathcal{G}_k, s_k(t)} - \mathbb{E}[\overline{Y}_{\mathcal{C}_{g,1}, \text{pre}_g}(0)] \right] \\
&= \underbrace{\left( \mathbb{E}[Y_{it}(0)| i \in \mathcal{G}_g] - \mathbb{E}[\overline{Y}_{i, \text{pre}_g}(0)| i \in \mathcal{G}_g] \right) - \left( \mathbb{E}[\overline{Y}_{\mathcal{C}_{g,1}, t}(0)] - \mathbb{E}[\overline{Y}_{\mathcal{C}_{g,1}, \text{pre}_g}(0)] \right)}_{\Delta_{\mathcal{G}_g,s}} + \sum_{k \in \mathcal{K}_g(t)} v_{g,k} \delta_{\mathcal{G}_k, s_k(t)}
\end{align*}
Thus, we have established the one-step bias recursion, expression cohort $\mathcal{G}_g$'s overall bias at relative period $s$ as its own block bias plus the overall bias from its adjustment cohorts.
\[
\delta_{\mathcal{G}_g,s} = \Delta_{\mathcal{G}_g,s} + \sum_{k \in \mathcal{K}_g(t)} v_{g,k} \cdot \delta_{\mathcal{G}_k, s_k(t)}
\]

\paragraph*{Part 2: Solving the Recursion}
We prove the proposition by reverse induction on the cohort index $g$, from the last-treated cohort ($g=G$) down to the first treated cohort ($g=1$).

\subparagraph*{Base Case ($g=G$):}
For the last-treated cohort, $\mathcal{G}_G$, its control group $\mathcal{C}_{G,1}$ consists solely of never-treated units. As established in Part 1 of the derivation, its bias is not affected by any other treated cohorts. Therefore, its overall bias is equal to its block bias:
\[
\delta_{\mathcal{G}_G,s} = \Delta_{\mathcal{G}_G,s}
\]
This matches the proposition's formula, as the summation term over the set $\mathcal{K}_G(t)$ is empty. Thus, the base case holds.

\subparagraph*{Inductive Step:}
Now, assume the Inductive Hypothesis (IH): for all cohorts $k$ with treatment times $t_k > t_g$, the proposition's decomposition formula holds. We will prove that the formula also holds for cohort $\mathcal{G}_g$.

We begin with the one-step bias recursion in Part 1:
\[
\delta_{\mathcal{G}_g,s} = \Delta_{\mathcal{G}_g,s} + \sum_{k \in \mathcal{K}_g(t)} v_{g,k} \cdot \delta_{k, s_k(t)}
\]
By the IH, we can substitute the full decomposition formula for each $\delta_{k, s_k(t)}$ term on the right-hand side:
\[
\delta_{\mathcal{G}_g,s} = \Delta_{\mathcal{G}_g,s} + \sum_{k \in \mathcal{K}_g(t)} v_{g,k} \left( \Delta_{\mathcal{G}_k,s_k(t)} + \sum_{m \in \mathcal{K}_k(t)} w_m \Delta_{\mathcal{G}_m,s_m(t)} \right)
\]
Our goal is to show that this expression simplifies to the formula in the proposition. To do this, we collect the coefficients for an arbitrary block bias term, $\Delta_{\mathcal{G}_m,s_m(t)}$, where $\mathcal{G}_m$ is a cohort treated after $\mathcal{G}_g$ (i.e., $t_m > t_g$).

The term $\Delta_{\mathcal{G}_m,s_m(t)}$ appears in the sum of coefficients in two ways:
\begin{enumerate}
    \item \textbf{Directly:} When the index of the outer sum is $k=m$. The coefficient is the weight $v_{g,m}$.
    \item \textbf{Indirectly:} When the index of the outer sum is $k$ and $t_g < t_k < t_m$, the term $\Delta_{\mathcal{G}_m,s_m(t)}$ appears inside the expansion of $\delta_{\mathcal{G}_k,s_k(t)}$. The coefficient in this case is the product of the weights, $v_{g,k} \cdot w_m$.
\end{enumerate}
The total coefficient on $\Delta_{\mathcal{G}_m,s_m(t)}$, let's call it $C_{g,m}$, is the sum of its direct and all indirect paths:
\[
C_{g,m} = v_{g,m} + \sum_{k : t_g < t_k < t_m} v_{g,k} \cdot w_m = v_{g,m} + w_m \sum_{k : t_g < t_k < t_m} v_{g,k}
\]
Let's substitute the definitions for the weights. Let $v_{g,k} = N_k / N_{\mathcal{C}_{g,1}}$ and $w_m = N_m / (N_m + N_{\mathcal{C}_{m,1}})$.
\begin{align*}
C_{g,m} &= \frac{N_m}{N_{\mathcal{C}_{g,1}}} + \left(\frac{N_m}{N_m + N_{\mathcal{C}_{m,1}}}\right) \sum_{k : t_g < t_k < t_m} \frac{N_k}{N_{\mathcal{C}_{g,1}}} \\
&= \frac{N_m}{N_{\mathcal{C}_{g,1}}} \left( 1 + \frac{\sum_{k : t_g < t_k < t_m} N_k}{N_m + N_{\mathcal{C}_{m,1}}} \right) \\
&= \frac{N_m}{N_{\mathcal{C}_{g,1}}} \left( \frac{(N_m + N_{\mathcal{C}_{m,1}}) + \sum_{k : t_g < t_k < t_m} N_k}{N_m + N_{\mathcal{C}_{m,1}}} \right)
\end{align*}
By definition, the size of the control group for cohort $\mathcal{G}_g$ is the sum of the sizes of all cohorts treated after it, which can be expressed as $N_{\mathcal{C}_{g,1}} = (\sum_{k : t_g < t_k < t_m} N_k) + N_m + N_{\mathcal{C}_{m,1}}$. The numerator of the fraction in the parentheses is therefore exactly $N_{\mathcal{C}_{g,1}}$.
\[
C_{g,m} = \frac{N_m}{N_{\mathcal{C}_{g,1}}} \left( \frac{N_{\mathcal{C}_{g,1}}}{N_m + N_{\mathcal{C}_{m,1}}} \right) = \frac{N_m}{N_m + N_{\mathcal{C}_{m,1}}} = w_m
\]
Since the coefficient on every arbitrary $\Delta_{\mathcal{G}_m,s_m(t)}$ term simplifies to $w_m$, the full expression for $\delta_{\mathcal{G}_g,s}$ becomes:
\[
\delta_{\mathcal{G}_g,s} = \Delta_{\mathcal{G}_g,s} + \sum_{k \in \mathcal{K}_g(t)} w_k \Delta_{\mathcal{G}_k,s_k(t)}
\]
This is precisely the formula stated in the proposition. Thus, the inductive step holds. By the principle of reverse induction, the proof is complete.

\end{proof}

\clearpage

\subsection{The Bias Decomposition for the \textit{CS-NYT} Estimator}

\begin{proposition}[Bias Decomposition for the \textit{CS-NYT} Estimator, Proposition \ref{prop:bias-decompose-CS} in the main text]
The overall bias of the \citet{callaway2021-did} not-yet-treated estimator for cohort $\mathcal{G}_g$ at post-treatment relative period $s$ (calendar time $t = t_g+s-1$), denoted $\delta_{\mathcal{G}_g,s}$, can be expressed as a linear combination of block biases. The decomposition is given by:
\[
\delta_{\mathcal{G}_g,s} = \Delta_{\mathcal{G}_g,s} + \sum_{k \in \mathcal{K}_g(t)} w_{k} \left( \Delta_{\mathcal{G}_k, s_k(t)} - \Delta_{\mathcal{G}_k, s_k(t_g-1)} \right)
\]
where all bias terms are defined as trend comparisons relative to a single, cohort-specific reference period.

\begin{itemize}
    \item The \textbf{overall bias} for the cohort-period cell $(g,s)$ compares the trend of cohort $\mathcal{G}_g$ against the trend of its current not-yet-treated group $\mathcal{C}_{g,s}$, using $t_g-1$ as the reference period:
    \[
    \delta_{\mathcal{G}_g,s} := \left( \mathbb{E}[Y_{it}(0) | i \in \mathcal{G}_g] - \mathbb{E}[Y_{i, t_g-1}(0) | i \in \mathcal{G}_g] \right) - \left( \mathbb{E}[Y_{it}(0) | i \in \mathcal{C}_{g,s}] - \mathbb{E}[Y_{i, t_g-1}(0) | i \in \mathcal{C}_{g,s}] \right)
    \]
    \item The \textbf{block bias} for cohort $\mathcal{G}_k$ at its event time $s'$, $\Delta_{\mathcal{G}_k,s'}$, is the difference in trends between cohort $\mathcal{G}_k$ and its initial control group $\mathcal{C}_{k,1}$, using $t_k-1$ as the reference period:
    \begin{align*}
        \Delta_{\mathcal{G}_k,s'} :=& \left( \mathbb{E}[Y_{i,t_k+s'-1}(0) | i \in \mathcal{G}_k] - \mathbb{E}[Y_{i, t_k-1}(0) | i \in \mathcal{G}_k] \right) \\
        & - \left( \mathbb{E}[Y_{i,t_k+s'-1}(0) | i \in \mathcal{C}_{k,1}] - \mathbb{E}[Y_{i, t_k-1}(0) | i \in \mathcal{C}_{k,1}] \right)
    \end{align*}
    \item The \textbf{initial control group} for cohort $\mathcal{G}_k$, $\mathcal{C}_{k,1}$, consists of all units treated after time $t_k$ and the never-treated units:
    \[
    \mathcal{C}_{k,1} := \left(\bigcup_{j: t_j>t_k} \mathcal{G}_j\right) \cup \mathcal{G}_{\infty}
    \]
    \item The \textbf{index set} $\mathcal{K}_g(t)$ identifies cohorts from $\mathcal{G}_g$'s initial control group that have become treated by time $t$:
    \[
    \mathcal{K}_g(t) := \{k \mid t_g < t_k \le t\}
    \]
    \item The \textbf{weight} $w_k$ is the same as in the imputation estimator's decomposition, representing cohort $\mathcal{G}_k$'s share relative to all cohorts treated at or after its own treatment time $t_k$:
    \[
    w_k := \frac{N_k}{\sum_{j=k}^{G} N_j + N_\infty}
    \]
    \item The \textbf{relative periods} for cohort $\mathcal{G}_k$ are $s_k(t) := t - t_k + 1$ and $s_k(t_g-1) := (t_g-1) - t_k + 1$.
\end{itemize}
\end{proposition}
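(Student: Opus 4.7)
The plan is to derive the decomposition through a direct algebraic argument with a short telescoping step on a nested chain of control groups. First I would subtract the definitions of $\delta_{\mathcal{G}_g,s}$ and $\Delta_{\mathcal{G}_g,s}$: both are anchored at the same reference period $t_g-1$ and share the same $\mathcal{G}_g$-trend term, which cancels, reducing the task to expressing a single quantity---the difference of two control-group trends from $t_g-1$ to $t = t_g+s-1$---in the target form:
\[
\delta_{\mathcal{G}_g,s} - \Delta_{\mathcal{G}_g,s} \;=\; \mathrm{tr}(\mathcal{C}_{g,1}) \;-\; \mathrm{tr}(\mathcal{C}_{g,s}),
\]
where $\mathrm{tr}(A) := \mathbb{E}[Y_{it}(0)\mid i\in A] - \mathbb{E}[Y_{i,t_g-1}(0)\mid i\in A]$.

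Next, I would enumerate $\mathcal{K}_g(t) = \{k_1 < k_2 < \cdots < k_m\}$ in order of increasing treatment time and build a nested chain of control groups $\mathcal{C}_{g,1} = C^{(0)} \supset C^{(1)} \supset \cdots \supset C^{(m)} = \mathcal{C}_{g,s}$, where $C^{(i)}$ is obtained from $C^{(i-1)}$ by removing the cohort $\mathcal{G}_{k_i}$. The key observation is that $C^{(i)}$ coincides with $\mathcal{C}_{k_i,1}$ for every $i\geq 1$, since after removing $\mathcal{G}_{k_1},\dots,\mathcal{G}_{k_i}$ the remaining set is exactly the cohorts of $\mathcal{C}_{g,1}$ treated after $\mathcal{G}_{k_i}$ together with $\mathcal{G}_\infty$. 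Writing the average over $C^{(i-1)}$ as a size-weighted convex combination of the averages over $C^{(i)} = \mathcal{C}_{k_i,1}$ and over $\mathcal{G}_{k_i}$ yields the one-step recursion
\[
\mathrm{tr}(C^{(i-1)}) - \mathrm{tr}(C^{(i)}) \;=\; w_{k_i}\bigl[\,\mathrm{tr}(\mathcal{G}_{k_i}) - \mathrm{tr}(\mathcal{C}_{k_i,1})\,\bigr],
\]
where $w_{k_i} = N_{k_i}/|C^{(i-1)}| = N_{k_i}/(N_{k_i} + N_{\mathcal{C}_{k_i,1}})$ matches the formula in the proposition. Telescoping over $i=1,\dots,m$ then gives
\[
\mathrm{tr}(\mathcal{C}_{g,1}) - \mathrm{tr}(\mathcal{C}_{g,s}) \;=\; \sum_{k\in\mathcal{K}_g(t)} w_k\bigl[\,\mathrm{tr}(\mathcal{G}_k) - \mathrm{tr}(\mathcal{C}_{k,1})\,\bigr].
\]
Finally, I would realign reference periods: because $\Delta_{\mathcal{G}_k,s_k(t)}$ and $\Delta_{\mathcal{G}_k,s_k(t_g-1)}$ are both defined using $t_k-1$ as reference, the $t_k-1$ terms cancel upon subtraction, leaving precisely $\mathrm{tr}(\mathcal{G}_k) - \mathrm{tr}(\mathcal{C}_{k,1})$. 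Substituting this identity into the telescoped sum and combining with the first step delivers the claimed decomposition.

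The step I expect to be the main obstacle is setting up the nested sequence of control groups correctly: removing the earliest-treated adjustment cohort first (rather than the latest) is what makes each intermediate $C^{(i)}$ coincide with $\mathcal{C}_{k_i,1}$, so that the weight $w_{k_i}$ emerges naturally from the ratio $N_{k_i}/|C^{(i-1)}|$. A natural alternative would be reverse induction on $g$ along the lines of Proposition~\ref{prop:bias-decompose-impute}, but that route requires carrying a residual pre-treatment trend term contrasting $\mathcal{G}_k$ with $\mathcal{C}_{g,s}$ (rather than with $\mathcal{C}_{k,1}$), which is precisely what forces the correction $-\Delta_{\mathcal{G}_k,s_k(t_g-1)}$; the telescoping approach handles this bookkeeping at the outset. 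Beyond this choice, the remaining manipulations---the convex-combination recursion and the reference-period cancellation---are elementary.
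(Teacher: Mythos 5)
Your proposal is correct, and it reaches the result by a genuinely different route from the paper. The paper's proof shares your first step (cancelling the common $\mathcal{G}_g$-trend so that $\delta_{\mathcal{G}_g,s}-\Delta_{\mathcal{G}_g,s}$ reduces to the difference of two control-group trends), but it then partitions $\mathcal{C}_{g,1}$ into $\mathcal{C}_{g,s}$ plus \emph{all} adjustment cohorts at once, yielding provisional weights $v_{g,k}=N_k/N_{\mathcal{C}_{g,1}}$ and terms of the form $\tau_{t,t_g-1}(\mathcal{G}_k)-\tau_{t,t_g-1}(\mathcal{C}_{g,s})$; because these compare each $\mathcal{G}_k$ to the wrong control group, the paper must recursively re-expand a residual ``Term (B)'' and then run a separate reverse induction to show that the accumulated coefficient $C_{g,k}=v_{g,k}+\sum_{j:t_g<t_j<t_k}v_{g,j}C_{j,k}$ collapses to $w_k$. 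Your nested-chain construction $\mathcal{C}_{g,1}=C^{(0)}\supset\cdots\supset C^{(m)}=\mathcal{C}_{g,s}$, peeling off the earliest-treated adjustment cohort first so that each $C^{(i)}=\mathcal{C}_{k_i,1}$, produces the correct comparison group and the correct weight $w_{k_i}=N_{k_i}/(N_{k_i}+N_{\mathcal{C}_{k_i,1}})$ in a single convex-combination step, and the telescoping sum eliminates the recursion and the coefficient-collection induction entirely. The reference-period realignment at the end (cancelling the $t_k-1$ terms to obtain $\Delta_{\mathcal{G}_k,s_k(t)}-\Delta_{\mathcal{G}_k,s_k(t_g-1)}$) is identical in substance to the paper's Step 2a. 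Your version is shorter and more transparent about where the weights come from; the paper's version makes the recursive ``inheritance'' structure explicit, which it reuses for intuition and for the parallel with the imputation estimator's decomposition.
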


\begin{proof}
\label{proof:bias-decompose-CS}
The proof proceeds in three parts. In Part 1, we decompose the overall bias, $\delta_{\mathcal{G}_g,s}$, into the cohort's own block bias, $\Delta_{\mathcal{G}_g,s}$, and a correction term. In Part 2, we show that this correction term can be expressed as the linear combinations of block biases of its adjustment cohorts $\mathcal{K}_{g}(t)$. In Part 3, we show that the coefficients of each block bias in part 2 are exactly the weights in the proposition.

\paragraph*{Part 1: Decomposing the overall bias into an Intermediate Form}
The first step is to establish a relationship between the overall bias and the block bias. For notational simplicity, let $\tau_{t,t'}(A) := \mathbb{E}[Y_{it}(0) | i \in A] - \mathbb{E}[Y_{it'}(0) | i \in A]$ denote the trend in the potential outcome for a set of units $A$ between periods $t'$ and $t$.

The overall bias for cohort $\mathcal{G}_g$ at time $t=t_g+s-1$ is defined relative to its not-yet-treated group at that time, $\mathcal{C}_{g,s}$:
\[
\delta_{\mathcal{G}_g,s} := \tau_{t, t_g-1}(\mathcal{G}_g) - \tau_{t, t_g-1}(\mathcal{C}_{g,s})
\]
The block bias, $\Delta_{\mathcal{G}_g,s}$, is defined relative to the fixed, initial control group, $\mathcal{C}_{g,1}$:
\[
\Delta_{\mathcal{G}_g,s} := \tau_{t, t_g-1}(\mathcal{G}_g) - \tau_{t, t_g-1}(\mathcal{C}_{g,1})
\]
By subtracting these two definitions, we see that the difference between the overall bias and the block bias is due to the difference between the trends of the initial control group and not-yet-treated control group:
\[
\delta_{\mathcal{G}_g,s} - \Delta_{\mathcal{G}_g,s} = \tau_{t, t_g-1}(\mathcal{C}_{g,1}) - \tau_{t, t_g-1}(\mathcal{C}_{g,s})
\]
The initial control group $\mathcal{C}_{g,1}$ is the disjoint union of the current control group $\mathcal{C}_{g,s}$ and the set of cohorts that have since become treated (the adjustment cohort $\mathcal{K}_g(t)$), $\bigcup_{k \in \mathcal{K}_g(t)} \mathcal{G}_k$. This allows us to express the trend of the initial control group as a weighted average of the trends of these two components.
\newline
Let $v_{g,k} := N_k/N_{\mathcal{C}_{g,1}}$ be the share of cohort $\mathcal{G}_k$ within the initial control group for cohort $\mathcal{G}_g$. We can express $\delta_{\mathcal{G}_g,s} - \Delta_{\mathcal{G}_g,s}$ as:
\vspace{-1em}
\begin{align*}
\tau_{t, t_g-1}(\mathcal{C}_{g,1}) - \tau_{t, t_g-1}(\mathcal{C}_{g,s})
&= \left( \frac{N_{\mathcal{C}_{g,s}}}{N_{\mathcal{C}_{g,1}}} \tau_{t, t_g-1}(\mathcal{C}_{g,s}) + \sum_{k \in \mathcal{K}_g(t)} v_{g,k} \tau_{t, t_g-1}(\mathcal{G}_k) \right) - \tau_{t, t_g-1}(\mathcal{C}_{g,s}) \\
&= \sum_{k \in \mathcal{K}_g(t)} v_{g,k} \tau_{t, t_g-1}(\mathcal{G}_k) + \left(\frac{N_{\mathcal{C}_{g,s}}}{N_{\mathcal{C}_{g,1}}} - 1\right) \tau_{t, t_g-1}(\mathcal{C}_{g,s}) \\
&= \sum_{k \in \mathcal{K}_g(t)} v_{g,k} \tau_{t, t_g-1}(\mathcal{G}_k) - \left( \sum_{k \in \mathcal{K}_g(t)} v_{g,k} \right) \tau_{t, t_g-1}(\mathcal{C}_{g,s}) \\
&= \sum_{k \in \mathcal{K}_g(t)} v_{g,k} \left( \tau_{t, t_g-1}(\mathcal{G}_k) - \tau_{t, t_g-1}(\mathcal{C}_{g,s}) \right)
\end{align*}

\noindent where the third line follows from the fact that the populations form a partition, $N_{\mathcal{C}_{g,1}} = N_{\mathcal{C}_{g,s}} + \sum_{k \in \mathcal{K}_g(t)} N_k$, which implies that $\left(\frac{N_{\mathcal{C}_{g,s}}}{N_{\mathcal{C}_{g,1}}} - 1\right) = -\frac{\sum_{k \in \mathcal{K}_g(t)} N_k}{N_{\mathcal{C}_{g,1}}} = -\sum_{k \in \mathcal{K}_g(t)} v_{g,k}$.
\newline\newline
This gives us the intermediate formula. The overall bias of cohort $\mathcal{G}_g$ in relative period $s$ equals its block bias plus a correction term determined by the trends of the cohorts that have dropped out of the control group:
\[
\delta_{\mathcal{G}_g,s} - \Delta_{\mathcal{G}_g,s} = \tau_{t, t_g-1}(\mathcal{C}_{g,1}) - \tau_{t, t_g-1}(\mathcal{C}_{g,s}) = \sum_{k \in \mathcal{K}_g(t)} v_{g,k} \left( \tau_{t, t_g-1}(\mathcal{G}_k) - \tau_{t, t_g-1}(\mathcal{C}_{g,s}) \right)
\]

\paragraph*{Part 2: Solving for the Final Decomposition}
The goal of this part is to show that the correction term from Part 1, $\sum_{k \in \mathcal{K}_g(t)} v_{g,k} \left( \tau_{t, t_g-1}(\mathcal{G}_k) - \tau_{t, t_g-1}(\mathcal{C}_{g,s}) \right)$, simplifies to the weighted sum of block-bias differences in the proposition. We recursively expand these trend-difference terms.

\subparagraph*{Step 2a: Simplify the Inner Term}
We start with the inner term from the summation, $\tau_{t,t_g-1}(\mathcal G_k) - \tau_{t,t_g-1}(\mathcal C_{g,s})$. By adding and subtracting the trend of cohort $\mathcal{G}_k$’s own initial control group, $\tau_{t,t_g-1}(\mathcal C_{k,1})$, we have:
\[
\tau_{t,t_g-1}(\mathcal G_k) - \tau_{t,t_g-1}(\mathcal C_{g,s})=\underbrace{\bigl(\tau_{t,t_g-1}(\mathcal G_k) - \tau_{t,t_g-1}(\mathcal C_{k,1})\bigr)}_{\text{Term (A)}}\;+\;
\underbrace{\bigl(\tau_{t,t_g-1}(\mathcal C_{k,1}) - \tau_{t,t_g-1}(\mathcal C_{g,s})\bigr)}_{\text{Term (B)}}.
\]
For Term (A), we add and subtract the outcome at cohort $\mathcal{G}_k$'s reference period, $t_k-1$, to realign the reference period. This allows us to express Term (A) as the difference between two block biases:
\begin{align*}
\text{Term (A)}
&=\underbrace{
   \left( \tau_{t,t_k-1}(\mathcal G_k) - \tau_{t,t_k-1}(\mathcal C_{k,1}) \right)}_{\displaystyle =\,\Delta_{\mathcal{G}_k,s_k(t)}}
   -\underbrace{
   \left( \tau_{t_g-1,t_k-1}(\mathcal G_k) - \tau_{t_g-1,t_k-1}(\mathcal C_{k,1}) \right)}_{\displaystyle =\,\Delta_{\mathcal{G}_k,s_k(t_g-1)}}
\end{align*}
We denote this difference by $\tilde{\Delta}_{\mathcal{G}_k, t, t_g-1} := \Delta_{\mathcal{G}_k,s_k(t)} - \Delta_{\mathcal{G}_k,s_k(t_g-1)}$.

\subparagraph*{Step 2b: The Recursive Structure}
Substituting the results from Step 2a into the result from Part 1 gives the expansion:
\[
\tau_{t, t_g-1}(\mathcal{C}_{g,1}) - \tau_{t, t_g-1}(\mathcal{C}_{g,s}) = \sum_{k \in \mathcal K_g(t)} v_{g,k}\tilde{\Delta}_{\mathcal{G}_k,t,t_g-1} + \sum_{k \in \mathcal K_g(t)}v_{g,k} \underbrace{\left( \tau_{t,t_g-1}(\mathcal C_{k,1}) - \tau_{t,t_g-1}(\mathcal C_{g,s}) \right)}_{\text{Term (B)}}
\]
This reveals the recursive structure. The Term (B) takes a similar form as the LHS of the expansion. The difference only lies in the term in the bracket of the first term. Term (B) can be further decomposed as the difference of block biases of cohort $\mathcal{G}_k$'s later-treated cohorts. Specifically, note that for any cohort $k \in \mathcal{K}_g(t)$, its initial control group $\mathcal{C}_{k,1}$ is a disjoint union of the current not-yet-treated group $\mathcal{C}_{g,s}$ and the cohorts treated between $t_k$ and $t$:
\[
\mathcal C_{k,1} = \mathcal C_{g,s} \cup \left(\bigcup_{m\in\mathcal K_k(t)}\mathcal G_m\right)
\]
We can apply the same algebraic logic used in Part 1 and step 2a to simplify the trend difference. This yields the recursive formula:
\[
\text{Term (B)} = \sum_{m\in\mathcal K_k(t)} v_{k,m}\, \bigl[\,\tau_{t,t_g-1}(\mathcal G_m) - \tau_{t,t_g-1}(\mathcal C_{g,s})\bigr]= \sum_{m\in\mathcal K_k(t)} v_{k,m}[\tilde{\Delta}_{\mathcal{G}_m,t,t_g-1}+\underbrace{\tau_{t,t_g-1}(\mathcal{C}_{m,1})-\tau_{t,t_g-1}(\mathcal{C}_{g,s}))}_{\text{Recursive Term (B)}}]
\]
where $v_{k,m} := N_m/N_{\mathcal C_{k,1}}$. This result reveals the recursive nature of the decomposition, as $\tau_{t,t_g-1}(\mathcal{C}_{m,1})-\tau_{t,t_g-1}(\mathcal{C}_{g,s})$ can be decomposed similarly.

Because cohort adoption times are strictly ordered ($t_g < t_k < t_m < \dots$), any recursive path must eventually reach a ``terminal" cohort, say $\mathcal{G}_p$, for which the set $\mathcal{K}_p(t) := \{q : t_p < t_q \le t\}$ is empty. For such a cohort, its initial control group $\mathcal{C}_{p,1}$ is identical to the current not-yet-treated group $\mathcal{C}_{g,s}$. As a result, the recursive ``Term B" for this cohort, $\tau_{t,t_g-1}(\mathcal{C}_{p,1}) - \tau_{t,t_g-1}(\mathcal{C}_{g,s})$, is zero. This termination at the end guarantees that \textit{no $\tau$ terms are left over}. The difference between cohort $\mathcal{G}_g$'s overall bias and block bias, or the term $\tau_{t, t_g-1}(\mathcal{C}_{g,1}) - \tau_{t, t_g-1}(\mathcal{C}_{g,s})$ can therefore be expressed as linear combination of difference in block biases.

\subparagraph*{Step 2c: Derive the Coefficients of Block Biases}
Our task in this step is to find the coefficient of each $\tilde{\Delta}_{\mathcal{G}_k,t,t_g-1}$ term in the full expansion of the correction term, $\delta_{\mathcal{G}_g,s} - \Delta_{\mathcal{G}_g,s} =\tau_{t, t_g-1}(\mathcal{C}_{g,1}) - \tau_{t, t_g-1}(\mathcal{C}_{g,s})$. Let us denote this final, aggregated coefficient by $C_{g,k}$.

We start with the intermediate formula from Part 1 and substitute the initial decomposition from Step 2a:
\begin{align*}
\tau_{t, t_g-1}(\mathcal{C}_{g,1}) - \tau_{t, t_g-1}(\mathcal{C}_{g,s}) &= \sum_{k \in \mathcal{K}_g(t)} v_{g,k} \left( \tau_{t, t_g-1}(\mathcal{G}_k) - \tau_{t, t_g-1}(\mathcal{C}_{g,s}) \right) \\
&= \sum_{k \in \mathcal{K}_g(t)} v_{g,k} \left( \tilde{\Delta}_{\mathcal{G}_k,t,t_g-1} + \text{Term B for cohort } \mathcal{G}_k \right)\\
&=  \sum_{k \in \mathcal{K}_g(t)} v_{g,k}\,\tilde{\Delta}_{\mathcal{G}_k,t,t_g-1} + \sum_{k \in \mathcal{K}_g(t)} v_{g,k} \left( \text{Term B for cohort } \mathcal{G}_k \right)
\end{align*}

As established in Step 2b, ``Term B for cohort $\mathcal{G}_k$" is the entire correction term for a sub-problem starting at cohort $\mathcal{G}_k$ and its full expansion can be similarly written as $\sum_{j \in \mathcal{K}_k(t)} C_{k,j} \tilde{\Delta}_{\mathcal{G}_j,t,t_g-1}$. Substituting this in gives:
\[
\tau_{t, t_g-1}(\mathcal{C}_{g,1}) - \tau_{t, t_g-1}(\mathcal{C}_{g,s}) = \sum_{k \in \mathcal{K}_g(t)} v_{g,k}\,\tilde{\Delta}_{\mathcal{G}_k,t,t_g-1} + \sum_{k \in \mathcal{K}_g(t)} v_{g,k} \left( \sum_{j \in \mathcal{K}_k(t)} C_{k,j} \tilde{\Delta}_{\mathcal{G}_j,t,t_g-1} \right)
\]

By definition of $C_{g,k}$ as the total coefficient of this expansion, the left-hand side can also be written as $\sum_{k \in \mathcal{K}_g(t)} C_{g,k} \tilde{\Delta}_{\mathcal{G}_k,t,t_g-1}$. This gives us the identity:
\[
\sum_{k \in \mathcal{K}_g(t)} C_{g,k} \tilde{\Delta}_{\mathcal{G}_k,t,t_g-1} = \sum_{k \in \mathcal{K}_g(t)} v_{g,k}\,\tilde{\Delta}_{\mathcal{G}_k,t,t_g-1} + \sum_{k \in \mathcal{K}_g(t)} v_{g,k} \left( \sum_{j \in \mathcal{K}_k(t)} C_{k,j} \tilde{\Delta}_{\mathcal{G}_j,t,t_g-1} \right)
\]
Since this identity must hold for any possible values of the block-bias terms $\{\tilde{\Delta}\}$, the total coefficient for each specific $\tilde{\Delta}_{\mathcal{G}_k,t,t_g-1}$ must be equal on both sides. We find the total coefficient on the right-hand side by summing its direct and indirect contributions:
\begin{enumerate}
    \item \textbf{Direct Contribution:} From the first sum, which provides the weight for the path directly from $g$ to $k$: $v_{g,k}$.
    \item \textbf{Indirect Contributions:} From the nested sum, for every intermediate cohort $\mathcal{G}_j$ on a path from $g$ to $k$ (i.e., $t_g < t_j < t_k$), we get a contribution of $v_{g,j} \cdot C_{j,k}$.
\end{enumerate}
Equating the total coefficient on the left ($C_{g,k}$) with the sum of contributions from the right gives the final recursive formula:
\[
C_{g,k} = v_{g,k} + \sum_{j : t_g < t_j < t_k} v_{g,j}\,C_{j,k}
\]
\paragraph*{Part 3: Final Simplification}

We now solve the recursive formula, $C_{g,k} = v_{g,k} + \sum_{j : t_g < t_j < t_k} v_{g,j}\,C_{j,k}$, by reverse induction on the starting cohort index $g$, for a fixed ``destination'' cohort index $k$. We will show that this recursion has a \textit{unique} solution given by $C_{g,k} = w_k$.

\begin{itemize}
    \item \textbf{Base Case.} Let $g = k_{\text{prev}}$, the cohort treated immediately before $k$. In this case, the set of intermediate cohorts $\{j : t_{k_{\text{prev}}} < t_j < t_k\}$ is empty, so the summation term vanishes. The recursion simplifies to a direct calculation:
    \[
        C_{k_{\text{prev}},k} = v_{k_{\text{prev}},k} = \frac{N_k}{N_{\mathcal{C}_{k_{\text{prev}},1}}}
    \]
    By definition, the initial control group for $k_{\text{prev}}$ consists of all units treated at or after time $t_k$. Thus, its size is $N_{\mathcal{C}_{k_{\text{prev}},1}} = \sum_{l: t_l \ge t_k} N_l + N_\infty$. This gives:
    \[
        C_{k_{\text{prev}},k} = \frac{N_k}{\sum_{l: t_l \ge t_k} N_l + N_\infty} = w_k.
    \]
    Therefore, the solution for $C_{k_{\text{prev}},k}$ is uniquely determined to be $w_k$. The base case holds.

    \item \textbf{Inductive Step.} Fix a destination cohort $\mathcal{G}_k$, assume that for all cohort indices $j$ satisfying $t_g < t_j < t_k$, the coefficient $C_{j,k}$ is uniquely determined to be $w_k$. We now show this implies that $C_{g,k}$ is also uniquely determined to be $w_k$.
    
    Substituting the inductive hypothesis into the recursion:

\begin{align*}
C_{g,k} &= v_{g,k} + \sum_{j : t_g < t_j < t_k} v_{g,j}\,w_k \\&= \frac{N_k}{N_{\mathcal{C}_{g,1}}} + w_k \left( \sum_{j : t_g < t_j < t_k} \frac{N_j}{N_{\mathcal{C}_{g,1}}} \right) \\&= \frac{1}{N_{\mathcal{C}_{g,1}}} \left( N_k + w_k \sum_{j : t_g < t_j < t_k} N_j \right) \\&= \frac{1}{N_{\mathcal{C}_{g,1}}} \left( N_k + \frac{N_k}{\sum_{l: t_l \ge t_k} N_l + N_\infty} \sum_{j : t_g < t_j < t_k} N_j \right) \\&= \frac{N_k}{N_{\mathcal{C}_{g,1}}} \left( 1 + \frac{\sum_{j : t_g < t_j < t_k} N_j}{\sum_{l: t_l \ge t_k} N_l + N_\infty} \right)
\end{align*}

Combining the terms in the parenthesis over a common denominator gives:

\[ C_{g,k} = \frac{N_k}{N_{\mathcal{C}_{g,1}}} \left( \frac{(\sum_{l: t_l \ge t_k} N_l + N_\infty) + (\sum_{j : t_g < t_j < t_k} N_j)}{\sum_{l: t_l \ge t_k} N_l + N_\infty} \right).\]
The numerator is the sum of populations of all cohorts treated at or after $t_k$, all cohorts treated between $t_g$ and $t_k$, and the never-treated. This is precisely the set of all cohorts treated after $t_g$ plus the never-treated, which by definition is $N_{\mathcal{C}_{g,1}}$.
\[ C_{g,k} = \frac{N_k}{N_{\mathcal{C}_{g,1}}} \cdot \frac{N_{\mathcal{C}_{g,1}}}{\sum_{l: t_l \ge t_k} N_l + N_\infty} = \frac{N_k}{\sum_{l: t_l \ge t_k} N_l + N_\infty} = w_k.\]

    Because this calculation for $C_{g,k}$ depends only on pre-determined weights ($v_{g,j}$ and $w_k$) which are themselves uniquely defined, the solution for $C_{g,k}$ is also unique.
\end{itemize}

The induction is complete. We have shown that the recursive formula can be solved step-by-step and has a unique solution, $C_{g,k}=w_k$, for every $k \in \mathcal{K}_g(t)$. Substituting this result and the definition of $\tilde{\Delta}_{\mathcal{G}_k,t,t_g-1}$ back into the expanded formula yields the final decomposition:
\[
\delta_{\mathcal{G}_g,s} = \Delta_{\mathcal{G}_g,s} + \sum_{k \in \mathcal{K}_g(t)} w_k \left( \Delta_{\mathcal{G}_k,s_k(t)} - \Delta_{\mathcal{G}_k,s_k(t_g-1)} \right).
\]
This completes the proof.

\end{proof}

\clearpage

\renewcommand\thetable{B\arabic{table}}
\renewcommand\thefigure{B\arabic{figure}}
\renewcommand{\thepage}{B-\arabic{page}}
\setcounter{page}{1}

\section{Appendix B: Bias Decomposition Illustration using the toy example}
\label{sec:appendix_bias_decompose_toy}

I illustrate the bias decomposition with the toy example. The overall bias of the \textit{CS-NYT} estimator for cohort $\mathcal{G}_{5}$ in its third post-treatment period ($s=3$, at calendar time $t=7$) is
\[
\delta^{\text{CS-NYT}}_{\mathcal{G}_{5},3}
= \mathbb{E}\!\left[Y_{i,7}(0)-Y_{i,4}(0)\,\middle|\, i \in \mathcal{G}_5\right]
- \mathbb{E}\!\left[Y_{i,7}(0)-Y_{i,4}(0)\,\middle|\, i \in \mathcal{G}_\infty\right],
\]
which measures the difference in the trend of potential outcomes between cohort $\mathcal{G}_{5}$ and its not-yet-treated control group at $s=3$ (here, $\mathcal{G}_\infty$). This bias can be written as
\begin{align*}
\delta^{\text{CS-NYT}}_{\mathcal{G}_{5},3}
&= \Bigl(\mathbb{E}\!\left[Y_{i,7}(0)-Y_{i,4}(0)\,\middle|\, i \in \mathcal{G}_5\right]
- \mathbb{E}\!\left[Y_{i,7}(0)-Y_{i,4}(0)\,\middle|\, i \in \mathcal{G}_7 \cup \mathcal{G}_\infty\right]\Bigr) \\
&\quad + \Bigl(\mathbb{E}\!\left[Y_{i,7}(0)-Y_{i,4}(0)\,\middle|\, i \in \mathcal{G}_7 \cup \mathcal{G}_\infty\right]
- \mathbb{E}\!\left[Y_{i,7}(0)-Y_{i,4}(0)\,\middle|\, i \in \mathcal{G}_\infty\right]\Bigr).
\end{align*}

The first term, by definition, is the block bias $\Delta^{\text{CS-NYT}}_{\mathcal{G}_5,3}$, which compares cohort $\mathcal{G}_5$ to its initial control group $\mathcal{G}_7 \cup \mathcal{G}_\infty$. The second term can be expressed as
\[
\frac{N_7}{N_7+N_{\infty}}
\Bigl(\mathbb{E}\!\left[Y_{i,7}(0)-Y_{i,4}(0)\,\middle|\, i\in\mathcal{G}_7\right]
- \mathbb{E}\!\left[Y_{i,7}(0)-Y_{i,4}(0)\,\middle|\, i\in\mathcal{G}_\infty\right]\Bigr),
\]
where $N_g$ is the number of units in cohort $\mathcal{G}_g$. For cohort $\mathcal{G}_7$, the initial control group is $\mathcal{G}_{\infty}$ and the reference period is $t=6$. Thus, at $t=7$ the block bias is
\[
\Delta^{\text{CS-NYT}}_{\mathcal{G}_7,1}
= \Bigl(\mathbb{E}\!\left[Y_{i,7}(0)-Y_{i,6}(0)\,\middle|\, i\in\mathcal{G}_7\right]
- \mathbb{E}\!\left[Y_{i,7}(0)-Y_{i,6}(0)\,\middle|\, i\in\mathcal{G}_\infty\right]\Bigr),
\]
and at $t=4$ it is
\[
\Delta^{\text{CS-NYT}}_{\mathcal{G}_7,-2}
= \Bigl(\mathbb{E}\!\left[Y_{i,4}(0)-Y_{i,6}(0)\,\middle|\, i\in\mathcal{G}_7\right]
- \mathbb{E}\!\left[Y_{i,4}(0)-Y_{i,6}(0)\,\middle|\, i\in\mathcal{G}_\infty\right]\Bigr).
\]
The second term thus equals $\frac{N_7}{N_7+N_{\infty}}\bigl(\Delta^{\text{CS-NYT}}_{\mathcal{G}_7,1}-\Delta^{\text{CS-NYT}}_{\mathcal{G}_7,-2}\bigr)$.

Therefore, we can express the bias of the \textit{CS-NYT} estimator for cohort $\mathcal{G}_{5}$ in its third post-treatment period as a linear combination of its own block bias and a block bias increment for the later-treated cohort $\mathcal{G}_{7}$: $\delta^{\text{CS-NYT}}_{\mathcal{G}_{5},3} = \Delta^{\text{CS-NYT}}_{\mathcal{G}_{5},3} + \frac{N_7}{N_7+N_{\infty}}\bigl(\Delta^{\text{CS-NYT}}_{\mathcal{G}_7,1}-\Delta^{\text{CS-NYT}}_{\mathcal{G}_7,-2}\bigr)$. The adjustment term arises because, at $t=7$, cohort $\mathcal{G}_{7}$—part of $\mathcal{G}_{5}$’s initial controls—begins treatment. Its impact is weighted by the relative size of $\mathcal{G}_{7}$ within $\mathcal{G}_{7}\cup\mathcal{G}_{\infty}$, i.e., $\frac{N_7}{N_7+N_{\infty}}$. 

For the first and second post-treatment periods of $\mathcal{G}_{5}$ and for all post-treatment periods of $\mathcal{G}_{7}$, their block biases $\Delta_{\mathcal{G}_g,s}$ are, by definition, the same as the overall biases $\delta_{\mathcal{G}_g,s}$. The overall bias in the fourth post-treatment period of $\mathcal{G}_{5}$, $\delta_{\mathcal{G}_5,4}$, can be decomposed in a similar way to $\delta^{\text{CS-NYT}}_{\mathcal{G}_5,3}$: $\delta^{\text{CS-NYT}}_{\mathcal{G}_5,4}=\Delta^{\text{CS-NYT}}_{\mathcal{G}_5,4}+\frac{N_7}{N_7+N_{\infty}}\bigl(\Delta^{\text{CS-NYT}}_{\mathcal{G}_7,2}-\Delta^{\text{CS-NYT}}_{\mathcal{G}_7,-2}\bigr)$. Therefore, if defining $\delta^{\text{CS-NYT}}_{\mathcal{G}_g,s}=\Delta^{\text{CS-NYT}}_{\mathcal{G}_g,s}$ for all pre-treatment periods, the stacked vector $\vec{\delta}$ can be expressed as $\mathbf{W}\vec{\Delta}$.

Order the cohort–period cells $(g,s)$ by increasing calendar time $t=t_g+s-1$ and, within each $t$, by increasing adoption time $t_g$. For the toy example with cohorts $\mathcal{G}_5$ and $\mathcal{G}_7$ and a panel running from $t=1$ to $t=8$, the $16 \times 1$ stacked vector of overall biases, $\vec{\delta}$, is defined as follows:

$$
\vec{\delta} = \begin{bmatrix}
\delta^{\text{CS-NYT}}_{\mathcal{G}_5, -3} \\
\delta^{\text{CS-NYT}}_{\mathcal{G}_7, -5} \\
\delta^{\text{CS-NYT}}_{\mathcal{G}_5, -2} \\
\delta^{\text{CS-NYT}}_{\mathcal{G}_7, -4} \\
\delta^{\text{CS-NYT}}_{\mathcal{G}_5, -1} \\
\delta^{\text{CS-NYT}}_{\mathcal{G}_7, -3} \\
\delta^{\text{CS-NYT}}_{\mathcal{G}_5, 0} \\
\delta^{\text{CS-NYT}}_{\mathcal{G}_7, -2} \\
\delta^{\text{CS-NYT}}_{\mathcal{G}_5, 1} \\
\delta^{\text{CS-NYT}}_{\mathcal{G}_7, -1} \\
\delta^{\text{CS-NYT}}_{\mathcal{G}_5, 2} \\
\delta^{\text{CS-NYT}}_{\mathcal{G}_7, 0} \\
\delta^{\text{CS-NYT}}_{\mathcal{G}_5, 3} \\
\delta^{\text{CS-NYT}}_{\mathcal{G}_7, 1} \\
\delta^{\text{CS-NYT}}_{\mathcal{G}_5, 4} \\
\delta^{\text{CS-NYT}}_{\mathcal{G}_7, 2}
\end{bmatrix} \begin{matrix}
\leftarrow t=1 \\
\leftarrow t=1 \\
\leftarrow t=2 \\
\leftarrow t=2 \\
\leftarrow t=3 \\
\leftarrow t=3 \\
\leftarrow t=4 \\
\leftarrow t=4 \\
\leftarrow t=5 \\
\leftarrow t=5 \\
\leftarrow t=6 \\
\leftarrow t=6 \\
\leftarrow t=7 \\
\leftarrow t=7 \\
\leftarrow t=8 \\
\leftarrow t=8 \\
\end{matrix}
$$

The vector of block biases, $\vec{\Delta}$, is ordered identically. The relationship between them is given by the linear mapping $\vec{\delta} = \mathbf{W}\vec{\Delta}$, where the $16 \times 16$ mapping matrix $\mathbf{W}$ is:

$$
\mathbf{W} =
\begin{bmatrix}
1 & 0 & 0 & 0 & 0 & 0 & 0 & 0 & 0 & 0 & 0 & 0 & 0 & 0 & 0 & 0 \\
0 & 1 & 0 & 0 & 0 & 0 & 0 & 0 & 0 & 0 & 0 & 0 & 0 & 0 & 0 & 0 \\
0 & 0 & 1 & 0 & 0 & 0 & 0 & 0 & 0 & 0 & 0 & 0 & 0 & 0 & 0 & 0 \\
0 & 0 & 0 & 1 & 0 & 0 & 0 & 0 & 0 & 0 & 0 & 0 & 0 & 0 & 0 & 0 \\
0 & 0 & 0 & 0 & 1 & 0 & 0 & 0 & 0 & 0 & 0 & 0 & 0 & 0 & 0 & 0 \\
0 & 0 & 0 & 0 & 0 & 1 & 0 & 0 & 0 & 0 & 0 & 0 & 0 & 0 & 0 & 0 \\
0 & 0 & 0 & 0 & 0 & 0 & 1 & 0 & 0 & 0 & 0 & 0 & 0 & 0 & 0 & 0 \\
0 & 0 & 0 & 0 & 0 & 0 & 0 & 1 & 0 & 0 & 0 & 0 & 0 & 0 & 0 & 0 \\
0 & 0 & 0 & 0 & 0 & 0 & 0 & 0 & 1 & 0 & 0 & 0 & 0 & 0 & 0 & 0 \\
0 & 0 & 0 & 0 & 0 & 0 & 0 & 0 & 0 & 1 & 0 & 0 & 0 & 0 & 0 & 0 \\
0 & 0 & 0 & 0 & 0 & 0 & 0 & 0 & 0 & 0 & 1 & 0 & 0 & 0 & 0 & 0 \\
0 & 0 & 0 & 0 & 0 & 0 & 0 & 0 & 0 & 0 & 0 & 1 & 0 & 0 & 0 & 0 \\
0 & 0 & 0 & 0 & 0 & 0 & 0 & -w_7 & 0 & 0 & 0 & 0 & 1 & w_7 & 0 & 0 \\
0 & 0 & 0 & 0 & 0 & 0 & 0 & 0 & 0 & 0 & 0 & 0 & 0 & 1 & 0 & 0 \\
0 & 0 & 0 & 0 & 0 & 0 & 0 & -w_7 & 0 & 0 & 0 & 0 & 0 & 0 & 1 & w_7 \\
0 & 0 & 0 & 0 & 0 & 0 & 0 & 0 & 0 & 0 & 0 & 0 & 0 & 0 & 0 & 1
\end{bmatrix}
$$
\noindent where $w_7 = \frac{N_7}{N_7 + N_\infty}$.

Each row of $\mathbf{W}$ constructs the overall bias for the corresponding cohort-period cell in $\vec{\delta}$ as a linear combination of the block biases that define the columns. The matrix is an identity matrix except for rows where the control group for cohort $\mathcal{G}_5$ shrinks (at $t \geq 7$). Specifically:

The 13th row corresponds to $\delta^{\text{CS-NYT}}_{\mathcal{G}_5, 3}$ (at $t=7$). The non-zero entries show that this bias equals its own block bias ($\Delta^{\text{CS-NYT}}_{\mathcal{G}_5, 3}$, from column 13), plus a weighted block bias from cohort $\mathcal{G}_7$ at $t=7$ ($\Delta^{\text{CS-NYT}}_{\mathcal{G}_7, 1}$, from column 14), minus a weighted baseline correction using cohort $\mathcal{G}_7$'s block bias at cohort $\mathcal{G}_5$'s reference period of $t=4$ ($\Delta^{\text{CS-NYT}}_{\mathcal{G}_7, -2}$, from column 8).

The 15th row shows the analogous decomposition for $\delta^{\text{CS-NYT}}_{\mathcal{G}_5, 4}$ (at $t=8$), using block biases from columns 15, 16, and 8.

All other rows are identity rows, indicating that for all other cohort-periods in this example, the overall bias is identical to the block bias.

\clearpage

\renewcommand\thetable{C\arabic{table}}
\renewcommand\thefigure{C\arabic{figure}}
\renewcommand{\thepage}{C-\arabic{page}}
\setcounter{page}{1}

\section{Appendix C: Alternative Methods for Estimating Pre-trend of the Imputation Estimator}
\label{sec:alt_estimation}

This appendix discusses alternative procedures for estimating the pre-treatment coefficients for the imputation estimator, their potential pitfalls, and their relationship with the block biases introduced in the main text.

The current literature on the imputation estimator often focuses on aggregated pre-treatment coefficients and typically does not use the block bias formulation. For instance, \citet{borusyak2024revisiting} implements a dynamic TWFE regression using only untreated observations and takes a single pre-treatment period as the reference period. This approach is designed primarily to test for the existence of a PT violation rather than to provide a benchmark for its potential magnitude in post-treatment periods. \citet{liu2024practical} constructs pre-treatment coefficients from in-sample residuals by first estimating the fixed effects ($\hat{\alpha}_i, \hat{\xi}_t$), then calculating the residual for each pre-treatment observation, and finally aggregating the residuals by relative period. As \citet{li2025benchmarking} points out, \citet{liu2024practical}'s procedure can incur \textit{attenuation bias}: by using the same observations to both fit the model and estimate pre-trends, it mechanically pulls the estimated pre-treatment coefficients toward zero.

To address the issue of attenuation bias, \citet{li2025benchmarking} recommends a leave-one-out (LOO) procedure. The method operates sequentially, treating all observations corresponding to a given relative period as a ``pseudo-treated'' hold-out sample. It then fits the imputation model on all other untreated observations and uses the fitted model to impute counterfactuals for the held-out units. The average difference between the observed values and the imputed counterfactuals serves as the pre-treatment coefficient for that relative period.

This LOO procedure, while mimicking the construction of post-treatment coefficients, does not estimate the specific block biases required by our framework. However, my block bias estimates can be obtained via a related \textit{cohort-wise leave-one-out} (Cohort-LOO) procedure. The Cohort-LOO procedure is a nested process that estimates pre-treatment coefficients for each cohort sequentially. I illustrate its implementation in Figure~\ref{fig:cohort_loo}. For a given target cohort $\mathcal{G}_g$, the analysis is restricted to a subsample containing only that cohort and its initial control group. Within this subsample, the procedure iterates through each of the target cohort's pre-treatment periods, holding out one relative period $s$ at a time and fitting an imputation model on the remaining data to predict the counterfactuals. The average difference between the observed and imputed outcome is the pre-treatment coefficient for cohort $\mathcal{G}_g$ at relative period $s$, $\tilde{\Delta}_{\mathcal{G}_g,s}^{\text{loo}}$.

\begin{figure}[!h]
    \centering
    \includegraphics[width=1\linewidth]{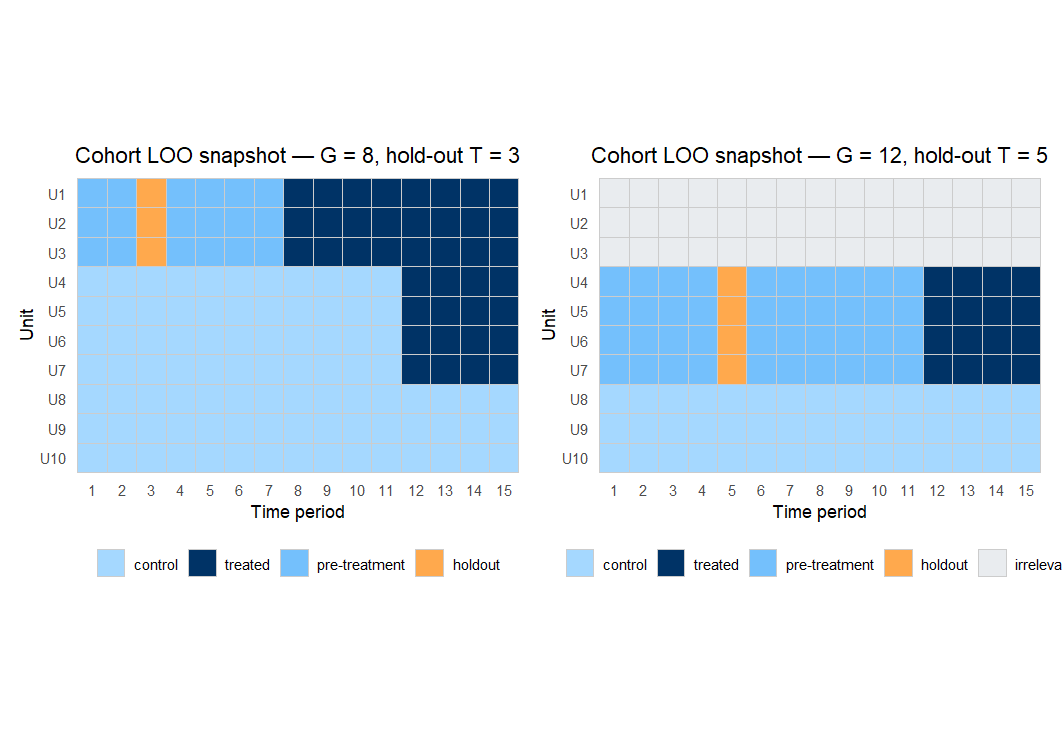}
    \vspace{-2cm}
    \caption{Cohort-LOO}
    \label{fig:cohort_loo}
        \vspace{0.5em} 
    \parbox{\linewidth}{\footnotesize%
        \textit{Notes:} The figure shows two snapshots of the Cohort-LOO procedure. In each panel, the goal is to estimate the pre-treatment bias for a single `holdout` period (orange) for a target cohort. An imputation model is fit on the target cohort's other pre-treatment periods (medium blue) and its initial control group (light blue). The left panel shows the procedure for an early cohort ($\mathcal{G}_8$). The right panel the procedure for a later cohort ($\mathcal{G}_{12}$), in which the earlier-treated cohorts (gray) are excluded from the estimation sample.
    }
\end{figure}

A subtle feature of this procedure is that $\tilde{\Delta}_{\mathcal{G}_g,s}^{\text{loo}}$ is a rescaled version of the desired block bias estimate $\hat{\Delta}^{\text{Imp}}_{\mathcal{G}_g,s}$. The relationship is $\tilde{\Delta}_{\mathcal{G}_g,s}^{\text{loo}} = \frac{T_g}{T_g-1}\hat{\Delta}_{\mathcal{G}_g,s}^{\text{Imp}}$, where $T_g=t_g-1$ is the number of pre-treatment periods for cohort $\mathcal{G}_g$. Therefore, after conducting the Cohort-LOO procedure, one would need to multiply the resulting coefficients by $\frac{T_g-1}{T_g}$ to obtain the correct block bias estimates.

\begin{proposition}[Equivalence of Rescaled Cohort-LOO and Block Bias Estimators]
\label{prop:cohort_loo_rescale}
Let $\hat{\Delta}^{\text{Imp}}_{\mathcal{G}_g,s}$ be the directly calculated pre-treatment block bias and let $\tilde{\Delta}_{\mathcal{G}_g,s}^{\text{loo}}$ be the estimate from the Cohort-LOO procedure for cohort $\mathcal{G}_g$ at relative period $s$. Let $T_g$ be the number of pre-treatment periods for cohort $\mathcal{G}_g$. For any cohort with $T_g > 1$, the two estimators are related by the following expression:
\[
\tilde{\Delta}_{\mathcal{G}_g,s}^{\text{loo}} = \frac{T_g}{T_g-1}\hat{\Delta}_{\mathcal{G}_g,s}^{\text{Imp}}.
\]
\end{proposition}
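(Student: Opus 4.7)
The plan is to apply the algebraic identity in Lemma \ref{lemma:initial_period_equivalence} to the Cohort-LOO subsample to obtain a closed form for the LOO-imputed counterfactual, and then rescale to match $\hat{\Delta}^{\text{Imp}}_{\mathcal{G}_g,s}$. First, I would set up the LOO subsample precisely: restrict to units in $\mathcal{G}_g \cup \mathcal{C}_{g,1}$ and to time periods in $\text{pre}_g=\{1,\ldots,T_g\}$, and treat $\mathcal{G}_g$'s observations at the holdout period $t = t_g+s-1$ as excluded from the fit sample while all other cells remain in. This places the problem into exactly the same algebraic structure used in Lemma \ref{lemma:initial_period_equivalence}: $\mathcal{C}_{g,1}$ sits in the fit sample at every period in $\text{pre}_g$, while $\mathcal{G}_g$ sits in the fit sample at every period in $\text{pre}_g$ except $t$.

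Second, I would mimic the three-step FOC argument used in the proof of Lemma \ref{lemma:initial_period_equivalence}, with $R := \text{pre}_g \setminus \{t\}$ playing the role of the treated cohort's ``pre-period.'' The FOC for $\hat{\alpha}_i$ with $i \in \mathcal{G}_g$ gives $\hat{\alpha}_i = \bar{Y}_{i,R} - \bar{\hat{\xi}}_R$; the FOC at time $t$ reduces to $\sum_{j \in \mathcal{C}_{g,1}} \hat{e}_{jt} = 0$ (only $\mathcal{C}_{g,1}$ is in the fit sample at $t$), yielding $\hat{\xi}_t = \bar{Y}_{\mathcal{C}_{g,1},t} - \bar{\hat{\alpha}}_{\mathcal{C}_{g,1}}$; and summing the time-$k$ FOCs over $k \in R$ while canceling the $\mathcal{G}_g$ residuals via the unit FOCs delivers $\bar{\hat{\xi}}_R = \bar{Y}_{\mathcal{C}_{g,1},R} - \bar{\hat{\alpha}}_{\mathcal{C}_{g,1}}$. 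Combining these three identities produces the block-DiD form
\[
\hat{Y}_{i,t}(0) = \bar{Y}_{i, R} + \bigl[\bar{Y}_{\mathcal{C}_{g,1},t} - \bar{Y}_{\mathcal{C}_{g,1}, R}\bigr],
\]
so that $\tilde{\Delta}_{\mathcal{G}_g,s}^{\text{loo}} = \tfrac{1}{N_g}\sum_{i \in \mathcal{G}_g}[Y_{i,t} - \bar{Y}_{i,R}] - \bigl[\bar{Y}_{\mathcal{C}_{g,1},t} - \bar{Y}_{\mathcal{C}_{g,1}, R}\bigr]$.

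Third, I would finish with a one-line rescaling identity. From $T_g\,\bar{Y}_{i,\text{pre}_g} = (T_g-1)\bar{Y}_{i,R} + Y_{i,t}$, one obtains $Y_{i,t} - \bar{Y}_{i,R} = \tfrac{T_g}{T_g-1}[Y_{i,t} - \bar{Y}_{i,\text{pre}_g}]$, and the identical identity holds for the $\mathcal{C}_{g,1}$ averages by swapping the order of averaging over units and periods. Substituting both yields exactly $\tilde{\Delta}_{\mathcal{G}_g,s}^{\text{loo}} = \tfrac{T_g}{T_g-1}\hat{\Delta}_{\mathcal{G}_g,s}^{\text{Imp}}$. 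The main obstacle is verifying that the FOC argument in Lemma \ref{lemma:initial_period_equivalence} transfers when the ``treated'' pattern for $\mathcal{G}_g$ is a single held-out cell rather than a contiguous post-treatment block; since the FOCs depend only on which $(j,k)$ cells are in the fit sample, the substitution $R = \text{pre}_g \setminus \{t\}$ lets the argument go through verbatim, and the remaining $T_g/(T_g-1)$ factor simply reflects that $\bar{Y}_{i,\text{pre}_g}$ averages over $T_g$ periods whereas $\bar{Y}_{i,R}$ averages over only $T_g-1$.
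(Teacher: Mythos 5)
Your proposal is correct and follows essentially the same route as the paper's proof: both derive the LOO-imputed counterfactual in closed block-DiD form via the OLS first-order conditions on the held-out subsample (unit FOCs for $\mathcal{G}_g$ over $R=\text{pre}_g\setminus\{t^*\}$, time FOCs at $t^*$ and summed over $R$ with the $\mathcal{G}_g$ residuals cancelling), and then apply the identity $\bar{Y}_{A,t^*}-\bar{Y}_{A,R}=\tfrac{T_g}{T_g-1}\bigl(\bar{Y}_{A,t^*}-\bar{Y}_{A,\text{pre}_g}\bigr)$ to both groups. The only cosmetic difference is that you restrict the fit sample to periods in $\text{pre}_g$ while the paper's formal definition also retains later untreated control observations; this is immaterial since the control-group intercepts cancel in the subtraction.
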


\begin{proof}
Let $\mathcal{G}_g$ be the target cohort and let $t^* \in \text{pre}_g$ be the held-out pre-treatment period, where $\text{pre}_g$ is the set of $T_g$ pre-treatment periods for this cohort.

The Cohort-LOO procedure for $\mathcal{G}_g$ holding out $t^*$ first restricts the data to $\mathcal{G}_g \cup \mathcal{C}_{g,1}$. The fixed effects $(\hat{\alpha}_i^{\text{loo}}, \hat{\xi}_t^{\text{loo}})$ are then estimated on the set of untreated observations within this subsample, \textit{excluding} the held-out observations. Let this estimation sample be $\mathcal{O}_{g,t^*}^{\text{loo}}$:
\[
\mathcal{O}_{g,t^*}^{\text{loo}} = \{(i, t) \mid i \in \mathcal{G}_g \cup \mathcal{C}_{g,1}, D_{it}=0, (i,t) \notin \{(j, t^*) \mid j \in \mathcal{G}_g\}\}
\]
The `loo' fixed effects are the OLS estimates from the regression on the sample $\mathcal{O}_{g,t^*}^{\text{loo}}$. By Lemma \ref{lemma_residuals} (1a), for any unit $i \in \mathcal{G}_g$, the sum of its OLS residuals over the periods it contributes to the estimation sample must be zero. For such a unit, these periods are precisely its pre-treatment periods excluding the held-out period, i.e., $t \in \text{pre}_g \setminus \{t^*\}$. Therefore:
\[
\sum_{t \in \text{pre}_g \setminus \{t^*\}} (Y_{it} - \hat{\alpha}_i^{\text{loo}} - \hat{\xi}_t^{\text{loo}}) = 0
\]
Solving for $\hat{\alpha}_i^{\text{loo}}$ and averaging over all $i \in \mathcal{G}_g$ gives:
\[
\bar{\hat{\alpha}}_g^{\text{loo}} = \bar{Y}_{\mathcal{G}_g,\text{pre}_g \setminus \{t^*\}} - \bar{\hat{\xi}}_{\text{pre}_g \setminus \{t^*\}}^{\text{loo}}
\]
where $\bar{Y}_{\mathcal{G}_g,\text{pre}_g \setminus \{t^*\}}$ is the average outcome for cohort $\mathcal{G}_g$ over the $T_g-1$ non-held-out pre-treatment periods.

The relationship between the time effects and the control group's outcomes is derived using both properties of Lemma \ref{lemma_residuals}. First, for any time period $t \in \text{pre}_g \setminus \{t^*\}$, all units in $\mathcal{G}_g$ and $\mathcal{C}_{g,1}$ are in the estimation sample. By Lemma \ref{lemma_residuals} (2), the sum of residuals over all these units at time $t$ is zero:
\[
\sum_{i \in \mathcal{G}_g} \hat{e}_{it}^{\text{loo}} + \sum_{j \in \mathcal{C}_{g,1}} \hat{e}_{jt}^{\text{loo}} = 0 \quad \text{for all } t \in \text{pre}_g \setminus \{t^*\}.
\]
Summing this equation over all non-held-out pre-treatment periods gives:
\[
\sum_{t \in \text{pre}_g \setminus \{t^*\}} \left( \sum_{i \in \mathcal{G}_g} \hat{e}_{it}^{\text{loo}} \right) + \sum_{t \in \text{pre}_g \setminus \{t^*\}} \left( \sum_{j \in \mathcal{C}_{g,1}} \hat{e}_{jt}^{\text{loo}} \right) = 0.
\]
By swapping the order of summation, the first term is $\sum_{i \in \mathcal{G}_g} \left( \sum_{t \in \text{pre}_g \setminus \{t^*\}} \hat{e}_{it}^{\text{loo}} \right)$. By Lemma \ref{lemma_residuals} (1a) applied to the `loo' sample, the inner sum is zero for every unit $i \in \mathcal{G}_g$. Therefore, the entire first term is zero. This implies the second term must also be zero:
\[
\sum_{j \in \mathcal{C}_{g,1}} \sum_{t \in \text{pre}_g \setminus \{t^*\}} \hat{e}_{jt}^{\text{loo}} = 0.
\]
Next, consider the hold-out period, $t=t^*$. The units from $\mathcal{G}_g$ are not in the estimation sample, so the only units included at this time are from the initial control group, $\mathcal{C}_{g,1}$. Lemma \ref{lemma_residuals} (2) thus implies:
\[
\sum_{j \in \mathcal{C}_{g,1}} \hat{e}_{jt^*}^{\text{loo}} = 0.
\]

By expanding the definition of the residual, these two zero-sum conditions give us two key equalities:
\begin{align*}
\bar{Y}_{\mathcal{C}_{g,1}, \text{pre}_g \setminus \{t^*\}} &= \bar{\hat{\alpha}}_{\mathcal{C}_{g,1}}^{\text{loo}} + \bar{\hat{\xi}}_{\text{pre}_g \setminus \{t^*\}}^{\text{loo}} \\
\bar{Y}_{\mathcal{C}_{g,1}, t^*} &= \bar{\hat{\alpha}}_{\mathcal{C}_{g,1}}^{\text{loo}} + \hat{\xi}_{t^*}^{\text{loo}}
\end{align*}
Subtracting the first equation from the second eliminates the average unit effect term, $\bar{\hat{\alpha}}_{\mathcal{C}_{g,1}}^{\text{loo}}$, and yields the desired expression for the change in time effects:
\[
\hat{\xi}_{t^*}^{\text{loo}} - \bar{\hat{\xi}}_{\text{pre}_g \setminus \{t^*\}}^{\text{loo}} = \bar{Y}_{\mathcal{C}_{g,1},t^*} - \bar{Y}_{\mathcal{C}_{g,1},\text{pre}_g \setminus \{t^*\}}.
\]

The Cohort-LOO estimate is defined as $\tilde{\Delta}_{\mathcal{G}_g,s}^{\text{loo}} = \bar{Y}_{\mathcal{G}_g,t^*} - (\bar{\hat{\alpha}}_g^{\text{loo}} + \hat{\xi}_{t^*}^{\text{loo}})$. Substituting the expressions for the fixed effects:
\begin{align*}
\tilde{\Delta}_{\mathcal{G}_g,s}^{\text{loo}} &= \bar{Y}_{\mathcal{G}_g,t^*} - \left( (\bar{Y}_{\mathcal{G}_g,\text{pre}_g \setminus \{t^*\}} - \bar{\hat{\xi}}_{\text{pre}_g \setminus \{t^*\}}^{\text{loo}}) + \hat{\xi}_{t^*}^{\text{loo}} \right) \\
&= (\bar{Y}_{\mathcal{G}_g,t^*} - \bar{Y}_{\mathcal{G}_g,\text{pre}_g \setminus \{t^*\}}) - (\hat{\xi}_{t^*}^{\text{loo}} - \bar{\hat{\xi}}_{\text{pre}_g \setminus \{t^*\}}^{\text{loo}}) \\
&= (\bar{Y}_{\mathcal{G}_g,t^*} - \bar{Y}_{\mathcal{G}_g,\text{pre}_g \setminus \{t^*\}}) - (\bar{Y}_{\mathcal{C}_{g,1},t^*} - \bar{Y}_{\mathcal{C}_{g,1},\text{pre}_g \setminus \{t^*\}})
\end{align*}
The average over all $T_g$ periods for any group $A$ relates to the average over $T_g-1$ periods by the identity $T_g \cdot \bar{Y}_{A,\text{pre}_g} = (T_g-1) \cdot \bar{Y}_{A,\text{pre}_g \setminus \{t^*\}} + \bar{Y}_{A,t^*}$. Rearranging gives:
\[
\bar{Y}_{A,\text{pre}_g \setminus \{t^*\}} = \frac{T_g}{T_g-1}\bar{Y}_{A,\text{pre}_g} - \frac{1}{T_g-1}\bar{Y}_{A,t^*}
\]
The change from this baseline for any group A is therefore:
\begin{align*}
\bar{Y}_{A,t^*} - \bar{Y}_{A,\text{pre}_g \setminus \{t^*\}} &= \bar{Y}_{A,t^*} - \left(\frac{T_g}{T_g-1}\bar{Y}_{A,\text{pre}_g} - \frac{1}{T_g-1}\bar{Y}_{A,t^*}\right) \\
&= \frac{T_g}{T_g-1} (\bar{Y}_{A,t^*} - \bar{Y}_{A,\text{pre}_g})
\end{align*}
Applying this result to both terms in our expression for $\tilde{\Delta}_{\mathcal{G}_g,s}^{\text{loo}}$ yields the final result:
\begin{align*}
\tilde{\Delta}_{\mathcal{G}_g,s}^{\text{loo}} &= \frac{T_g}{T_g-1} (\bar{Y}_{\mathcal{G}_g,t^*} - \bar{Y}_{\mathcal{G}_g,\text{pre}_g}) - \frac{T_g}{T_g-1} (\bar{Y}_{\mathcal{C}_{g,1},t^*} - \bar{Y}_{\mathcal{C}_{g,1},\text{pre}_g}) \\
&= \frac{T_g}{T_g-1} \left[ (\bar{Y}_{\mathcal{G}_g,t^*} - \bar{Y}_{\mathcal{G}_g,\text{pre}_g}) - (\bar{Y}_{\mathcal{C}_{g,1},t^*} - \bar{Y}_{\mathcal{C}_{g,1},\text{pre}_g}) \right] \\
&= \frac{T_g}{T_g-1} \hat{\Delta}_{\mathcal{G}_g,s}^{\text{Imp}}
\end{align*}
This completes the proof.
\end{proof}
\clearpage

\renewcommand\thetable{D\arabic{table}}
\renewcommand\thefigure{D\arabic{figure}}
\renewcommand{\thepage}{D-\arabic{page}}
\setcounter{page}{1}
\setcounter{figure}{0}
\section{Appendix D: Additional Figures}

\vspace{-2em}

\begin{figure}[!ht]
    \caption{Event-Study Plots and Confidence Set Comparison under the RM Restriction}
    \label{fig:sim_CSnotyet_RM}
    \centering
    \vspace{-0.5em}
    \begin{minipage}{\linewidth}{
        \begin{center}
          \subfigure[Event-study Graph of Cohort $\mathcal{G}_{8}$]{%
            \includegraphics[width=0.45\textwidth]{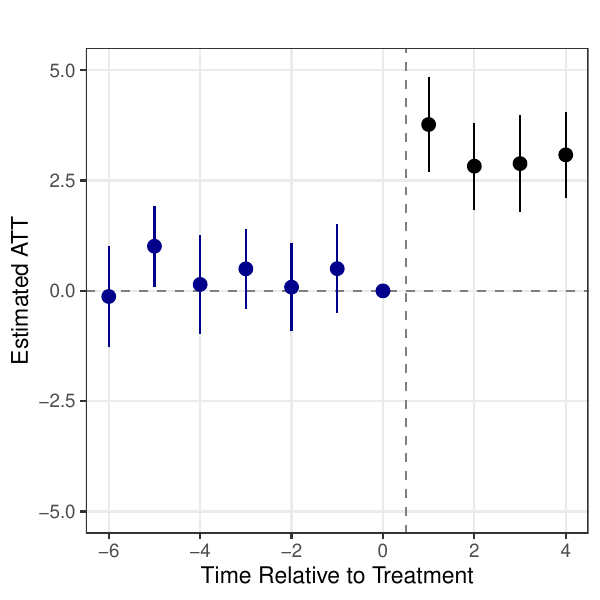}%
          }\hspace{1em}%
          \subfigure[Event-study Graph of Cohort $\mathcal{G}_{10}$]{%
            \includegraphics[width=0.45\textwidth]{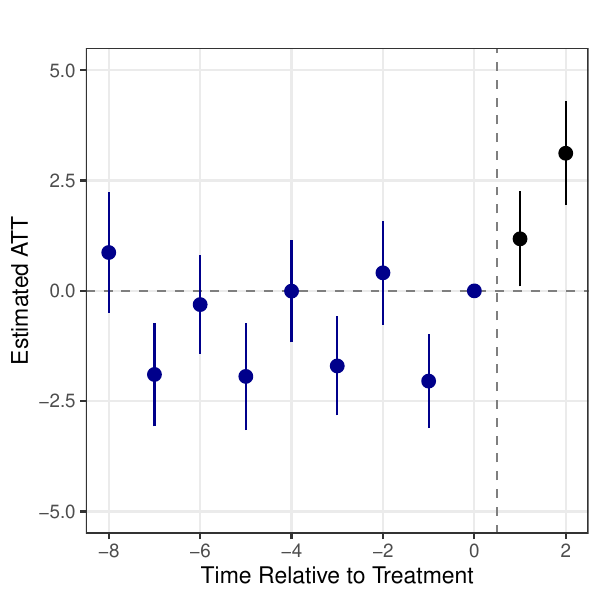}%
          }

          \subfigure[Event-study Graph of Aggregated Coefficients]{%
            \includegraphics[width=0.45\textwidth]{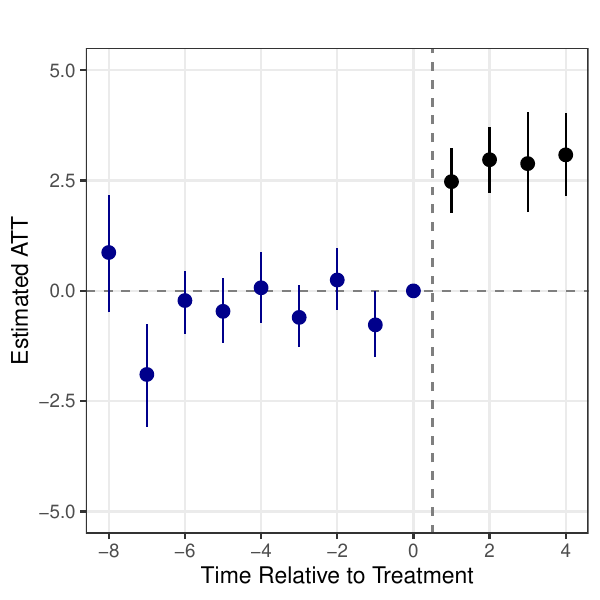}%
          }\hspace{1em}%
          \subfigure[Comparison of Confidence Sets]{%
            \includegraphics[width=0.45\textwidth]{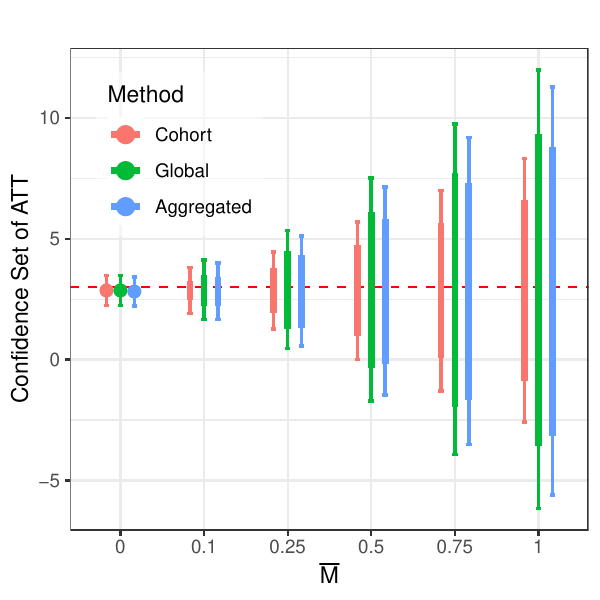}%
          }
        \end{center}
    }
    \end{minipage}
    {\footnotesize\textbf{Note:} Panels (a) and (b) present event-study plots for the early-treated cohort ($\mathcal{G}_{8}$) and the late-treated cohort ($\mathcal{G}_{10}$), while Panel (c) shows the plot for the aggregated coefficients. All coefficients are derived from the \textit{CS-NYT} estimator. Panel (d) compares plug-in identified sets (thick) and confidence sets (thin) for the ATT under the RM restriction, plotting them against the sensitivity parameter $\overline{M}$. The red horizontal dashed line indicates the true ATT value. The vertical lines show confidence sets derived from three approaches: two using cohort-anchored framework (the cohort-specific benchmark in red and the global benchmark in green), and one using the global framework (blue).
    }
\end{figure}

\begin{figure}[!ht]
    \caption{Event-Study Plots and Confidence Set Comparison under the SD Restriction}
    \label{fig:sim_CSnotyet_SD}
    \centering
    \vspace{-0.5em}
    \begin{minipage}{\linewidth}{
        \begin{center}
          \subfigure[Event-study Graph of Cohort $\mathcal{G}_{8}$]{%
            \includegraphics[width=0.45\textwidth]{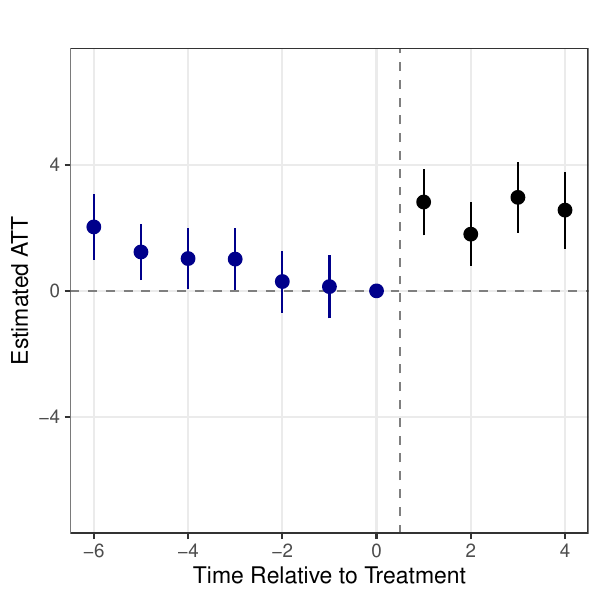}%
          }\hspace{1em}%
          \subfigure[Event-study Graph of Cohort $\mathcal{G}_{10}$]{%
            \includegraphics[width=0.45\textwidth]{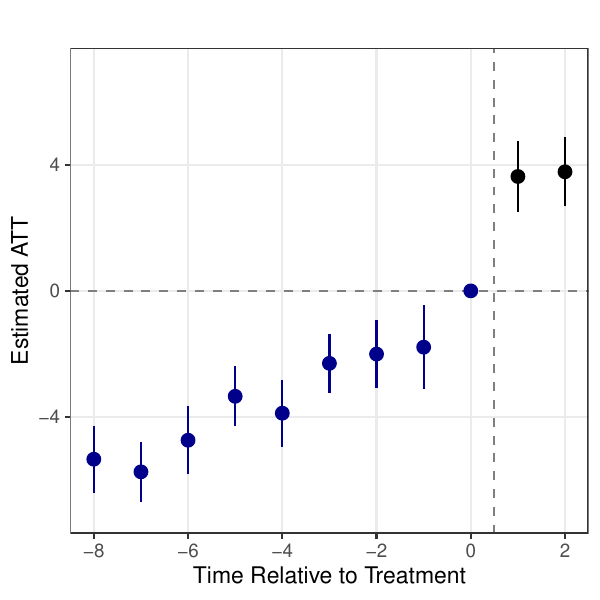}%
          }\\[0.9em]
          \subfigure[Event-study Graph of Aggregated Coefficients]{%
            \includegraphics[width=0.45\textwidth]{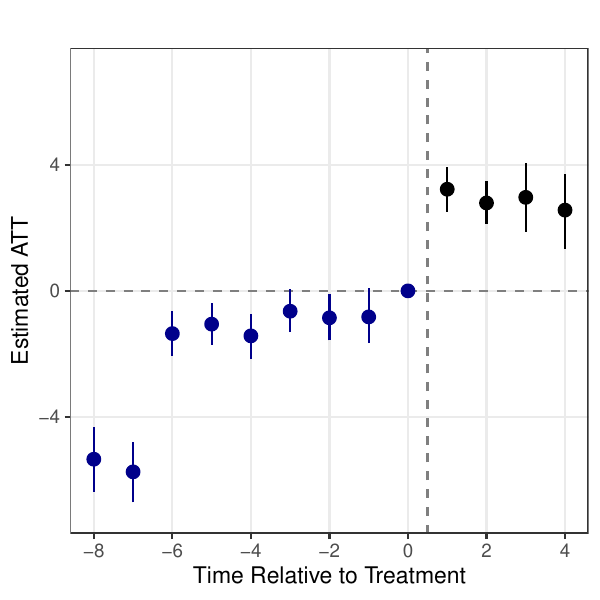}%
          }\hspace{1em}%
          \subfigure[Comparison of Confidence Sets]{%
            \includegraphics[width=0.45\textwidth]{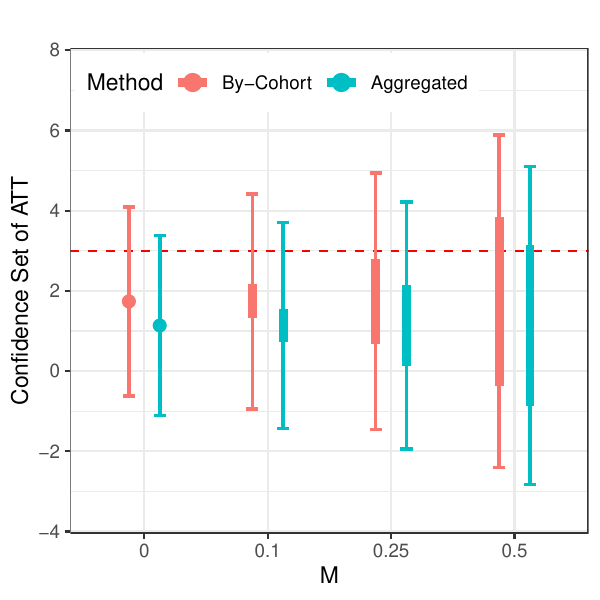}%
          }
        \end{center}
    }
    \end{minipage}

{\footnotesize\textbf{Note:} Panels (a) and (b) present event-study plots for the early-treated cohort ($\mathcal{G}_{8}$) and the late-treated cohort ($\mathcal{G}_{10}$), while Panel (c) shows the plot for the aggregated coefficients. All coefficients are derived from the \textit{CS-NYT} estimator. Panel (d) compares plug-in identified sets (thick) and confidence sets (thin) for the ATT under the SD restriction, plotting them against the sensitivity parameter $M$. The red horizontal dashed line indicates the true ATT value. The vertical lines show confidence sets derived from two approaches: one using cohort-anchored framework (red), and one using the aggregated framework (cyan).}
\end{figure}

\begin{figure}[!ht]
    \caption{Comparison of By-Period Confidence Sets}
    \label{fig:sim_CSnotyet_SD_dyn}
    \centering
    \vspace{-0.5em}
    \begin{minipage}{\linewidth}{
        \begin{center}
          \subfigure[Aggregated Framework]{%
            \includegraphics[width=0.45\textwidth]{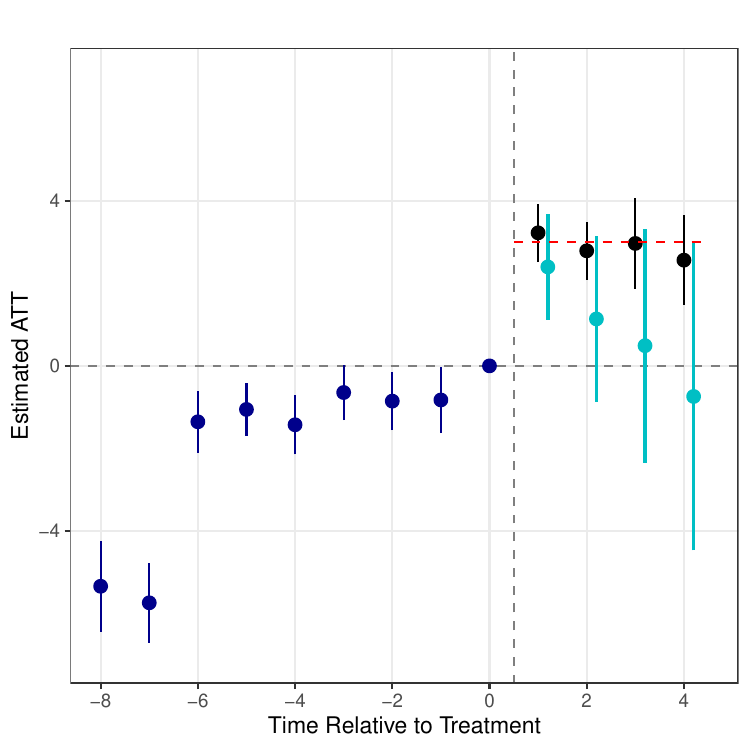}%
          }\hspace{1em}%
          \subfigure[Cohort-Anchored Framework]{%
            \includegraphics[width=0.45\textwidth]{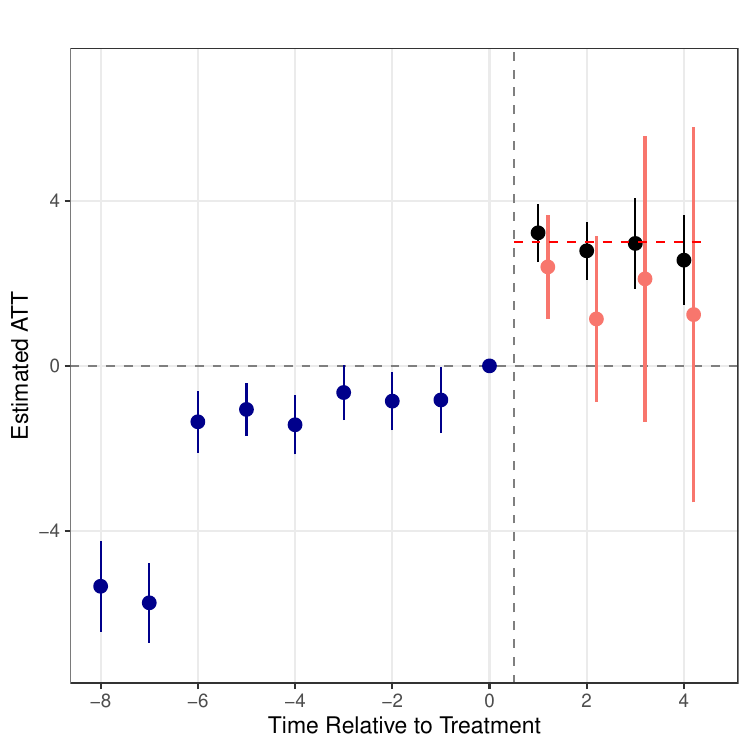}%
          }
        \end{center}
    }
    \end{minipage}
{\footnotesize\textbf{Note:} This figure displays event-study plots overlaid with by-period confidence sets for the post-treatment coefficients. All coefficients are derived from the \textit{CS-NYT} estimator. The confidence sets and corrected point estimates are calculated under the SD restriction with a sensitivity parameter of $M=0$. The two panels contrast the calculation method: Panel (a) uses the aggregated framework, while Panel (b) uses the cohort-anchored framework. The red horizontal dashed line indicates the true ATT value.
}
\end{figure}

\begin{figure}[!ht]
\caption{Event-Study Analysis of Minimum Wage Data from \citet{callaway2021-did}}
\label{fig:min_wage_CSnotyet_compare}
\centering
\vspace{-0.5em}
\begin{minipage}{\linewidth}
    \begin{center}
      \subfigure[Cohort $\mathcal{G}_{4}$]{%
        \includegraphics[width=0.24\textwidth]{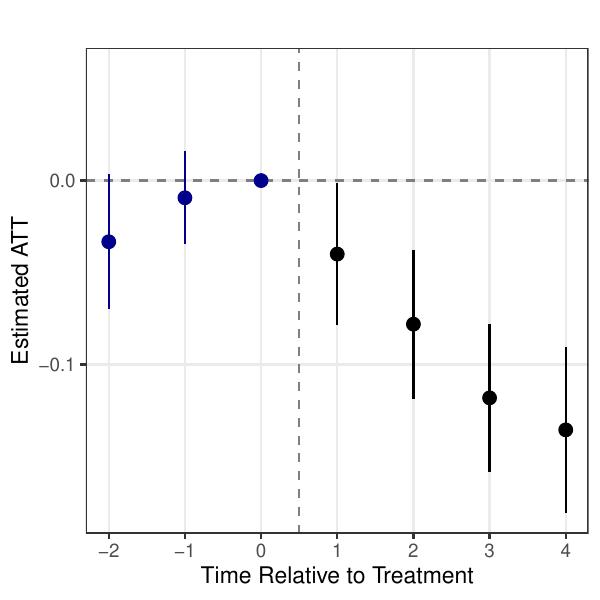}%
      }\
      \subfigure[Cohort $\mathcal{G}_{6}$]{%
        \includegraphics[width=0.24\textwidth]{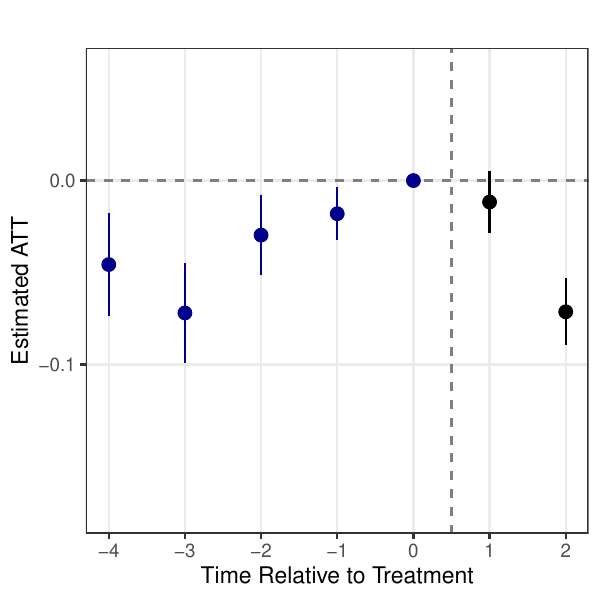}%
      }
      \subfigure[Cohort $\mathcal{G}_{7}$]{%
        \includegraphics[width=0.24\textwidth]{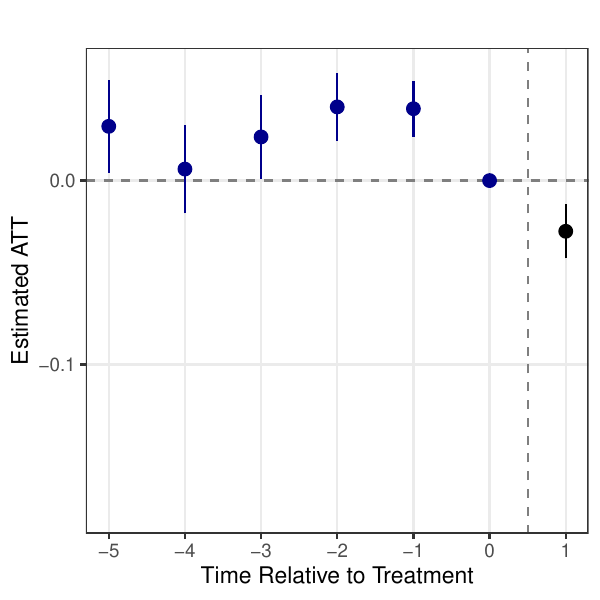}%
      }
      \subfigure[Aggregated Coefficients]{%
        \includegraphics[width=0.24\textwidth]{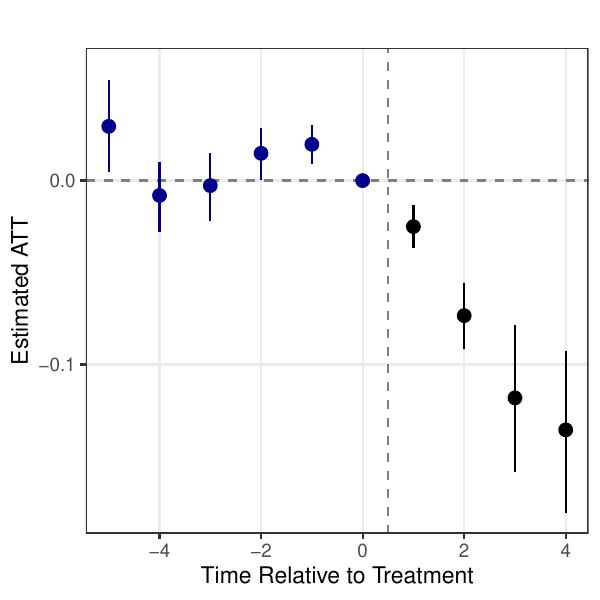}%
      }

      \subfigure[Comparison of Confidence Sets (RM)]{%
        \includegraphics[width=0.24\textwidth]{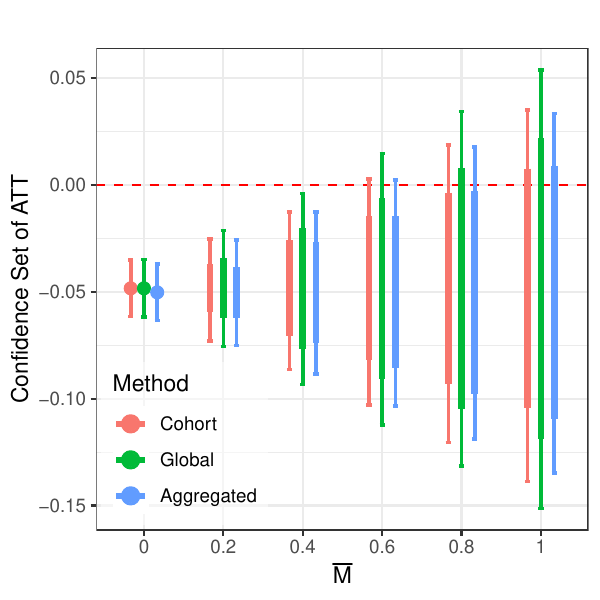}%
      }
      \subfigure[Comparison of Confidence Sets (SD)]{%
        \includegraphics[width=0.24\textwidth]{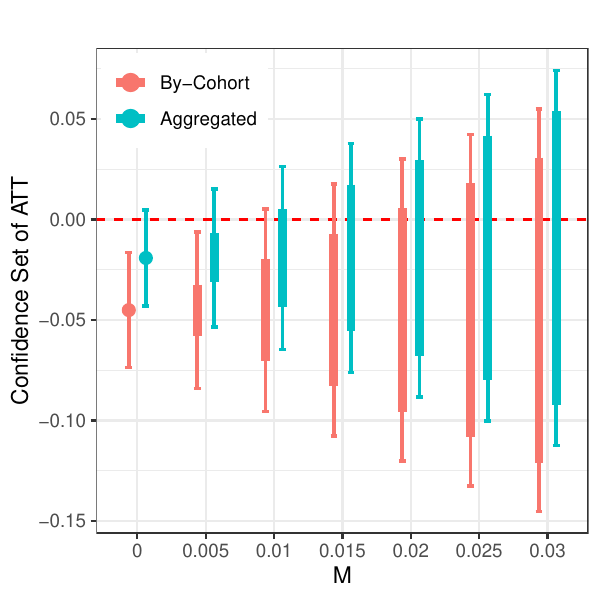}%
      }
      \subfigure[Confidence Sets by Relative Periods (Aggregated)]{%
        \includegraphics[width=0.24\textwidth]{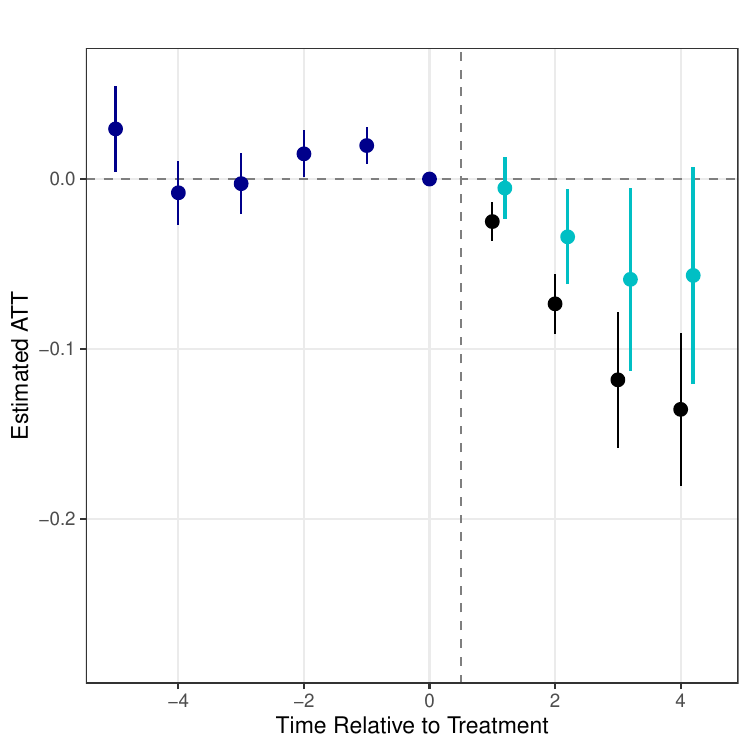}%
      }
      \subfigure[Confidence Sets by Relative Periods (Cohort-Anchored)]{%
        \includegraphics[width=0.24\textwidth]{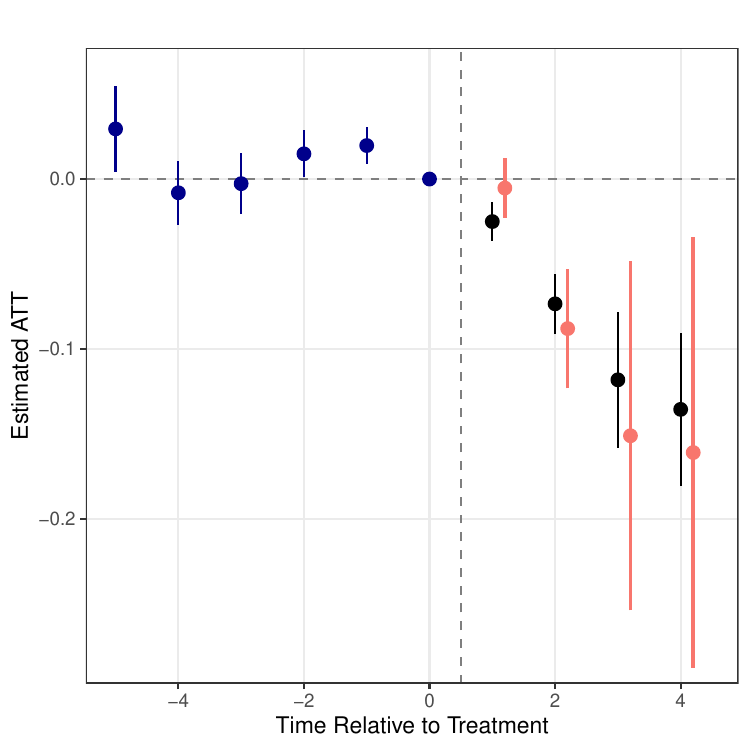}%
      }
    \end{center}
\end{minipage}
{\footnotesize
\textbf{Note:} 
The top row (Panels a-d) shows event-study plots for three treatment cohorts ($\mathcal{G}_{4}$, $\mathcal{G}_{6}$, and $\mathcal{G}_{7}$) and their aggregated coefficients; all coefficients are from the \textit{CS-NYT} estimator. Panels (e) and (f) compare plug-in identified sets (thick) and confidence sets (thin) for the ATT under the RM and SD restrictions, respectively. Panel (e) plots sets against the sensitivity parameter $\overline{M}$, comparing the cohort-anchored framework (cohort-specific benchmarks in red and global benchmark in green) to the aggregated framework (blue). Panel (f) plots sets against $M$, comparing the cohort-anchored framework (red) and the aggregated framework (cyan). Panels (g) and (h) display by-period confidence sets under the SD restriction with $M=0$. Both overlay corrected estimates and confidence sets on the event-study plot. Panel (g) constructs the sets using the aggregated framework, while Panel (h) uses the cohort-anchored framework.
}
\end{figure}

\begin{table}[!h]
\centering
\caption{Notations}
\label{tab:notation}
\footnotesize
\begin{threeparttable}
\begin{tabularx}{\textwidth}{>{\raggedright\arraybackslash}p{0.27\textwidth} >{\raggedright\arraybackslash}X}
\toprule
\multicolumn{2}{l}{\textbf{Indices, sets, and sizes}}\\
$i$, $t$, $s$, $g$ & Unit, calendar time, relative period ($s{=}1$ first post-treatment; $s{=}0$ last pre-treatment), cohort index $g\in\{1,\dots,G\}$.\\
$\mathcal{G}_g$, $\mathcal{G}_\infty$ & Cohort treated at $t_g$; never-treated cohort.\\
$t_g$ & Adoption time of cohort $\mathcal{G}_g$.\\
$N$, $N_g$, $N_\infty$ & Total units; size of $\mathcal{G}_g$; size of the never-treated cohort.\\
$\text{pre}_g$ & Cohort $\mathcal{G}_g$'s pre-treatment periods $\{1,\dots,t_g-1\}$.\\
$\mathcal{C}_{g,s}$ & Not-yet-treated controls of $\mathcal{G}_g$ at its post-treatment period $s \geq 1$.\\
$\mathcal{C}_{g,1}$ & Cohort $\mathcal{G}_g$’s initial control group (not-yet-treated controls at $s{=}1$).\\
$\mathcal{K}_g(t)$ & Adjustment cohorts of cohort $\mathcal{G}_g$ at calendar time $t$ with $t_g < t_k \le t$.\\
$s_k(t)$ & Relative period of cohort $\mathcal{G}_k$ at calendar time $t$: $s_k(t)=t-(t_k-1)$.\\
\addlinespace
\multicolumn{2}{l}{\textbf{Outcomes and treatment}}\\
$Y_{it}(1),Y_{it}(0)$ & Potential outcomes under treatment/control.\\
$Y_{it}$, $D_{it}$ & Observed outcome; treatment indicator ($D_{it}=\mathbf{1}\{t\ge t_g\}$ for $i\in\mathcal{G}_g$).\\
\addlinespace
\multicolumn{2}{l}{\textbf{Estimands and estimators}}\\
$\tau_{\mathcal{G}_g,s}$ & ATT for cohort $\mathcal{G}_g$ at relative period $s$, i.e., cohort-period cell $(g,s)$\\
$\hat{\tau}^{\mathrm{Imp}}_{\mathcal{G}_g,s}$ & Estimated ATT for $(g,s)$ (Imputation Estimator).\\
$\hat{\tau}^{\mathrm{CS\text{-}NYT}}_{\mathcal{G}_g,s}$ & Estimated ATT for $(g,s)$ (Callaway–Sant’Anna Estimator with not-yet-treated controls).\\
$\hat{Y}_{it}(0)$, $\bar Y_{i,\text{pre}_g}$ & Imputed counterfactual; unit $i$’s pre-$t_g$ average.\\
$\hat{\beta}_{\mathcal{G}_g}^{s}$ & Cohort–period coefficient: $\hat{\beta}_{\mathcal{G}_g}^{s}=\hat{\Delta}_{\mathcal{G}_g,s}$ for $s\le0$; $\hat{\beta}_{\mathcal{G}_g}^{s}=\hat{\tau}_{\mathcal{G}_g,s}$ for $s \geq 1$.\\
$\hat{\beta}_{\mathrm{agg}}^{s}$ & Aggregated event-study coefficient (size-weighted across cohorts).\\
\addlinespace
\multicolumn{2}{l}{\textbf{Bias objects}}\\
$\delta_{\mathcal{G}_g,s}^{\mathrm{Imp}}$, $\delta_{\mathcal{G}_g,s}^{\mathrm{CS\text{-}NYT}}$ & Overall post-treatment bias for $(g,s)$.\\
    $\Delta_{\mathcal{G}_g,s}^{\mathrm{Imp}}$, $\Delta_{\mathcal{G}_g,s}^{\mathrm{CS\text{-}NYT}}$ & Block bias for $(g,s)$ (vs.\ initial controls $\mathcal{C}_{g,1}$).\\
$\vec{\delta}$, $\vec{\Delta}$ & Stacked overall-bias and block-bias vectors across all $(g,s)$.\\
$\mathbf{W}$ & Invertible map with $\vec{\delta}=\mathbf{W}\vec{\Delta}$ (block-diagonal for Imputation; block lower-triangular for \textit{CS-NYT}).\\
$w_k$ & Adjustment weight $w_k=\dfrac{N_k}{\sum_{j=k}^{G} N_j + N_\infty}$.\\
\addlinespace
\multicolumn{2}{l}{\textbf{Restriction sets and sensitivity}}\\
$\Lambda_\Delta^{\mathrm{RM,Global}}(\overline M)$ & Relative-Magnitudes (RM) with a global pre-trend benchmark and sensitivity parameter $\overline M$.\\
$\Lambda_\Delta^{\mathrm{RM,Cohort}}(\overline M)$ & RM with cohort-specific benchmarks and sensitivity parameter $\overline M$.\\
$\Lambda_\Delta^{\mathrm{SD}}(M)$ & Second-Differences (SD) with bound $M$.\\
$\Lambda_\delta$ & Image of $\Lambda_\Delta$ under $\mathbf W$.\\
$\overline M$, $M$ & Sensitivity parameters for RM and SD.\\
\addlinespace
\multicolumn{2}{l}{\textbf{Robust Inference}}\\
$\ell$, $\theta$ & Weights on different $(g,s)$ cells and the target parameter of interest: $\theta=\ell'\vec{\tau}_{\mathrm{post}}$.\\
$\vec{\beta}$, $\vec{\hat\beta}$ & Population and estimated coefficient vectors (pre $\cup$ post).\\
$\Sigma_N$, $\hat{\Sigma}_N$ & VCOV of $\vec{\hat\beta}$ and its estimation (stratified cluster bootstrap).\\
$\mathcal{S}(\vec{\beta},\Lambda_\Delta)$ & Identified set for $\theta$ given the restriction and the population coefficients.\\
$\mathcal{S}(\vec{\hat{\beta}},\Lambda_\Delta)$ & Plug-in identified set for $\theta$ given the restriction and plugged-in estimated coefficients.\\
$\mathcal{C}_N\left(\vec{\hat{\beta}}, \hat{\Sigma}_N\right)$ & Confidence set for $\theta$ given the restriction, estimated coefficients and VCOV matrix.\\
\bottomrule
\end{tabularx}
\end{threeparttable}
\end{table}

\end{document}